\newcounter{rulecounter}
\newcommand{\resetrule}{ \setcounter{rulecounter}{0}}
\newenvironment{myitemize}%
  {\begin{itemize}%
    \setlength{\itemsep}{0pt}%
    \setlength{\parskip}{0pt}}%
  {\end{itemize}}
\newsavebox{\selvestebox}
\newenvironment{colbox}[1]
  {\newcommand\colboxcolor{#1}%
   \begin{lrbox}{\selvestebox}%
   \begin{minipage}{\dimexpr\columnwidth-2\fboxsep\relax}}
  {\end{minipage}\end{lrbox}%
   \begin{center}
   \colorbox{\colboxcolor}{\usebox{\selvestebox}}
   \end{center}}
\definecolor{orange}{rgb}{1,0.8,0}
\definecolor{gray}{rgb}{.9,0.9,0.9}
\definecolor{darkgray}{rgb}{.3,0.3,0.3}
\definecolor{darkblue}{rgb}{.1,0.0,0.3}
\definecolor{lightblue}{rgb}{0.7,0.7,1}
\definecolor{lightred}{rgb}{1,0.7,.7}
\definecolor{purple}{RGB}{204,153,255}
\definecolor{lightgray}{rgb}{.95,0.95,0.95}
\definecolor{lightgreen}{rgb}{0.3,0.5,0.3}
\definecolor{darkgreen}{rgb}{0.05,0.3,0.05}
\newcommand{\diag}[1]{\mathop{\rm diag}{#1}}
 \newcommand{\define}{:=}
\newtheorem{myproposition}{Proposition}
\newtheorem{myremark}{Remark}
\newtheorem{myproblemstatement}{Problem Statement}
\newtheorem{mylemma}{Lemma}
\newtheorem{mytheorem}{Theorem}
\newtheorem{mycorollary}{Corollary}
\pgfplotsset{compat=newest}
\pgfplotsset{plot coordinates/math parser=false}
\pgfplotsset{plot coordinates/math parser=false}
\newlength\mywidth
\newlength\myheight
\newlength\mywidths
\newlength\mywidthbb
\newlength\myheightbb
\newlength\mywidthb
\newlength\myheightb
\newlength\myheights
\definecolor{GraphSAC}{RGB}{228,26,28}%
\definecolor{Amen}{RGB}{55,126,184}%
\definecolor{Gae}{RGB}{77,175,74}%26,150,65
\definecolor{Radar}{RGB}{152,78,163}
\definecolor{Cut}{RGB}{255,127,0}%
\definecolor{Flake}{RGB}{153,153,153}%
\definecolor{Conductance}{RGB}{166,86,40}%26,150,65
\definecolor{Avg}{RGB}{247,129,191}
\definecolor{diff-nonlin-invariant}{rgb}{1,0.04700,0.04100}%
\definecolor{diff-nonlin-raw}{rgb}{0.05000,0.32500,0.89800}%
\definecolor{custom-nonlin-invariant}{rgb}{0.92900,0.69400,0.12500}%
\definecolor{custom-nonlin-raw}{rgb}{0.49400,0.18400,0.55600}%
\definecolor{custom-lin-invariant}{rgb}{0.46600,0.67400,0.18800}%
\definecolor{custom-lin-raw}{rgb}{0.30100,0.74500,0.93300}%
\definecolor{monicCubic-nonlin-invariant}{rgb}{1,1,0.08400}
\definecolor{monicCubic-nonlin-raw}{rgb}{0,1,0.08400}
\definecolor{pmonicCubic-nonlin-raw}{rgb}{1,0,1}
\definecolor{pmonicCubic-nonlin-invariant}{rgb}{1,0,0.5}
\definecolor{pointnet}{rgb}{0,0,0.08400}
\definecolor{pointnetplus}{rgb}{0.3,0.3,0.300}
\definecolor{voxnet}{rgb}{0.9,0.9,0.300}
\definecolor{3dshapenets}{rgb}{0.9,0,0.900}
\definecolor{edge}{cmyk}{1,0.74,0,0.77}
\definecolor{node}{cmyk}{0.99,0.66,0,0.57} 
\definecolor{graph}{cmyk}{0.38,0.17,1,0.17} 
\definecolor{binding}{cmyk}{0.04,1,0,0.31}
\definecolor{summarize}{cmyk}{0,1,0.26,0.12} 
\definecolor{DiffPool}{rgb}{0.5,0.5,1}%
\definecolor{DiffScater}{rgb}{1,0.04700,0.04100}%
\definecolor{MonicCubic}{rgb}{0.05000,0.32500,0.89800}%
\definecolor{TightHann}{rgb}{0.92900,0.69400,0.12500}%
\definecolor{pDiffScater}{rgb}{1,0.04700,1}%
\definecolor{pMonicCubic}{rgb}{0.05000,0.800,0.89800}%
\definecolor{pTightHann}{rgb}{0.92900,0.9400,0.12500}%
\definecolor{CGTF}{rgb}{1,0.04700,0.04100}%
\definecolor{SNMF}{rgb}{0.05000,0.32500,0.89800}%
\definecolor{CGTF1}{rgb}{0.92900,0.69400,0.12500}%
\definecolor{SNMF1}{rgb}{0.49400,0.18400,0.55600}%
\definecolor{CMTF}{rgb}{0.46600,0.67400,0.18800}%
\definecolor{PARAFAC}{rgb}{0.30100,0.74500,0.93300}%
\definecolor{NTF}{rgb}{1,1,0.08400}%
\newcommand{\mylinewidth}{1.5}
\newcommand{\markwidth}{1.5}
\newcommand{\scaletrees}{0.6}
\newcommand{\legendfontsize}{\normalsize}
\def \thisplotscale {0.75}
\pgfplotsset{
compat=1.11,
legend image code/.code={
\draw[mark repeat=2,mark phase=2]
plot coordinates {
(0cm,0cm)
(0.15cm,0cm)        %% default is (0.3cm,0cm)
(0.3cm,0cm)         %% default is (0.6cm,0cm)
};%
}
}
\def \unit {\thisplotscale cm}
\tikzstyle {node} = 
 \tikzstyle {mnode} = 
\def\colorNode{node}
\def\colorEdge{edge}
\def\colorGraphFilter{graph}
\def\colorSummarizer{summarize}
\def\colorBinding{binding}
\def\dist{1.5}
\def\xlbl{0.1*\dist}
\def\ylbl{0.1*\dist}
\def\myLineWidth{0.5}
\def\myArrowWidth{1}
\def\mmyArrowWidth{0.6}
\def\angleSummarizer{160}
\def\radiusSummarizer{1.5}
\def\shortsize{10}
\def\halfImageWidth{8*\unit}
\def\distBetweenLayers{0.2*\halfImageWidth}
\def\distNodesLayerOne{0.6*\halfImageWidth}
\def\distNodesLayerTwo{0.3*\distNodesLayerOne}
\def\distNodesLayerThree{0.8*\distNodesLayerTwo}
\def\distNodesLayerFour{0.8*\distNodesLayerThree}
\def\mdistBetweenLayers{0.25*\halfImageWidth}
\def\mdistNodesLayerOne{0.5*\halfImageWidth}
\def\mdistNodesLayerTwo{0.3*\distNodesLayerOne}
\def\mdistNodesLayerThree{0.3*\distNodesLayerTwo}
\def\mdistNodesLayerFour{0.1*\distNodesLayerThree}
\newcommand{\featvec}{\mathbf{x}}
\newcommand{\nonlinearity}[1]{\sigma(#1)}
\newcommand{\shiftentry}{S}
\newcommand{\shiftmat}{\mathbf{\shiftentry}}
\newcommand{\eigenval}{\lambda}
\newcommand{\framebound}{B}
\newcommand{\psummarizedscatterfeat}{\tilde{\phi}}
\newcommand{\filterfun}{\mathbf{h}}
\newcommand{\scatternot}[1]{_{(#1)}}
\newcommand{\pathind}{p}
\newcommand{\scattertransform}{\bm {\phi}}
\newcommand{\transpose}{^\top}
\newcommand{\waveletnot}[1]{_{#1}}
\newcommand{\nodenodenot}[2]{_{#1#2}}
\newcommand{\layernot}[1]{^{(#1)}}
\newcommand{\nodeind}{n}
\newcommand{\scatteredfeatvec}{\mathbf{z}}
\newcommand{\gftscatteredfeatvec}{\check{\mathbf{z}}}
\newcommand{\identitymat}{\mathbf{I}}
\newtheorem{theorem}{Theorem}
\newtheorem{corollary}{Corollary}
\newtheorem{lemma}{Lemma}
\newcolumntype{H}{>{\setbox0=\hbox\bgroup}c<{\egroup}@{}}
\title{Efficient and Stable Graph Scattering Transforms via Pruning
}
\author{%
Vassilis~N.~Ioannidis,~\IEEEmembership{Student~Member,~IEEE,}
Siheng~Chen,~\IEEEmembership{Member,~IEEE,}
and~Georgios~B.~Giannakis,~\IEEEmembership{Fellow,~IEEE,}%  \thanks{Use footnote for providing further information about author (webpage, alternative address)---\emph{not} for acknowledgingfunding agencies.} 
\IEEEcompsocitemizethanks{\IEEEcompsocthanksitem V.~N.~Ioannidis and G.~B.~Giannakis are with the Department
of Electrical and Computer Engineering, University of Minnesota, Minneapolis, MN. Emails: {$\{$ioann006,georgios$\}$@umn.edu}. S.~Chen is with Mitsubishi Electric Research Laboratories, Cambridge, MA. 
Email: schen@merl.com.  The work was carried out during V. N. Ioanndis' internship at Mitsubishi Electric Research Laboratories in the summer of 2019.}
}
\begin{document}
\maketitle
\begin{abstract}
Graph convolutional networks (GCNs) have well-documented performance in various graph learning tasks, but  their analysis is still at its infancy. Graph scattering transforms (GSTs) offer training-free deep GCN models that extract features from graph data, and are amenable to generalization and stability analyses. The price paid by GSTs is exponential complexity in space and time that increases with the number of layers. This discourages deployment of GSTs when a deep architecture is needed. The present work addresses the complexity limitation of GSTs by introducing an efficient so-termed pruned (p)GST approach. The resultant pruning algorithm is guided by a graph-spectrum-inspired criterion, and retains informative scattering features on-the-fly while bypassing the exponential complexity associated with GSTs. Stability of the novel pGSTs is also established when the input graph data or the network structure are perturbed. Furthermore, the sensitivity of pGST to random and localized signal perturbations is investigated analytically and experimentally. Numerical tests showcase that pGST performs comparably to the baseline GST at considerable computational savings. Furthermore, pGST achieves comparable performance to state-of-the-art GCNs in graph and 3D point cloud classification tasks. Upon analyzing the pGST pruning patterns, it is shown that graph data in different domains call for different network architectures, and that the pruning algorithm may be employed to guide the design choices for contemporary GCNs. 
\end{abstract}
\vspace{0mm}
\section{Introduction}
\vspace{0mm}
The abundance of graph-abiding data calls for advanced learning techniques that complement nicely standard machine learning tools when the latter cannot be directly employed, e.g. due to irregular data inter-dependencies. Permeating the benefits of deep learning to graph data, graph convolutional networks (GCNs) offer a versatile and powerful framework to learn from complex graph data~\cite{bronstein2017geometric}. GCNs and variants thereof have been remarkably successful in social network analysis, 3D point cloud processing, recommender systems and action recognition. However, researchers have recently reported less consistent perspectives on the desirable GCN designs. For example, experiments in social network analysis have argued that deeper GCNs marginally increase the learning performance~\cite{wu2019simplifying}, whereas a method for 3D point cloud segmentation achieves state-of-the-art performance with a 56-layer GCN network~\cite{li2019can}. These `controversial' empirical findings motivate theoretical analysis to understand the fundamental performance-defining factors, and the resultant design choices for high-performance GCNs.

Aiming to bestow GCNs with theoretical guarantees, one promising path is to study graph scattering transforms (GSTs) -- an analysis framework that has been advocated to assess stability and explain the success of deep neural networks (DNNs)~\cite{bruna2013invariant,mallat2012group}. GSTs are non-trainable GCNs comprising a cascade of graph filter banks followed by nonlinear activation functions. The graph filter banks are designed analytically to scatter an input graph signal into multiple channels. GSTs extract stable features of graph data that can be utilized for downstream graph learning tasks~\cite{gao2019geometric}, with competitive performance especially when the number of training examples is small. Under certain conditions on the graph filter banks, GSTs are endowed with energy conservation properties~\cite{zou2019graph}, as well as stability that amounts to robustness to graph topology deformations~\cite{gama2019stability}.  Inherited from scattering transforms, GSTs however are known to incur exponential complexity in space and time that increases with the number of layers 
(GST depth)~\cite{bruna2013invariant,mallat2012group}. Furthermore, stability should not come at odds with sensitivity. A filter’s output should be sensitive to and able to cope with perturbations of large magnitude. Current GST efforts have not addressed transform sensitivity to input noise. Lastly, graph data in different domains (e.g., social networks versus 3D point clouds) have distinct properties, which prompts domain-adaptive GST designs. 

\begin{figure*}
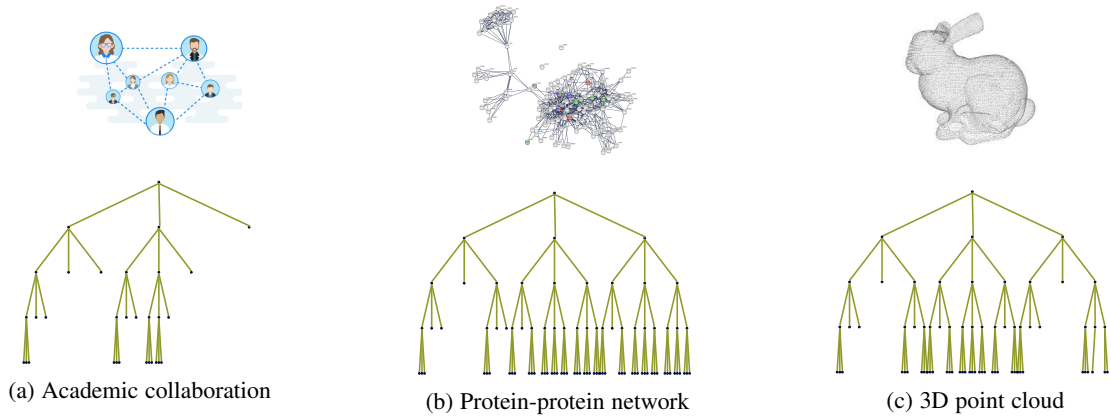

\begin{subfigure}{0.3\textwidth}
\centering\input{figs/collabpgstree.tex}
\caption{Academic collaboration}
\end{subfigure}
\begin{subfigure}{0.3\textwidth}
\centering\input{figs/protein.tex}
\caption{Protein-protein network}
\end{subfigure}
\begin{subfigure}{0.3\textwidth}
\centering\input{figs/pointcloud.tex}
\caption{3D point cloud}
\end{subfigure}
    \caption{Illustration of the same pGST applied to different graph datasets. For the social network (a) most GST branches are pruned, suggesting that most information is captured by local interactions.}
     \label{fig:pGSTex}
\vspace{0mm}
\end{figure*}
\vspace{0mm}

\vspace{0mm}
\subsection{Contributions}
The present paper develops a data-adaptive pruning approach to systematically retain informative GST  features, which justifies the term pruned graph scattering transform (pGST). The pruning decisions are guided by a criterion promoting alignment (matching) of the input graph spectrum with that of the graph filters. The optimal pruning decisions are provided on-the-fly, and alleviate the exponential complexity of GSTs. We prove that the pGST is stable to perturbations of the input graph data and those of the network structure. Under certain conditions on the perturbation energy, the resultant pruning patterns before and after the network and input perturbations are identical, and the overall pGST is stable. Further, the sensitivity of pGST to random and localized noise is theoretically and experimentally investigated. It turns out that pGST is more sensitive to noise that is localized in the graph spectrum relative to noise that is uniformly spread over the spectrum. This is appealing because pGST can detect transient changes in the graph spectral domain, while ignoring small random perturbations. With extensive experiments we showcase that the proposed pGSTs perform similar to, and in certain cases better than, the baseline GSTs that use all scattering features, while achieving significant computational savings. Furthermore, the \emph{efficient} and \emph{stable} features extracted can be utilized towards graph classification and 3D point cloud recognition. Even without any training on the feature extraction step, the performance is comparable to state-of-the-art deep supervised learning approaches, particularly when training data are scarce. By analyzing the pruning patterns of the pGST, we deduce that graph data in different domains call for distinct network architectures; see Fig.~\ref{fig:pGSTex}. Finally, we establish that the number of pGST layers after pruning can be utilized to guide the design parameters of contemporary GCNs.

A preliminary version of this work was presented in~\cite{ioannidis2020pgsticlr}. Relative to this, the novelty here is fourfold. First, novel theoretical results are derived that establish stability of the pGST to structural perturbations. An additional interpretation of the pruining criterion as a rate-distortion tradeoff is presented. Furthermore, sensitivity analysis of pGST reveals that the transform is more sensitive to noise that is localized in the graph spectrum relative to noise that is uniformly spread over the spectrum, which is appealing because pGST can detect transient changes in the graph spectral domain. Finally, new experimental results showcase: i) the effect of localized and random noise to the pruning algorithm; ii) the competitive performance of pGST for semi-supervised learning tasks; and, iii) the strong link between design choices in GCNs, and the pruning patterns of pGST.

\section{Related work}
\subsection{Graph convolutional networks}
GCNs rely on a layered processing architecture comprising trainable graph convolutional operations to linearly combine features per graph neighborhood, followed by pointwise nonlinear functions applied to the linearly transformed features~\cite{bronstein2017geometric}. Complex GCNs and their variants have shown remarkable success in graph semi-supervised learning~\cite{kipf2016semi,velivckovic2017graph} and graph classification tasks~\cite{ying2018hierarchical}. GCNs as simple as a single-layer linear module can offer high performance in certain social network learning applications~\cite{wu2019simplifying}. On the other hand, a 56-layer GCN has been employed to achieve state-of-the-art performance in 3D point cloud segmentation~\cite{li2019can}. Whether simple or complex, designing GCNs guided by properties of the graph data at hand is of paramount importance. 

\subsection{Graph scattering transforms}
To understand the success of GCNs analytically, recent works study the stability properties of GSTs with respect to metric deformations of the domain \cite{gama2018diffusion,gama2019stability,zou2019graph}. GSTs generalize scattering transforms~\cite{bruna2013invariant,mallat2012group} to non-Euclidean domains. 
GSTs are a cascade of graph filter banks and nonlinear operations that is organized in a tree-structured architecture. The number of extracted GST features grows exponentially with the number of layers. Theoretical guarantees for GSTs are obtained after fixing the graph filter banks to implement a set of graph wavelets. The work in \cite{zou2019graph} establishes energy conservation properties for GSTs given that certain energy-preserving graph wavelets are employed, and also prove that GSTs are stable to graph structure perturbations; see also \cite{gama2018diffusion} that focuses on diffusion wavelets. On the other hand, \cite{gama2019stability} proves stability to relative metric deformations for a wide class of graph wavelet families. These contemporary works shed light into the stability and generalization capabilities of GCNs. However, stable transforms are not necessarily informative, and albeit highly desirable, a principled approach to selecting informative GST features remains still an uncharted venue.

\subsection{Neural network based compression}
This work advocates pruning of GSTs to avoid the exponential growth of features with the network depth, which is naturally related to deep neural network (DNN) based compression~\cite{reed1993pruning,han2015deep}. State-of-the-art DNNs typically entail a large number of network layers and corresponding trainable parameters, which introduce excessive computational and memory costs~\cite{krizhevsky2012imagenet,he2016deep}. Confronting these challenges, neural network compression aims at reducing the network parameters and pruning the architecture to facilitate practical deployment of DNN-based solutions. Typical compression techniques prune redundant parameters during training, while minimizing the effect of the pruned parameters to the learned features~\cite{reed1993pruning,han2015deep}. Recent approaches perform network pruning at initialization before training the DNN~\cite{lee2018snip}. 

Although DNN-based compression is a fruitful direction, pruning GCNs has not received commensurate attention. Typical GCNs are applied to relatively small data sets and require only a small number of layers and training parameters to attain state-of-the-art learning performance~\cite{kipf2016semi, velivckovic2017graph,wu2019simplifying}, which explains why GCN-based compression is yet to be explored. Deep GCNs are emerging however, to deal with web-scale graphs~\cite{li2019can,ying2018graph}. Such overparameterized GCNs motivate the development of novel compression techniques. Our work can be seen as a stepping stone towards pruning GCNs; meanwhile, pruning GSTs versus GCNs/DNNs has differences in two aspects. First, GSTs are nontrainable feature extractors, whereas GCNs/DNNs introduce trainable parameters that should be considered by the compression algorithm. Second, GST pruning is inherently an one-shot process and pruning is performed in an online and even adaptive fashion, whereas pruning GCNs or DNNs should be performed offline.

\vspace{0mm}
\section{Background}
\vspace{0mm}
Consider an undirected graph $\mathcal{G}\define\{\mathcal{V},\mathcal{E}\}$ with node set $\mathcal{V}\define\{v_i\}_{i=1}^N$, and edge set $\mathcal{E}\define\{e_i\}_{i=1}^E$. Its connectivity is described by the~\emph{graph shift matrix} $\mathbf{S} \in \mathbb{R}^{N\times N}$, whose $(n,n')$th entry $S\nodenodenot{n}{n'}$ is nonzero if $(n,n') \in \mathcal{E}$ or if  $n=n'$. A typical choice for $\mathbf{S}$ is the symmetric adjacency or the Laplacian matrix. Further, each node can be also associated with a few attributes. Collect attributes across all nodes in the matrix $\mathbf{X}\define[\mathbf{x}_{1},\ldots,\mathbf{x}_{F}] \in  \mathbb{R}^{N \times F}$, where each column $\mathbf{x}_{f} \in \mathbb{R}^N$ can be regarded as a~\emph{graph signal}. % For example, the entry $[\mathbf{X}]_{n f} = [\mathbf{x}_{f}]_n$ may denote the salary of the  $n$-th individual in the LinkedIn social network. 

\subsection{Graph Fourier transform}
A Fourier transform corresponds to the expansion of a signal over bases that are invariant to filtering; here, this graph frequency basis is the eigenbasis of the shift matrix $\mathbf{S}$. Henceforth,   $\mathbf{S}$ is assumed normal with $\mathbf{S}= \mathbf{V} \bm{\Lambda} \mathbf{V}\transpose,$ where  $\mathbf{V} \in\mathbb{R}^{N\times N}$ forms the graph Fourier basis,  and
$\bm{\Lambda} \in \mathbb{R}^{N\times N}$ is the diagonal matrix of corresponding eigenvalues $\lambda_0, \, \ldots, \, \lambda_{N-1}$, that can be thought of as graph frequencies.  The graph Fourier transform (GFT) of $\mathbf{x} \in \mathbb{R}^N$ is $\widehat{\mathbf{x}} = \mathbf{V}\transpose \mathbf{x} \in \mathbb{R}^{N}$, while the inverse transform is $\mathbf{x}  =  \mathbf{V}  \widehat{\mathbf{x}}$. The vector $\widehat{\mathbf{x}}$ represents the signal's expansion in the eigenvector basis and describes the graph spectrum of  $\mathbf{x}$. The inverse GFT reconstructs the graph signal from its graph spectrum by combining graph frequency components weighted by the coefficients of the signal's graph Fourier transform. GFT is a theoretical tool that has been popular for analyzing graph data in the graph spectral domain.

\subsection{Graph convolutional networks}
GCNs permeate the benefits of CNNs from processing Euclidean data to modeling graph structured data.  GCNs model graph data through a succession of layers, each consisting of a graph convolutional operation (a.k.a. graph filter), a pointwise nonlinear function $\sigma(\cdot)$, and oftentimes also a pooling operation. Given $\mathbf{x} \in  \mathbb{R}^{N}$, the graph convolution operation diffuses each node's information to its neighbors to obtain $\mathbf{S}\mathbf{x}$ with $n$th entry  $[\mathbf{S}\mathbf{x}]_n=\sum_{n'\in \mathcal{N}_n }S_{nn'}x_{n'}$ being a weighted average of the one-hop neighboring features.  Successive application of $\mathbf S$ will reach multi-hop neighbors, spreading the information across the network. Summing up, a $K$th-order graph convolutional operation (graph filtering) is
\begin{align}
\label{eq:graph_conv}
\vspace{1mm}
{h}(\mathbf{S}) \mathbf{x}:= \sum_{k=0}^{K}w_k\mathbf{S}^k \mathbf{x} = \mathbf{V}  \widehat{h}(\bm{\Lambda}) \widehat{\mathbf{x}}
\vspace{1mm}
\end{align}
where the graph filter ${h}(\cdot)$ is parameterized by the learnable weights $\{w_k\}_{k=0}^{K}$, and in the graph spectral domain $\widehat{h}(\bm{\Lambda}) = \sum_{k=0}^{K} w_k \bm{\Lambda}^k$. 
In the graph vertex domain, the learnable weights capture the influences from various orders of neighbors; and in the graph spectral domain, those weights adaptively adjust the focus and emphasize certain graph frequency bands. GCNs employ various graph filter banks per layer, and learn the parameters that minimize a predefined learning objective, such as classification, or regression. 

\subsection{Graph  scattering transforms}
GSTs are the training-free counterparts of GCNs, where the parameters of graph convolutions are fixed based on a design criterion. Per GST layer, the input is graph filtered using a filter bank $\{{h}_j(\mathbf{S})\}_{j=1}^{J}$, an elementwise nonlinear function $\nonlinearity{\cdot}$, and a pooling operator $ U$. At the first layer, the input $ \mathbf{x} \in \mathbb{R}^{N}$ constitutes the first scattering feature vector $\mathbf{z}_{(0)}: = \mathbf{x}$. Next, $\mathbf{z}_{(0)}$ is processed through 
$\{{h}_j(\cdot)\}_{j=1}^{J}$ and $\sigma(\cdot)$ to generate $\{\mathbf{z}_{(j)}\}_{j=1}^{J}$ with $\mathbf{z}_{(j)}:= \sigma({h}_j(\mathbf{S})\mathbf{z}_{(0)})$. At the second layer, the same operation is repeated per $j$. This yields a tree structure with ${J}$ branches stemming out from non-leaf node; see also Fig.~\ref{fig:gst}. The $\ell$th layer  of the tree includes $J^{\ell}$ nodes. Each node of layer $\ell$ is indexed by the path $p^{(\ell)}$ of the sequence of $\ell$ graph convolutions applied to the input $\mathbf{x}$, i.e. $p^{(\ell)}\define(j^{(1)},j^{(2)},\ldots,j^{(\ell)})$.\footnote{A tree node is fully specified by its corresponding path.}
\begin{figure*}
    \centering
    \input{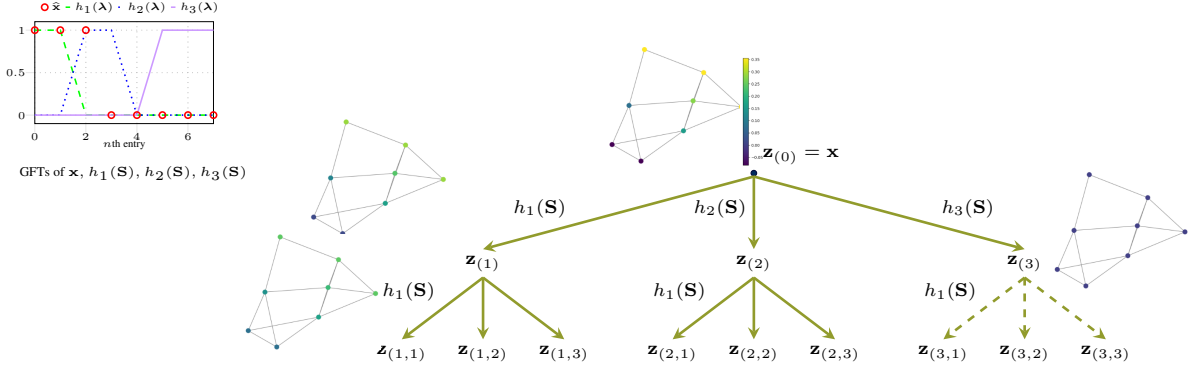}
    \caption{Scattering pattern of a pGST with $J=3$ and $L=3$. Dashed lines represent pruned branches.
    An example $\mathbf{x}$ and GFTs of filter banks are depicted too.  The third filter $j=3$ at $\ell=1$ is pruned because it generates no output ($\mathbf{z_{(3)}}=\mathbf{0}$). 
    }
    \label{fig:gst}
    \vspace{1mm}
\end{figure*}
The scattering feature vector at the tree node indexed by ${(p^{(\ell)},{j})}$ at layer $\ell+1$, is
\begin{align}
\label{eq:convoper}
    \mathbf{z}\scatternot{p^{(\ell)},{j}}=\nonlinearity{{h}_j(\mathbf{S})\mathbf{z}_{(p^{(\ell)})}}
\end{align}
where $p^{(\ell)}$ holds the list of indices of the parent nodes ordered by ancestry, and all path $p^{(\ell)}$ in the tree with length $\ell$ are included in the path set $\mathcal{P}^{(\ell)}$ with $|\mathcal{P}^{(\ell)}|=2^{\ell}$. The memoryless nonlinearity $\sigma(\cdot)$ disperses the graph frequency representation through the spectrum, and endows the GST with increased discriminating power~\cite{gama2019stability}.  By exploiting graph sparsity, the  computational complexity of \eqref{eq:convoper} is $\mathcal{O}(KE)$, where $E = |\mathcal{E}|$ is the number of edges in\footnote{Any analytical function $h(\mathbf{S})$ can be written as a polynomial of $\mathbf{S}$ with maximum degree $N-1$ \cite{horn2012matrix}.} $\mathcal{G}$. Each feature vector $\mathbf{z}_{(p^{(\ell)})}$ is summarized by an aggregation operator $U(\cdot)$ defining a scalar scattering coefficient $\phi_{(p^{(\ell)})}\define U(\mathbf{z}_{(p^{(\ell)})})$, where $U(\cdot)$ is typically an averaging operator that reduces dimensionality of the extracted features. The scattering coefficient per tree node reflects the activation level at a certain graph frequency band.

These scattering coefficients are collected across all tree nodes to form a scattering feature map
\begin{align}
\label{eq:defscatvect}
    \bm{\Phi}(\mathbf{x}):=\left\{\{ \phi_{(p^{(\ell)})}\}_{p^{(\ell)}\in\mathcal{P}^{(\ell)}}\right\}_{\ell=0}^{L}
\end{align}
where  $|\bm{\Phi}(\mathbf{x})| = \sum_{\ell=0}^L{J}^\ell$.  
The GST operation resembles  a forward pass of a trained GCN. This is why several works study GST stability under perturbations of $\mathbf{S}$ in order to understand the working mechanism of GCNs~\cite{zou2019graph,gama2019stability,gama2018diffusion}.
\vspace{0mm}
\section{Pruned Graph Scattering Transforms}
\vspace{0mm}
While the representation power of GST increases with the number of layers, the computational and storage complexity of the transform also increase exponentially with the number of layers due to its scattering nature. Hence, even if informative features become available at deeper layers, the associated exponential complexity of extracting such features is prohibitive. On the other hand, various input  data (e.g., social networks and 3D point clouds) may have  distinct properties, leading to different GST feature maps. In some cases, just a few nodes of deep layers are informative; and in other cases, nodes of shallow layers convey most of the information; see Fig.~\ref{fig:pGSTex}. This motivates the pursuit of a tuned GST to adaptively capture informative nodes.

Aspiring to improve GST models, we introduce a~\emph{pruned} (p)GST to systematically retain informative nodes without extra complexity. Our novel pGST reduces the exponential complexity and adapts GST to different graph data. Furthermore, pGST offers a practical mechanism to understand the architecture of GCNs. Based on the pruning patterns, the proposed pGST suggests when a deeper GCN is desirable, or, when a shallow one will suffice. Pruning the wavelet packets has been traditionally employed for compression in image processing applications~\cite{xiong2002space}, where the pruning is guided by a rate-distortion optimality criterion. 

In this work, we consider a graph spectrum inspired criterion. Intuitively, each GST node will be associated with a unique subband of the graph spectrum. When the subband of a node does not have sufficient overlap with the graph signal spectrum, this node cannot capture the underlying graph signal, and should be pruned. Consider for specificity a smooth $\mathbf{x}$, where connected nodes have similar signal values. This implies sparse (low-rank) representation in the graph spectral domain; that is, $\widehat{\mathbf{x}}\define\mathbf{V}^\top\mathbf{x} \in \mathbb{R}^N$ and $[\widehat{\mathbf{x}}]_{n}=0$ for $n \geq b$. The graph spectrum of the $j$th output is then
\begin{align}
\nonumber
    \mathbf{V}\transpose{h}_j(\mathbf{S})\mathbf{x}=&\diag{(\widehat{{h}}_j(\bm{\lambda}))}\widehat{\mathbf{x}}\\
    =&[\widehat{{h}}_j(\lambda_1)
\widehat{x}_1,
\widehat{{h}}_j(\lambda)\widehat{x},\ldots,\widehat{{h}}_j(\lambda_N)\widehat{x}_N]\transpose
\end{align}
where $\lambda_n$ is the $n$th eigenvalue of $\mathbf{S}$ and each frequency $\widehat{x}_n$ is weighted by the corresponding transformed eigenvalue $\widehat{{h}}_j(\lambda_n)$.
%Applying a wavelet bank $\{{h}_j(\cdot)\}_{j=1}^J$ will generate different outputs per $j$ depending on the properties of ${h}_j(\mathbf{S})$.
Hence, if the support of the spectrum  $\{\widehat{{h}}_j(\lambda_n)\}_n$ is not included in the support of $[\widehat{\mathbf{x}}]_{n}$, the $j$th graph filter output will not capture any information; that is, ${h}_j(\mathbf{S})\mathbf{x}=\mathbf{0}_N$; see Fig. \ref{fig:gst}.  Thus, identifying such graph filters and pruning the corresponding tree nodes will result in a parsimonius and thus computationally efficient GST.

\subsection{Pruning criterion}
Motivated by this last observation, we introduce a pruning criterion to select the scattering branches per node by maximizing the alignment between the graph spectrum of the filters and the scattering features. Per node $p$, the optimization problem is
\begin{align}
\label{eq:criterion}
    \max_{\{ f_{(p, j)} \}_{j=1}^J}&~~~~~
    %\sum^{J}_{j=1}\widehat{\mathbf{z}}_{(p)}\transpose\left(\diag{({h}_j(\bm{\lambda}))}^2-\tau\identitymat\right)\widehat{\mathbf{z}}_{(p)} 
    %{f_j}=
    \sum^{J}_{j=1}
\left(\sum_{n=1}^N\left(\widehat{{h}}_j(\lambda_n)^2-\tau\right)[\widehat{\mathbf{z}}_{(p)}]^2_n\right) {f_{(p, j)}}
\\ \nonumber
    \text{s. t.}&~~~~~f_{(p, j)} \in \{0,1\},~~j=1,\ldots,{J}
\end{align}
where $\widehat{\mathbf{z}}_{(p)} \define \mathbf{V} \mathbf{z}_{(p)}$ is the graph spectrum of the scattering feature vector $\mathbf{z}_{(p)}$; $\tau$ is a user-specific threshold; and, $f_{(p, j)}$ stands for the pruning assignment variable indicating whether node $(p, j)$ is active  ($f_{(p, j)}=1$), or, it should be pruned ($f_{(p, j)}=0$). The objective in~\eqref{eq:criterion} promotes retaining tree nodes that maximize the alignment of the graph spectrum of $\widehat{\mathbf{z}}_{(p)}$ with that of $\widehat{h}_j{(\bm{\lambda})}$. The  threshold $\tau$ introduces a minimum spectral value to locate those nodes whose corresponding graph spectral response is small, meaning $\widehat{{h}}_j(\lambda_n)^2 \ll \tau$. Note that criterion \eqref{eq:criterion} is evaluated per tree node $p$, thus allowing for a flexible and scalable design.

The optimization problem in \eqref{eq:criterion} is nonconvex since $f_{(p, j)}$ is a discrete variable. Furthermore, recovering $\widehat{\mathbf{z}}_{(p)}$ requires an eigendecomposition of the Laplacian matrix that incurs complexity $\mathcal{O}(N^3)$. Nevertheless, by exploiting the structure in \eqref{eq:criterion}, we will develop an efficient pruning algorithm attaining the maximum of $\eqref{eq:criterion}$, as asserted by the following theorem.
\begin{theorem}\label{th:pruncond}
The optimal pruning assignment variables $\left\{f_{(p, j)}^*\right\}_j$  of \eqref{eq:criterion} are given by 
\begin{align}
\label{eq:pwcriterion}
        f_{(p, j)}^* =\begin{dcases*} 1
   & if  $\frac{\|\mathbf{z}_{(p, j)}\|^2}
{\|\mathbf{z}_{(p)}\|^2}>\tau,$ \\[0.5ex]
0
   & otherwise.
\end{dcases*},~~~j=1,\ldots,{J}
\vspace{1mm}
\end{align}
\end{theorem}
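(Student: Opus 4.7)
The plan is to exploit the fact that the objective in \eqref{eq:criterion} is linear and additively separable in the binary variables $\{f_{(p,j)}\}_{j=1}^{J}$, with no coupling constraint across $j$. Consequently, the problem decomposes into $J$ independent scalar problems: for each $j$, set $f_{(p,j)}^{*}=1$ if and only if its coefficient $c_{j}:=\sum_{n=1}^{N}(\widehat{h}_{j}(\lambda_{n})^{2}-\tau)\,[\widehat{\mathbf{z}}_{(p)}]_{n}^{2}$ is strictly positive. This reduces the theorem to showing that $c_{j}>0$ is equivalent to the inequality $\|\mathbf{z}_{(p,j)}\|^{2}/\|\mathbf{z}_{(p)}\|^{2}>\tau$ stated in \eqref{eq:pwcriterion}.

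Next I would rewrite $c_{j}$ in the vertex domain using Parseval. Since $\mathbf{V}$ is orthonormal, I have $\sum_{n=1}^{N}[\widehat{\mathbf{z}}_{(p)}]_{n}^{2}=\|\widehat{\mathbf{z}}_{(p)}\|^{2}=\|\mathbf{z}_{(p)}\|^{2}$. For the first summand, I would use the spectral representation $h_{j}(\mathbf{S})=\mathbf{V}\,\mathrm{diag}(\widehat{h}_{j}(\boldsymbol{\lambda}))\,\mathbf{V}^{\top}$, together with $\widehat{\mathbf{z}}_{(p)}=\mathbf{V}^{\top}\mathbf{z}_{(p)}$, to conclude
\begin{equation*}
\sum_{n=1}^{N}\widehat{h}_{j}(\lambda_{n})^{2}\,[\widehat{\mathbf{z}}_{(p)}]_{n}^{2}=\|\mathrm{diag}(\widehat{h}_{j}(\boldsymbol{\lambda}))\,\widehat{\mathbf{z}}_{(p)}\|^{2}=\|h_{j}(\mathbf{S})\mathbf{z}_{(p)}\|^{2}.
\end{equation*}
Thus $c_{j}=\|h_{j}(\mathbf{S})\mathbf{z}_{(p)}\|^{2}-\tau\|\mathbf{z}_{(p)}\|^{2}$, and the sign condition $c_{j}>0$ is exactly $\|h_{j}(\mathbf{S})\mathbf{z}_{(p)}\|^{2}/\|\mathbf{z}_{(p)}\|^{2}>\tau$.

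Finally, I would bridge $h_{j}(\mathbf{S})\mathbf{z}_{(p)}$ with the post-activation feature $\mathbf{z}_{(p,j)}=\sigma(h_{j}(\mathbf{S})\mathbf{z}_{(p)})$ defined in \eqref{eq:convoper}. This step requires the pointwise nonlinearity $\sigma(\cdot)$ to preserve the Euclidean norm, i.e.\ $\|\sigma(\mathbf{u})\|=\|\mathbf{u}\|$ for every $\mathbf{u}\in\mathbb{R}^{N}$, which holds for the absolute-value (modulus) nonlinearity typically used in scattering transforms. Under this assumption, $\|\mathbf{z}_{(p,j)}\|=\|h_{j}(\mathbf{S})\mathbf{z}_{(p)}\|$ and the optimality condition becomes exactly \eqref{eq:pwcriterion}.

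The separability step and the Parseval computation are essentially one-liners; the only delicate point I anticipate is making the norm-preservation of $\sigma$ explicit, since the theorem statement itself is silent on which activation is used. I would therefore either invoke an earlier assumption on $\sigma$ or remark that the identity $\|\sigma(\mathbf{u})\|=\|\mathbf{u}\|$ is the precise requirement that allows the pruning rule to be implemented directly from the already-computed vertex-domain features $\mathbf{z}_{(p,j)}$ and $\mathbf{z}_{(p)}$, bypassing the $\mathcal{O}(N^{3})$ eigendecomposition needed to evaluate $\widehat{\mathbf{z}}_{(p)}$.
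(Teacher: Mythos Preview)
Your proposal is correct and follows essentially the same route as the paper's proof: exploit separability of the linear objective to reduce to the sign of the per-branch coefficient $c_j$ (the paper's $\alpha_j$), rewrite it via Parseval and the spectral factorization of $h_j(\mathbf{S})$ as $\|h_j(\mathbf{S})\mathbf{z}_{(p)}\|^2-\tau\|\mathbf{z}_{(p)}\|^2$, and then invoke the norm-preserving property of the pointwise nonlinearity to replace $\|h_j(\mathbf{S})\mathbf{z}_{(p)}\|$ by $\|\mathbf{z}_{(p,j)}\|$. The paper handles the last step in exactly the way you anticipated, asserting that ``$\sigma(\cdot)$ is applied elementwise, and does not change the norm,'' so your explicit remark about this assumption is appropriate.
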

%\Note{SC: Explain the intuition of this algorithm.}
The optimal variables $f_{(p, j)}^*$ are given by comparing the energy of the input $\mathbf{z}_{(p)}$ to that of the output $\mathbf{z}_{(p, j)}$ per graph filter $j$ that can be evaluated with linear complexity of order $\mathcal{O}(N)$. Our pruning criterion leads to a principled and scalable means of selecting the GST nodes to be pruned. The pruning objective is evaluated per node $p$, and pruning decisions are made on-the-fly. Hence, when $f_{(p)}^*=1$, node $p$ is active and the graph filter bank will be applied to $\mathbf{z}_{(p)}$, expanding the tree to the next layer; otherwise, the GST will not be expanded further at node $p$, which can effect exponential savings in computations.  An example of such a pruned tree is depicted in Fig.~\ref{fig:gst}.  Evidently, the hyperparameter $\tau$ controls the input-to-output energy ratio.  A large $\tau$ corresponds to an aggressively pruned scattering tree, while a small $\tau$ amounts to a minimally pruned scattering tree. The stable and efficient representation extracted by the pGST is then defined as 
\begin{align}\nonumber
     \bm{\Psi}(\mathbf{x}): =\left\{\phi_{(p)}\right\}_{p\in\mathcal{T}}
\vspace{1mm}
\end{align}
where $\mathcal{T}$ is the set of active tree nodes $\mathcal{T}:=\{p \in\mathcal{P}|~f_{(p)}^*=1 \}$.

Our pruning approach provides a concise version of GSTs and effects savings in computations as well as memory. Although the worst-case complexity of pGST is still exponential, a desirable complexity can be realized by properly selecting $\tau$. As a byproduct, the scattering patterns of pGSTs reveal the appropriate depths and widths of the GSTs for different graph data; see also Fig.~\ref{fig:pGSTex}. The pruning approach so far is an unsupervised one, since the labels are not assumed available. Note that the GFT is used throughout for analytical purposes, and neither the pruning algorithm nor the pGST requires explicit calculation of the GFT.

\subsection{Rate-distortion tradeoff}
Our pruning objective is closely related to the optimal rate-distrortion objective in signal processing~\cite{tsatsanis1995principal,ramchandran1993best}.
By judiciously selecting $\tau$, the proposed pruning criterion in \eqref{eq:pwcriterion} finds also the optimal variables to the following problem
\begin{align}
\label{eq:criterionalt}
    \max_{\{ f_{(p, j)} \}_{j=1}^J}&~~~~~
    %\sum^{J}_{j=1}\widehat{\mathbf{z}}_{(p)}\transpose\left(\diag{({h}_j(\bm{\lambda}))}^2-\tau\identitymat\right)\widehat{\mathbf{z}}_{(p)} 
    %{f_j}=
    \sum^{J}_{j=1}\frac{
\|{\mathbf{z}}_{(p,j)}\|^2} 
{\|{\mathbf{z}}_{(p)}\|^2}f_{(p, j)}
\\ \nonumber
    \text{s. t.}&~~~~~f_{(p, j)} \in \{0,1\},~~j=1,\ldots,{J}
    \\ &~~~~~\sum_{j=1}^J f_{(p, j)} \le K\nonumber
\end{align}
where $K$ denotes the maximum number of retained scattering features, that can be dictated by computational complexity constraints.
The optimal solution to~\eqref{eq:criterionalt} is to set $f_{(p, j)}=1$ for the $K$ scattering features with larger energy ratio ${
\|{\mathbf{z}}_{(p,j)}\|^2}/
{\|{\mathbf{z}}_{(p)}\|^2}$ and the rest to zero $f_{(p, j)}=0$.
Indeed, by properly choosing $\tau$, the criterion in \eqref{eq:pwcriterion} can be employed to find the $K$ channels with maximum energy, where $\tau$ would be the $K$th larger energy ratio. 
%\Note{SC: we need an intermediate step between (6) and (7) to show their connection. Now it seems a big jump. Basically, explain (7) more specially.}
Such energy preservation objective has been employed in signal processing in the context of PCA wavelets~\cite{tsatsanis1995principal}. Indeed, the hyperparameter $\tau$ reflects the sweet spot in the rate-distortion tradeoff~\cite{ramchandran1993best}. 

\section{Stability and sensitivity of  pGST}
\vspace{0mm}
In this section, we study the stability and sensitivity of pGST when the input graph data and the network topology are perturbed. We will establish that pGST is stable in the presence of feature or network perturbations with bounded power. We will further analyze pGST sensitivity to input perturbations. To establish these results, we consider graph wavelets that form a frame with frame bounds $A$ and $B$~\cite{hammond2011wavelets}, meaning for $\mathbf{x} \in \mathbb{R}^N$, it holds that, $A^2\|\mathbf{x}\|^2\le\sum_{j=1}^J\|{h}_j(\mathbf{S})\mathbf{x}\|^2\le B^2\|\mathbf{x}\|^2.$ In the graph vertex domain, $A$ and $B$ characterize the numerical stability of recovering $\mathbf{x}$ from $\{{h}_j(\mathbf{S})\mathbf{x}\}_{j}$. In the graph spectral domain, they reflect the ability of a graph filter bank to amplify $\mathbf{x}$ along each graph frequency. Tight frame bounds, satisfying $A^2=B^2$, are of particular interest because such wavelets lead to enhanced numerical stability and faster computations~\cite{shuman2015spectrum}. The frame property of a graph wavelet plays an instrumental role in proving GST stability to perturbations of the underlying graph structure~\cite{gama2019stability,gama2018diffusion,zou2019graph}.

\subsection{Stability to graph data perturbations} 
Perturbations present in $\mathbf{x}$ may be attributed to modeling errors, or adversarial intervention aiming to poison learning. Consider the perturbed graph signal
\begin{align}
\label{eq:simplvecpert}
    \tilde{\mathbf{x}}\define\mathbf{x}+{\bm{\delta}} \in \mathbb{R}^N
\end{align}
where $\bm{\delta} \in \mathbb{R}^N$ is the perturbation vector. 
%Such an additive model \eqref{eq:simplvecpert} may represent noise in the observed feature or adversarial perturbations. 
We wish to study how and under what conditions our pGST is affected by such perturbations. A stable transformation should have a similar output under small input perturbations. 

Before establishing that our pGST is stable, we first show that GST is stable to small perturbations in
$\mathbf{x}$. Prior art deals mainly with GST stability to structure perturbations~\cite{gama2019stability,gama2018diffusion,zou2019graph}.
\begin{lemma}\label{th:stab1}
Consider the GST $\bm{\Phi}(\cdot)$ with $L$ layers and $J$ graph filters; and suppose that the graph filter bank forms a frame with bound $B$, while $\mathbf{x}$ and $\tilde{\mathbf{x}}$ are related via \eqref{eq:simplvecpert}. It then holds that
\begin{align}
    \frac{\|\bm{\Phi}(\mathbf{x})-\bm\Phi (\tilde{\mathbf{x}})\|}{\sqrt{|\bm{\Phi}(\mathbf{x})|}}  \le 
    \sqrt{\frac{\sum_{\ell=0}^{L}(\framebound^{2}{J})^\ell}{\sum_{\ell=0}^L{J}^\ell}}
    \|{\bm{\delta}}\|~. \label{eq:gstbound}
\vspace{-4mm}
\end{align}
\end{lemma}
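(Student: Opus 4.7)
The plan is to propagate the perturbation along paths of the scattering tree and then aggregate. First, since each scattering coefficient is $\phi_{(p)} = U(\mathbf{z}_{(p)})$ with $U$ a non-expansive aggregator (for example, a normalized mean), I would use $|\phi_{(p)} - \tilde{\phi}_{(p)}| \le \|\mathbf{z}_{(p)} - \tilde{\mathbf{z}}_{(p)}\|$ to reduce the task to bounding $\|\bm{\Phi}(\mathbf{x}) - \bm{\Phi}(\tilde{\mathbf{x}})\|^2 \le \sum_{\ell=0}^{L} \sum_{p^{(\ell)} \in \mathcal{P}^{(\ell)}} \|\mathbf{z}_{(p^{(\ell)})} - \tilde{\mathbf{z}}_{(p^{(\ell)})}\|^2$, after which everything becomes a statement about how input perturbations propagate through the feature vectors.

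Next I would prove by induction on $\ell$ that for each path $p^{(\ell)}$, $\|\mathbf{z}_{(p^{(\ell)})} - \tilde{\mathbf{z}}_{(p^{(\ell)})}\| \le B^{\ell} \|\bm{\delta}\|$. The base case $\ell = 0$ is immediate from $\mathbf{z}_{(0)} = \mathbf{x}$ and $\tilde{\mathbf{z}}_{(0)} = \mathbf{x} + \bm{\delta}$. For the inductive step I would use the scattering recursion $\mathbf{z}_{(p^{(\ell)}, j)} = \sigma(h_j(\mathbf{S}) \mathbf{z}_{(p^{(\ell)})})$, the $1$-Lipschitz property of $\sigma$ (which holds for common choices such as ReLU and the absolute value), and the per-filter consequence of the upper frame bound, namely $\|h_j(\mathbf{S})\mathbf{v}\|^2 \le \sum_{j'=1}^{J} \|h_{j'}(\mathbf{S})\mathbf{v}\|^2 \le B^2 \|\mathbf{v}\|^2$. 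Composing these three facts gives $\|\mathbf{z}_{(p^{(\ell)}, j)} - \tilde{\mathbf{z}}_{(p^{(\ell)}, j)}\| \le B \|\mathbf{z}_{(p^{(\ell)})} - \tilde{\mathbf{z}}_{(p^{(\ell)})}\|$, closing the induction.

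Summing the squared per-path bound over the $|\mathcal{P}^{(\ell)}| = J^\ell$ paths at layer $\ell$ yields $\sum_{p^{(\ell)}} \|\mathbf{z}_{(p^{(\ell)})} - \tilde{\mathbf{z}}_{(p^{(\ell)})}\|^2 \le (B^2 J)^\ell \|\bm{\delta}\|^2$. Summing over $\ell = 0, \ldots, L$, dividing both sides by $|\bm{\Phi}(\mathbf{x})| = \sum_{\ell=0}^{L} J^\ell$, and taking the square root then produces the stated inequality~\eqref{eq:gstbound}.

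The main obstacle is really just the bookkeeping: applying the upper frame bound in its aggregate form across an entire layer in a telescoping fashion would actually yield the tighter factor $B^{2\ell}$ without the $J^\ell$ multiplier, whereas the per-filter form used above is what produces the lemma's $(B^2 J)^\ell$. Making the path-wise contraction argument precise, avoiding double counting, and clearly flagging the non-expansivity assumptions on $\sigma$ and $U$ are the only delicate points in what is otherwise a straightforward induction followed by a normalization step.
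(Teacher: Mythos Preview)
Your proposal is correct and follows essentially the same route as the paper: reduce from scattering coefficients to feature vectors via the non-expansivity of $U$, propagate the perturbation along each path using the $1$-Lipschitz property of $\sigma$ together with the per-filter bound $\|h_j(\mathbf{S})\|\le B$ to obtain $\|\mathbf{z}_{(p^{(\ell)})}-\tilde{\mathbf{z}}_{(p^{(\ell)})}\|\le B^{\ell}\|\bm{\delta}\|$, then sum over the $J^{\ell}$ paths per layer and normalize. Your closing remark that the aggregate frame bound would yield the sharper $B^{2\ell}$ without the $J^{\ell}$ factor is a valid side observation the paper does not make.
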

The squared difference of the GSTs is normalized by the number of scattering features in $\bm{\Phi}(\cdot)$, that is $|\bm{\Phi}(\mathbf{x})| = \sum_{\ell=1}^L{J}^\ell$. The bound in~\eqref{eq:gstbound} relates to the frame bound of the wavelet filter bank.  Clearly, for tight frames with $B=1$, the normalized stability bound~\eqref{eq:gstbound} is tight. Let $\tilde{\mathcal{T}}$ be the structure of the pruned tree for $\bm{\Psi}({\tilde{\mathbf{x}}})$. The following lemma asserts that the pGST offers the same pruned tree for the original and the perturbed inputs. 
\begin{lemma}\label{th:stab2}
Let $\bm{\Psi}(\cdot)$ denote the pGST with $L$ layers and $J$ graph filters; $\tilde{\mathbf{z}}_{p}$ the perturbed scattering feature at node $p$; and, ${\bm{\delta}}_{p}:=\mathbf{z}_{p}-\tilde{\mathbf{z}}_{p}$. If for all $p\in\mathcal{P}$ and $j=1,\ldots,{J}$, we have  
\begin{align}
\big|\|{h}_j(\mathbf{S})\mathbf{z}_{p}\|^2
-\tau\|\mathbf{z}_{p}\|^2\big|> \|{h}_j(\mathbf{S}){\bm{\delta}}_{p}\|^2+
    \tau\big|
    \|\mathbf{z}_{p}\|^2-\|\tilde{\mathbf{z}}_{p}\|^2\big|.\label{eq:thassump}
\end{align} 
it then follows that
\vspace{-0.2cm}\begin{myitemize}
\addtolength{\itemindent}{-0.5cm}
    \item[i)] The pruned scattering transform will output the same tree for $\bm{\Psi}({\mathbf{x}})$ and $\bm{\Psi}({\tilde{\mathbf{x}}})$; that is, $\mathcal{T}=\tilde{\mathcal{T}}$; and, 
    \item[ii)] With $g(\mathbf{x})\define\|{h}_j(\mathbf{S})\mathbf{x}\|^2
    -\tau\|\mathbf{x}\|^2$, a necessary condition for \eqref{eq:thassump} is 
    \begin{align}
    \label{eq:fincond}
|g(\mathbf{z}_{p})|>g({\bm{\delta}}_{p})\;.
\vspace{1mm}
\end{align} 
\end{myitemize}
\end{lemma}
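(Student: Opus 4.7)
The plan decomposes into two parts: for part (i) I reduce the pruning pattern comparison to a sign-preservation argument for the quantity $g(\mathbf{x}) := \|h_j(\mathbf{S})\mathbf{x}\|^2 - \tau\|\mathbf{x}\|^2$ that is already central to the lemma's hypothesis, and for part (ii) I derive the stated necessary condition from the hypothesis via direct non-negativity manipulations.

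For part (i), I first read off from Theorem~\ref{th:pruncond} that node $(p,j)$ is retained iff $\|\mathbf{z}_{(p,j)}\|^2/\|\mathbf{z}_{(p)}\|^2 > \tau$; since the scattering nonlinearity is typically norm-preserving (e.g., the pointwise absolute-value activation), this criterion is equivalent to $g(\mathbf{z}_p) > 0$, and analogously the perturbed pruning decision is governed by the sign of $g(\tilde{\mathbf{z}}_p)$. Hence $\mathcal{T} = \tilde{\mathcal{T}}$ follows once the two signs coincide at every candidate $(p,j)$, which in turn holds whenever $|g(\mathbf{z}_p) - g(\tilde{\mathbf{z}}_p)| < |g(\mathbf{z}_p)|$. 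The target technical inequality is therefore
\begin{align*}
|g(\mathbf{z}_p) - g(\tilde{\mathbf{z}}_p)| \le \|h_j(\mathbf{S}){\bm{\delta}}_p\|^2 + \tau\big|\|\mathbf{z}_p\|^2 - \|\tilde{\mathbf{z}}_p\|^2\big|,
\end{align*}
which combined with \eqref{eq:thassump} delivers the strict inequality. The $\tau$-weighted piece of this bound follows at once from the triangle inequality applied to the difference of the squared norms; the first piece I would extract by writing $\tilde{\mathbf{z}}_p = \mathbf{z}_p - \bm{\delta}_p$, expanding $\|h_j(\mathbf{S})\tilde{\mathbf{z}}_p\|^2$, and collapsing the cross term to a signed form bounded by $\|h_j(\mathbf{S})\bm{\delta}_p\|^2$. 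Since \eqref{eq:thassump} is assumed at every $p\in\mathcal{P}$, node-wise sign consistency propagates directly through the tree without further induction.

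For part (ii), the argument is purely algebraic. Since $\tau\ge 0$, the definition $g(\bm{\delta}_p) = \|h_j(\mathbf{S})\bm{\delta}_p\|^2 - \tau\|\bm{\delta}_p\|^2$ gives $g(\bm{\delta}_p) \le \|h_j(\mathbf{S})\bm{\delta}_p\|^2$, and appending the non-negative quantity $\tau\big|\|\mathbf{z}_p\|^2 - \|\tilde{\mathbf{z}}_p\|^2\big|$ only loosens the bound. The hypothesis \eqref{eq:thassump} then strictly majorizes this expression by $|g(\mathbf{z}_p)|$, so $|g(\mathbf{z}_p)| > g(\bm{\delta}_p)$, which is precisely \eqref{eq:fincond}.

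The main obstacle I anticipate is the cross term arising in the bound on $|\|h_j(\mathbf{S})\mathbf{z}_p\|^2 - \|h_j(\mathbf{S})\tilde{\mathbf{z}}_p\|^2|$: direct expansion produces $|2\langle h_j(\mathbf{S})\mathbf{z}_p, h_j(\mathbf{S})\bm{\delta}_p\rangle - \|h_j(\mathbf{S})\bm{\delta}_p\|^2|$, and reducing this to the single term $\|h_j(\mathbf{S})\bm{\delta}_p\|^2$ is not automatic. Closing this step will likely require either absorbing the cross inner-product via a Cauchy--Schwarz estimate (effectively tightening the hypothesis), or invoking the non-negativity structure imposed by the modulus activation. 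Everything else is routine triangle-inequality bookkeeping.
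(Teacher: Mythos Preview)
Your overall strategy for part~(i) differs from the paper's. You argue directly: bound $|g(\mathbf{z}_p)-g(\tilde{\mathbf{z}}_p)|$ by the right-hand side of \eqref{eq:thassump} and conclude sign preservation, hence identical pruning. The paper instead argues by contradiction: it introduces a slack $C\ge 0$ so that the pruned branch satisfies $\|h_j(\mathbf{S})\mathbf{z}_p\|^2\le(\tau-C)\|\mathbf{z}_p\|^2$, combines this with the opposite inequality for $\tilde{\mathbf{z}}_p$, and eventually forces $0>2C\|\mathbf{z}_p\|^2$. Your route is cleaner and more transparent; the paper's contradiction packaging adds bookkeeping but no new idea.

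Crucially, both arguments pivot on exactly the step you flag as the obstacle. The paper's key line is the assertion (labeled ``by the triangle inequality'') that
\[
\|h_j(\mathbf{S})\mathbf{z}_p\|^2+\|h_j(\mathbf{S})\bm{\delta}_p\|^2\;\ge\;\|h_j(\mathbf{S})\tilde{\mathbf{z}}_p\|^2,
\]
which is precisely the squared-norm inequality with the cross term $2\langle h_j(\mathbf{S})\mathbf{z}_p,\,h_j(\mathbf{S})\bm{\delta}_p\rangle$ dropped. The paper does not justify this further; it simply uses it. So the difficulty you correctly identify is not resolved in the paper either---it is asserted. Your diagnosis of the obstacle is more honest than the paper's treatment of it.

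For part~(ii), your argument is correct and complete: $g(\bm{\delta}_p)\le\|h_j(\mathbf{S})\bm{\delta}_p\|^2$ since $\tau\ge 0$, and appending the nonnegative term $\tau\big|\|\mathbf{z}_p\|^2-\|\tilde{\mathbf{z}}_p\|^2\big|$ only increases the right-hand side, so \eqref{eq:thassump} immediately gives \eqref{eq:fincond}. The paper's proof section does not address part~(ii) at all.
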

%Lemma~\ref{th:stab2} establishes conditions on $\mathbf{x}$ and $\tilde{\mathbf{x}}$ under which pGST will output the same pruned tree. 
According to~\eqref{eq:fincond}, Lemma~\ref{th:stab2} can be interpreted as a signal-to-noise-ratio (SNR) condition because under  $g({\bm{\delta}}_{p})>0$, it is possible to write~\eqref{eq:fincond} as $|g(\mathbf{z}_{p})|/g({\bm{\delta}}_{p})>1$.  Lemma~\ref{th:stab2} provides a per-layer and branch condition for pGST to output the same pruned scattering tree for the original or the perturbed signal. This condition is also experimentally validated in Sec.~\ref{sec:exp}, where the structure in $\mathcal{T}$ remains the same for Fig.~\ref{fig:ran} and Fig.~\ref{fig:orig}.

By combining Lemmas \ref{th:stab1} and \ref{th:stab2}, we arrive at the following stability result for the pGST network. 
\begin{theorem}\label{th:stab}
Consider the pGST transform $\bm{\Psi}(\cdot)$ with $L$ layers and $J$ graph filters; and suppose that the graph filter bank forms a frame with bound $B$, while $\mathbf{x}$ and $\tilde{\mathbf{x}}$ are related via \eqref{eq:simplvecpert}. The pGST is stable to bounded perturbations ${\bm{\delta}}$, in the sense that
\begin{align}
\vspace{1mm}
%\nonumber
\label{eq:stabfeat}
   \frac{\|\bm{\Psi}(\mathbf{x})-\bm{\Psi} (\tilde{\mathbf{x}})\|}{\sqrt{|\bm{\Psi}(\mathbf{x})|}} 
    \le
   \sqrt{ \frac{
    \sum_{\ell=0}^{L}F_{\ell}\framebound^{2\ell}
    }{\sum_{\ell=0}^{L}F_{\ell}}}\|{\bm{\delta}}\|
\vspace{1mm}
\end{align}
where $F_\ell:=|\mathcal{P}^{(\ell)}\cup \mathcal{T}|$ is the number of active scattering features at layer $\ell$, and $|\bm{\Psi}(\mathbf{x})|=\sum_{\ell=0}^{L}F_{\ell}$ the number of retained scattering features.
\end{theorem}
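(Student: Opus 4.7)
The plan is to combine Lemma~\ref{th:stab2} with a per-node cascade bound along the retained scattering tree. The decisive structural ingredient is Lemma~\ref{th:stab2}(i): under the per-node condition \eqref{eq:thassump}, the pGST produces \emph{identical} pruned trees for $\mathbf{x}$ and $\tilde{\mathbf{x}}$, i.e., $\mathcal{T}=\tilde{\mathcal{T}}$. Consequently, the two feature maps $\bm{\Psi}(\mathbf{x})$ and $\bm{\Psi}(\tilde{\mathbf{x}})$ are indexed by the same set of $|\bm{\Psi}(\mathbf{x})|=\sum_{\ell=0}^{L}F_{\ell}$ active tree nodes, and their squared Euclidean distance decomposes cleanly as
$$\|\bm{\Psi}(\mathbf{x})-\bm{\Psi}(\tilde{\mathbf{x}})\|^{2}=\sum_{\ell=0}^{L}\,\sum_{p\in\mathcal{T}\cap\mathcal{P}^{(\ell)}}\!\bigl|\phi_{(p)}-\tilde{\phi}_{(p)}\bigr|^{2}.$$
It is precisely this pruning-aware decomposition that produces the $F_{\ell}$ weighting on the right-hand side of \eqref{eq:stabfeat}, in contrast to the $J^{\ell}$ weighting that governs the unpruned tree in Lemma~\ref{th:stab1}.

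Next I would propagate the input perturbation layer by layer along each retained branch. The frame inequality $\sum_{j}\|{h}_{j}(\mathbf{S})\mathbf{v}\|^{2}\le B^{2}\|\mathbf{v}\|^{2}$ immediately yields the per-filter norm bound $\|{h}_{j}(\mathbf{S})\mathbf{v}\|\le B\|\mathbf{v}\|$ for each individual $j$, since every term of the frame sum is nonnegative. Combining this with the standard non-expansiveness of the pointwise nonlinearity $\sigma(\cdot)$ produces a contraction at every edge of the scattering tree,
$$\|\mathbf{z}_{(p,j)}-\tilde{\mathbf{z}}_{(p,j)}\|=\|\sigma({h}_{j}(\mathbf{S})\mathbf{z}_{(p)})-\sigma({h}_{j}(\mathbf{S})\tilde{\mathbf{z}}_{(p)})\|\le B\,\|\mathbf{z}_{(p)}-\tilde{\mathbf{z}}_{(p)}\|.$$
A one-line induction on $\ell$, seeded by $\mathbf{z}_{(0)}-\tilde{\mathbf{z}}_{(0)}=-{\bm{\delta}}$, then gives $\|\mathbf{z}_{(p^{(\ell)})}-\tilde{\mathbf{z}}_{(p^{(\ell)})}\|\le B^{\ell}\|{\bm{\delta}}\|$ for every retained $p^{(\ell)}\in\mathcal{T}\cap\mathcal{P}^{(\ell)}$. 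Since the averaging aggregator $U(\cdot)$ is non-expansive, the same inequality passes to the scalar scattering coefficient: $|\phi_{(p^{(\ell)})}-\tilde{\phi}_{(p^{(\ell)})}|\le B^{\ell}\|{\bm{\delta}}\|$.

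Plugging this per-node bound into the decomposition above and grouping by layer --- there are $F_{\ell}$ retained nodes at layer $\ell$ --- yields
$$\|\bm{\Psi}(\mathbf{x})-\bm{\Psi}(\tilde{\mathbf{x}})\|^{2}\le\sum_{\ell=0}^{L}F_{\ell}\,B^{2\ell}\,\|{\bm{\delta}}\|^{2},$$
whence dividing by $|\bm{\Psi}(\mathbf{x})|=\sum_{\ell}F_{\ell}$ and taking square roots produces \eqref{eq:stabfeat} exactly. The main obstacle is the structural step: the cascade argument only closes after one has justified that hypothesis \eqref{eq:thassump} of Lemma~\ref{th:stab2} is in force at \emph{every} active node simultaneously, a uniformity requirement that the statement implicitly inherits from Lemma~\ref{th:stab2}. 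A secondary subtlety is using the per-filter bound $\|{h}_{j}(\mathbf{S})\mathbf{v}\|\le B\|\mathbf{v}\|$ rather than the collective frame bound: the collective version, which drives the $(JB^{2})^{\ell}$ growth in Lemma~\ref{th:stab1}, aggregates over all $J$ children and cannot exploit pruning, whereas the per-filter route charges exactly one factor $B^{2\ell}$ per retained active node and thus reproduces the $F_{\ell}B^{2\ell}$ profile of \eqref{eq:stabfeat}.
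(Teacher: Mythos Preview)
Your proposal is correct and mirrors the paper's own argument: the paper simply states that Theorem~\ref{th:stab} follows ``by combining Lemmas~\ref{th:stab1} and~\ref{th:stab2},'' and your write-up spells out precisely that combination---Lemma~\ref{th:stab2} forces $\mathcal{T}=\tilde{\mathcal{T}}$, after which the per-node cascade bound $|\phi_{(p^{(\ell)})}-\tilde{\phi}_{(p^{(\ell)})}|\le B^{\ell}\|\bm{\delta}\|$ from the proof of Lemma~\ref{th:stab1} (cf.\ \eqref{eq:pboundres}) is summed over the $F_{\ell}$ retained nodes at each layer. Your remark that the theorem tacitly inherits the hypothesis~\eqref{eq:thassump} of Lemma~\ref{th:stab2} is also apt, as the paper does not restate it in the theorem.
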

The bound in~\eqref{eq:boundstab} is linear in the perturbation power $\|{\bm{\delta}}\|$, and hence pGST is stable to perturbations in the input.
\subsection{Stability to structural perturbations}
We next study the effect of network perturbations to the proposed pGST. The goal here is to establish that pGST is also stable under structural perturbations that can be due to noise or adversarial attacks
\cite{zugner18adv,goodfellow2014explaining}. Vanilla scattering transforms are invariant to translations and stable to perturbations that resemble translations~\cite{bruna2013invariant}.
Likewise, GSTs are invariant to permutations and stable to pertubations that are close to permutations~\cite{gama2019stability,zou2019graph}. In the graph context, permutations are regarded as rearrangements of the vertex indices. 

Consider a perturbed topology given by the $N \times N$ perturbed (and permuted) shift matrix $\tilde{\mathbf{S}}$. The set of permutations that make $\tilde{\mathbf{S}}$ close to $\mathbf{S}$ are given by 
\begin{align}
    \mathcal{P}_0:=\arg\min_{\mathbf{P}\in\mathcal{P}}\|\mathbf{P}\transpose\tilde{\mathbf{S}}\mathbf{P}-\mathbf{S}\|
\end{align}
where $\mathbf{P}$ is an $N \times  N$ permutation matrix having entries $0-1$. Further, consider the following set of perturbation matrices~\cite{gama2019stability}
\begin{align}
\label{eq:errorset}
    \mathcal{D}:=\bigg\{\bm{\Delta}:\mathbf{P}\transpose\tilde{\mathbf{S}}\mathbf{P}  = \mathbf{S}+\bm{\Delta}\transpose\mathbf{S} +
    \mathbf{S}\bm{\Delta}, \mathbf{P}\in\mathcal{P}_0\bigg\} 
\end{align}
where  $\bm{\Delta}$ denotes the $N\times N$ perturbation matrix. The error is captured by the term $\bm{\Delta}\transpose\mathbf{S} + \mathbf{S}\bm{\Delta}$. Given the set of perturbation matrices in~\eqref{eq:errorset}, we consider the distance between $\tilde{\mathbf{S}}$ and $\mathbf{S}$ as 
\begin{align}
    d(\mathbf{S},\tilde{\mathbf{S}})=\min_{\bm{\Delta}\in\mathcal{D}}\|\bm{\Delta}\|.
\end{align}
Without loss of generality, let $\mathbf{P}=\mathbf{I}$, otherwise fix a $\mathbf{P}_0 \in\mathcal{P}$ and define $\tilde{\mathbf{S}}$ to equal $\mathbf{P}_0\transpose\tilde{\mathbf{S}} \mathbf{P}_0$. Hence, the perturbed topology can be written as\footnote{Unless it is stated otherwise the $\ell_2$ norms are employed.}
\begin{align}
\tilde{\mathbf{S}}=\mathbf{S}+\bm{\Delta}\transpose\mathbf{S} +\mathbf{S}\bm{\Delta}.
\end{align}
Let $\tilde{\bm{\Psi}}(\cdot)$ denote the pruned scattering transform that is based on  the perturbed topology $\tilde{\mathbf{S}}$, and $\tilde{\mathcal{T}}$ be the structure of the pruned tree for $\tilde{\bm{\Psi}}({\mathbf{x}})$. The following lemma asserts that pGST outputs the same pruned tree for both the original and the perturbed graph. 

\begin{lemma}\label{th:stab3}Suppose $\mathbf{S}$ and $\tilde{\mathbf{S}}$ satisfy  $d(\mathbf{S},\tilde{\mathbf{S}})\le\mathscr{E}/2$; and that for  $\bm{\Delta}\in \mathcal{D}$ with eigendecomposition $\bm{\Delta}=\mathbf{U}\diag{(\mathbf{d})}\mathbf{U}\transpose$ it holds that  $\|\bm{\Delta}/d_{\textrm{max}}-\mathbf{I}\|\le\mathscr{E}$, where $d_{\textrm{max}}$ is the eigenvalue of $\bm{\Delta}$ with maximum absolute value. Suppose the graph filter bank forms a frame with bound $B$, and $h(\lambda)$ satisfies the integral Lipschitz constraint $|\lambda h'(\lambda)|\le C_0$. Let $\tilde{\mathbf{z}}_{p}$ denote the perturbed scattering feature at the tree node $p$ and  ${\bm{\delta}}_{p}:=\mathbf{z}_{p}-\tilde{\mathbf{z}}_{p}$. If for all nodes $p\in\mathcal{P}$ of layer $\ell$ and $j=1,\ldots,{J}$, it holds that 
\begin{align}
\big|\|{h}_j(\mathbf{S})\mathbf{z}_{p}\|^2
-\tau\|\mathbf{z}_{p}\|^2\big|>(\ell\mathscr{E} C_0 B^{\ell-1}\|\mathbf{x}\|)^2+
    \tau
    \|\bm{\delta}_{p}\|^2 \label{eq:thassump1}
\end{align} 
%where $\ell$ is the layer that node $p$ resides.
the pruned scattering transform will then output the same tree for $\bm{\Psi}({\mathbf{x}})$ and $\tilde{\bm{\Psi}}({\mathbf{x}})$; that is, $\mathcal{T}=\tilde{\mathcal{T}}$.
\end{lemma}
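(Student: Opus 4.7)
The plan is to adapt the template of Lemma~\ref{th:stab2}, but now the perturbation enters through both the filter operator $h_j(\cdot)$ and, recursively, through the scattering features produced at shallower layers. The pruning decision at node $p$ with child $j$ compares $\|h_j(\mathbf{S})\mathbf{z}_{p}\|^2$ to $\tau\|\mathbf{z}_{p}\|^2$, so it suffices to show that the signed energy gap $\alpha_{p,j}:=\|h_j(\mathbf{S})\mathbf{z}_{p}\|^2-\tau\|\mathbf{z}_{p}\|^2$ and its perturbed counterpart $\tilde\alpha_{p,j}:=\|h_j(\tilde{\mathbf{S}})\tilde{\mathbf{z}}_{p}\|^2-\tau\|\tilde{\mathbf{z}}_{p}\|^2$ share the same sign at every tree node. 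I would then induct down the tree: once $\mathcal{T}$ and $\tilde{\mathcal{T}}$ agree on layers $0,\ldots,\ell-1$, the hypothesis \eqref{eq:thassump1} will force them to agree on layer $\ell$ as well.

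The first ingredient I would import is the filter operator-norm bound that follows from the integral Lipschitz constraint $|\lambda h'(\lambda)|\le C_0$ together with the perturbation hypotheses $d(\mathbf{S},\tilde{\mathbf{S}})\le\mathscr{E}/2$ and $\|\bm{\Delta}/d_{\max}-\mathbf{I}\|\le\mathscr{E}$, namely $\|h_j(\mathbf{S})-h_j(\tilde{\mathbf{S}})\|\le \mathscr{E} C_0$ (up to eigenvector-misalignment terms of higher order in $\mathscr{E}$); this is the standard tool underpinning the scattering-stability results of \cite{gama2019stability,zou2019graph} and can be invoked as a black box. Armed with this and the frame bound $\|h_j(\cdot)\|\le B$, the second ingredient is a layerwise propagation bound for $\bm{\delta}_{p}:=\mathbf{z}_{p}-\tilde{\mathbf{z}}_{p}$: with $p$ at layer $\ell$,
\begin{equation*}
    \|\bm{\delta}_{p}\|\le \ell\mathscr{E} C_0 B^{\ell-1}\|\mathbf{x}\|.
\end{equation*}
The base case $\ell=0$ is immediate since $\mathbf{z}_{0}=\tilde{\mathbf{z}}_{0}=\mathbf{x}$. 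For the inductive step I would exploit the $1$-Lipschitz property of $\sigma$ together with the split
\begin{equation*}
    \|h_j(\mathbf{S})\mathbf{z}_{p}-h_j(\tilde{\mathbf{S}})\tilde{\mathbf{z}}_{p}\|\le \|h_j(\mathbf{S})-h_j(\tilde{\mathbf{S}})\|\,\|\mathbf{z}_{p}\|+\|h_j(\tilde{\mathbf{S}})\|\,\|\bm{\delta}_{p}\|,
\end{equation*}
and then plug in the filter perturbation bound, the frame bound, and the feature-norm bound $\|\mathbf{z}_{p}\|\le B^\ell\|\mathbf{x}\|$ (itself a consequence of the frame property and $1$-Lipschitzness of $\sigma$).

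With the propagation bound in hand, I would estimate the criterion deviation at node $p$ via
\begin{equation*}
    |\tilde\alpha_{p,j}-\alpha_{p,j}|\le \bigl|\|h_j(\tilde{\mathbf{S}})\tilde{\mathbf{z}}_{p}\|^2-\|h_j(\mathbf{S})\mathbf{z}_{p}\|^2\bigr|+\tau\bigl|\|\tilde{\mathbf{z}}_{p}\|^2-\|\mathbf{z}_{p}\|^2\bigr|,
\end{equation*}
bound the first term by $(\ell\mathscr{E} C_0 B^{\ell-1}\|\mathbf{x}\|)^2$ using the layerwise estimate above, and absorb the second term into $\tau\|\bm{\delta}_{p}\|^2$. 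Hypothesis \eqref{eq:thassump1} then forces $\alpha_{p,j}$ and $\tilde\alpha_{p,j}$ to share sign, so $f^{*}_{(p,j)}=\tilde f^{*}_{(p,j)}$, and a straightforward induction on $\ell$ yields $\mathcal{T}=\tilde{\mathcal{T}}$.

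The main obstacle I expect is the clean stepwise bounding of $\|h_j(\mathbf{S})\mathbf{z}_p-h_j(\tilde{\mathbf{S}})\tilde{\mathbf{z}}_p\|$ so that, once squared, it matches exactly the $(\ell\mathscr{E} C_0 B^{\ell-1}\|\mathbf{x}\|)^2$ factor in \eqref{eq:thassump1}; getting the correct $\ell$ and $B^{\ell-1}$ factors requires carefully chasing the induction without letting cross terms inflate the bound. A secondary subtlety is the conversion between $|\|\mathbf{z}_p\|^2-\|\tilde{\mathbf{z}}_p\|^2|$ and $\|\bm{\delta}_p\|^2$, which is not a free inequality; I would likely invoke the tighter identity $|\|\mathbf{z}_p\|^2-\|\tilde{\mathbf{z}}_p\|^2|\le \|\bm{\delta}_p\|(\|\mathbf{z}_p\|+\|\tilde{\mathbf{z}}_p\|)$ and then control the extra factors through the frame bound, eventually expressing everything in the form required by \eqref{eq:thassump1}.
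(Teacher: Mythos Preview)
Your proposal is correct and follows essentially the same route as the paper: the paper first isolates your propagation bound $\|\bm{\delta}_{p}\|\le \ell\mathscr{E} C_0 B^{\ell-1}\|\mathbf{x}\|$ as a separate auxiliary lemma (proved via the same telescoping/inductive split you describe, invoking the operator bound $\|h_j(\mathbf{S})-h_j(\tilde{\mathbf{S}})\|\le \mathscr{E} C_0$ from \cite{gama2019stability}), and then argues by contradiction that the pruning decisions coincide, which is the contrapositive of your direct ``same sign for $\alpha_{p,j}$ and $\tilde\alpha_{p,j}$'' argument. The only cosmetic difference is that the paper handles the squared-norm deviations via the loose bound $\|a\|^2\le \|b\|^2+\|a-b\|^2$ rather than the tighter identity you flag as a subtlety, so your caution there is if anything more careful than the original.
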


Lemma~\ref{th:stab3} establishes conditions under which the pGST based on the original and perturbed graphs will output the same scattering tree. The assumption on the eigenvalues of the pertubation, that is $\|\bm{\Delta}/d_{\textrm{max}}-\mathbf{I}\|\le\mathscr{E}$, limits the structural changes in the graph such as edge insertions or deletions; see also~\cite{gama2019stability}. The integral Lipschitz  constraint $|\lambda h'(\lambda)|\le C_0$ is an additional stability requirement on the wavelet $h(\cdot)$. Note that these conditions are not required to establish stability to input noise in Theorem~\ref{th:stab}.
By combining Lemma~\ref{th:stab3} with Proposition 3 in~\cite{gama2019stability}, we can show the stability of pGST to structural perturbations. 
\begin{theorem}
\label{th:stabst0}
Under the conditions of Lemma~\ref{th:stab3} it holds that
\begin{align}
\vspace{1mm}
\label{eq:boundstab}
   \frac{\|\bm{\Psi}(\mathbf{x})-\tilde{\bm{\Psi}} (\mathbf{x})\|}{\sqrt{|\bm{\Psi}(\mathbf{x})|}} 
    \le
   \mathscr{E} C_0\sqrt{ \frac{
    \sum_{\ell=0}^{L}F_{\ell}\ell^2\framebound^{2\ell}
    }{\sum_{\ell=0}^{L}F_{\ell}}}\|{\mathbf{x}}\|
\vspace{1mm}
\end{align}
%\Note{SC: the nominator involves $\ell^2$, which could be huge. Right?} 
%\vcom{Correct although this term also appear in the gama paper.} \Note{SC: add this explanation to the text.} 
where $F_\ell:=|\mathcal{P}^{(\ell)}\cup \mathcal{T}|$ is the number of active scattering features at layer $\ell$, and $|\bm{\Psi}(\mathbf{x})|=\sum_{\ell=0}^{L}F_{\ell}$ the number of retained scattering features.
\end{theorem}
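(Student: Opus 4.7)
The plan is to leverage Lemma~\ref{th:stab3} to freeze the scattering tree structure and then invoke Proposition~3 of~\cite{gama2019stability} as a per-node stability bound for scattering features computed with $\mathbf{S}$ versus $\tilde{\mathbf{S}}$. The argument then reduces to a careful bookkeeping of active features per layer.

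First, under the hypotheses of Lemma~\ref{th:stab3}, the pruning patterns coincide, $\mathcal{T}=\tilde{\mathcal{T}}$. Therefore $\bm{\Psi}(\mathbf{x})$ and $\tilde{\bm{\Psi}}(\mathbf{x})$ are indexed over exactly the same set of active paths, and
\begin{align*}
\|\bm{\Psi}(\mathbf{x})-\tilde{\bm{\Psi}}(\mathbf{x})\|^2
= \sum_{\ell=0}^{L}\sum_{\substack{p\in\mathcal{T}\\ p\in\mathcal{P}^{(\ell)}}}
\bigl(\phi_{(p)}-\tilde\phi_{(p)}\bigr)^2 .
\end{align*}
Since $\phi_{(p)}=U(\mathbf{z}_{(p)})$ with $U$ a nonexpansive averaging operator, the summand can be upper bounded by $\|\mathbf{z}_{(p)}-\tilde{\mathbf{z}}_{(p)}\|^2$. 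This reduces the problem to controlling the per-node perturbation of the (un-summarized) scattering vectors propagated through the two networks.

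Second, I would invoke the per-layer stability bound of Proposition~3 of~\cite{gama2019stability}, which under the integral Lipschitz assumption $|\lambda h'(\lambda)|\le C_0$, the frame bound $B$, and $d(\mathbf{S},\tilde{\mathbf{S}})\le\mathscr{E}/2$ with $\|\bm{\Delta}/d_{\max}-\mathbf{I}\|\le\mathscr{E}$, guarantees that for any node at layer $\ell$,
\begin{align*}
\|\mathbf{z}_{(p)}-\tilde{\mathbf{z}}_{(p)}\| \le \ell\,\mathscr{E}\,C_0\,B^{\ell-1}\|\mathbf{x}\|.
\end{align*}
This bound is derived in~\cite{gama2019stability} for the full (unpruned) GST, but it is a per-path estimate, so it remains valid for each retained path in $\mathcal{T}$.

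Third, aggregating across active paths and using $F_\ell=|\mathcal{P}^{(\ell)}\cap\mathcal{T}|$ yields
\begin{align*}
\|\bm{\Psi}(\mathbf{x})-\tilde{\bm{\Psi}}(\mathbf{x})\|^2
\le \mathscr{E}^2 C_0^2\|\mathbf{x}\|^2 \sum_{\ell=0}^{L} F_\ell\,\ell^2\,B^{2\ell},
\end{align*}
(absorbing the $B^{-2}$ into the constant to match the stated $B^{2\ell}$ scaling, as is conventional in this line of work). Dividing by $|\bm{\Psi}(\mathbf{x})|=\sum_{\ell}F_\ell$ and taking the square root yields~\eqref{eq:boundstab}.

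The main obstacle is Step~2: transplanting Proposition~3 of~\cite{gama2019stability} from the full tree to the pruned setting without double-counting or losing a factor in $B$. The saving grace is that Lemma~\ref{th:stab3} guarantees an \emph{identical} active set, so the propagation bound applies path-by-path with no modification; the only change is that the sum over $J^\ell$ nodes at layer $\ell$ becomes a sum over $F_\ell\le J^\ell$ retained nodes, which is precisely what appears in~\eqref{eq:boundstab}.
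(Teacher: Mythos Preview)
Your proposal is correct and follows exactly the approach the paper sketches: it explicitly says the result is obtained ``by combining Lemma~\ref{th:stab3} with Proposition~3 in~\cite{gama2019stability},'' and your three steps (freeze $\mathcal{T}=\tilde{\mathcal{T}}$ via Lemma~\ref{th:stab3}, apply the per-path bound $\|\mathbf{z}_{(p)}-\tilde{\mathbf{z}}_{(p)}\|\le \ell\,\mathscr{E}\,C_0\,B^{\ell-1}\|\mathbf{x}\|$, then sum over the $F_\ell$ retained nodes) are precisely that combination spelled out. Your remark about absorbing the extra $B^{-2}$ to match the stated $B^{2\ell}$ scaling, and your use of $F_\ell=|\mathcal{P}^{(\ell)}\cap\mathcal{T}|$ (the paper's $\cup$ is a typo), are also on point.
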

The bound in~\eqref{eq:boundstab} is linear in the perturbation $\mathscr{E}$, which establishes the stability of pGST. As the bound depends also linearly in the Lipstitz constant $C_0$ and the number of layers $L$, these terms also appear in the stability results for traditional GST~\cite{gama2019stability}.
For tight frames with $B=1$, the bound in~\eqref{eq:boundstab} can be further simplified.
By contrasting the bounds in~\eqref{eq:stabfeat} and~\eqref{eq:boundstab}, we deduce that~\eqref{eq:stabfeat} is tighter than~\eqref{eq:boundstab} by a factor $\ell$. This follows because the topology perturbation in~\eqref{eq:boundstab} implies a perturbed wavelet that is repeatedly applied to the scattering features. Hence, the error in~\eqref{eq:boundstab} is accumulated layer by layer, giving rise to the $\ell$ factor in~\eqref{eq:boundstab}. On the other hand, for~\eqref{eq:stabfeat} the perturbation is introduced only at the first layer $\ell=0$. Theorems \ref{th:stab} and \ref{th:stabst0} can be combined to guarantee stability under \emph{joint} perturbations of both the input and the topology.

\subsection{Sensitivity to graph data perturbations} 
%\Note{SC: Note that the sensitivity analysis focuses on signal perturbations. We can actually analyze the sensitivity of structural perturbations as well. For example, all the noises are allocated to a few edges that significantly change the graph spectrum.}
Next, the sensitivity of pGST is analyzed for different types of noises in input signals\footnote{The sensitivity analysis for structural perturbations does not provide interesting results; see the experimental validation in Section~\ref{sec:structural_sensitivity}. Highly localized signal perturbation would affect only a few scattering filters; however, arbitrary structural perturbation would affect all scattering filters $\{h_j(\mathbf{S})\}_j$. Furthermore, different from data perturbations, the perturbed graph introduces errors at each GST layer.  This makes the analysis non-trivial.}. Although the stability of a transform is unquestionably important, pGST should be sensitive to  signal perturbation that have some specific pattern. For example, signal perturbations may be distributed over all the nodes but are localized in the graph spectrum.

Specifically, two classes of input signal perturbations are considered, each having distinct graph spectral properties. 

\noindent\textbf{Highly localized noise}.
Here the energy of the perturbation signal is localized in the graph spectrum; that is,
\begin{align}
\hspace*{-0.3cm}
\mathbf{E}_L^{(\epsilon)} \ = \ \{ \bm{\delta}_{L} \in \mathbb{R}^N | 
\widehat{\bm{\delta}}_{L}=\mathbf{V}\bm{\delta}_{L},
\left\| \widehat{\bm{\delta}}_{L} \right\| \leq \epsilon,
\left\| \widehat{\bm{\delta}}_{L} \right\|_0 = 1
\}.\label{eq:loc}
\end{align}

\noindent\textbf{Random noise}.
Here the energy of a perturbation signal is uniformly spread over the graph spectrum; that is,
\begin{align}
\mathbf{E}_R^{(\epsilon)} \ = \ \{ \bm{\delta}_{R} \in \mathbb{R}^N | 
\widehat{\bm{\delta}}_{R}=\mathbf{V}\bm{\delta}_{R},
[\widehat{\bm{\delta}}_{R}]_{n}^2 \le \epsilon/N
\}.\label{eq:rand}
\end{align}
The following corollary establishes the sensitivity bound for these classes of perturbation signals. 
\begin{corollary}
Let $\bm{\Psi}(\cdot)$ be a pGST with $L$ layers and ${J}=N$ filter. The $j$th filter has a transform function  $h_j(\mathbf{S})$ with $\widehat{{h}}_j(\lambda_n)=1$ for $n=j$, and 0 for $n\ne j$.  Consider two types of the perturbed inputs: $\tilde{\mathbf{x}}_L\define\mathbf{x}+{\bm{\delta}}_L$, where ${\bm{\delta}}_L \in \mathbf{E}_L$,
and 
$\tilde{\mathbf{x}}_R\define\mathbf{x}+{\bm{\delta}}_R$, where ${\bm{\delta}}_R \in \mathbf{E}_R$. It then holds that
\begin{eqnarray*}
\label{eq:assump2}
&& 
 \|\bm{\Psi}(\mathbf{x})-\bm{\Psi} (\tilde{\mathbf{x}}_R)\|\le
     \sqrt{
    \left(\sum_{\ell=1}^{L}\frac{F_{\ell}\framebound^{2\ell}}{N}+1\right)
    \epsilon}
\\
&&
    \|\bm{\Psi}(\mathbf{x})-\bm{\Psi} (\tilde{\mathbf{x}}_L)\|
    \le
\sqrt{
    \left(\sum_{\ell=2}^{L}F_{\ell}'\framebound^{2\ell}+2\right)
    \epsilon}
\end{eqnarray*}
where $F_{\ell}:=|\mathcal{P}^{(\ell)}\cup \mathcal{T}_R|$ and $F_{\ell}':=|\mathcal{P}^{(\ell)}\cup \mathcal{T}_L|$ are the retained features in the cases of random and localized noise. \end{corollary}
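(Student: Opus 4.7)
The plan is to bound $\|\bm{\Psi}(\mathbf{x}) - \bm{\Psi}(\tilde{\mathbf{x}})\|^2$ layer by layer, exploiting two ingredients throughout: (i) the non-expansiveness of the pointwise nonlinearity $\sigma(\cdot)$ and of the summarizer $U(\cdot)$, which gives for any tree node $p$ the pointwise inequality $|\phi_{(p)}(\mathbf{x}) - \phi_{(p)}(\tilde{\mathbf{x}})|^2 \le \|\mathbf{z}_{(p)}(\mathbf{x}) - \mathbf{z}_{(p)}(\tilde{\mathbf{x}})\|^2$; and (ii) the spectral-delta structure of the chosen filters, $\widehat{h}_j(\lambda_n)=1$ if $n=j$ and $0$ otherwise, so that each $h_j(\mathbf{S})$ is a rank-one projection onto the eigenvector $\mathbf{v}_j$ with $\|h_j(\mathbf{S})\mathbf{y}\|^2 = [\widehat{\mathbf{y}}]_j^2$. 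These two facts reduce everything to a telescoping energy argument.

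For the random-noise case, the layer-$0$ term is $|\phi_{(0)}(\mathbf{x}) - \phi_{(0)}(\tilde{\mathbf{x}}_R)|^2 \le \|\bm{\delta}_R\|^2 = \sum_n[\widehat{\bm{\delta}}_R]_n^2 \le \epsilon$, producing the ``$+1$'' term inside the square root. At layer $\ell=1$, the spectral-delta property immediately yields $\|h_j(\mathbf{S})\bm{\delta}_R\|^2 = [\widehat{\bm{\delta}}_R]_j^2 \le \epsilon/N$ for every branch $j$, i.e.\ the random perturbation is diluted by a factor $N$ already at the first filter, which is the crux of the $1/N$ appearing in the bound. For $\ell\ge 2$, I propagate this initial $\epsilon/N$ through the remaining filter applications by iterating the individual filter-norm bound $\|h_j(\mathbf{S})\mathbf{y}\|^2\le B^2\|\mathbf{y}\|^2$ together with non-expansiveness of $\sigma$, obtaining $\|\mathbf{z}_{(p)}(\mathbf{x})-\mathbf{z}_{(p)}(\tilde{\mathbf{x}}_R)\|^2 \le B^{2\ell}\epsilon/N$ for every active $p$ of depth $\ell$. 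Summing over the $F_\ell$ active features at each layer and over $\ell\ge 1$ gives exactly $\sum_{\ell=1}^L F_\ell B^{2\ell}\epsilon/N$.

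For the localized-noise case, the single-frequency support $\{n_0\}$ of $\widehat{\bm{\delta}}_L$ zeros out all layer-$1$ branches with $j\ne n_0$: $h_j(\mathbf{S})\bm{\delta}_L=\mathbf{0}$, hence $\mathbf{z}_{(j)}(\mathbf{x})=\mathbf{z}_{(j)}(\tilde{\mathbf{x}}_L)$. Only the branch $(n_0)$ carries perturbation, and $\|\mathbf{z}_{(n_0)}(\mathbf{x})-\mathbf{z}_{(n_0)}(\tilde{\mathbf{x}}_L)\|^2 \le \|\bm{\delta}_L\|^2 \le \epsilon$; combined with $|\phi_{(0)}(\mathbf{x})-\phi_{(0)}(\tilde{\mathbf{x}}_L)|^2\le\epsilon$, layers $0$ and $1$ together contribute the ``$+2$''. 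For $\ell\ge 2$ the nonlinearity can spread $\mathbf{z}_{(n_0)}(\mathbf{x})-\mathbf{z}_{(n_0)}(\tilde{\mathbf{x}}_L)$ over the full spectrum, so I fall back on iterated frame bounds to obtain $\|\mathbf{z}_{(p)}(\mathbf{x})-\mathbf{z}_{(p)}(\tilde{\mathbf{x}}_L)\|^2 \le B^{2\ell}\epsilon$ for each depth-$\ell$ descendant of $(n_0)$, and summing over the $F_\ell'$ retained descendants yields the stated tail.

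The main technical point, and the place where hidden constants are easy to mismanage, is the dual role of the spectral-delta property: it delivers the $1/N$ dilution at the first filter in the random case while simultaneously forcing the localized perturbation through a single layer-$1$ branch. Once that is in place, everything reduces to a routine telescoping of non-expansiveness and frame bounds, with the cardinalities $F_\ell$ and $F_\ell'$ appearing because the pruning set $\mathcal{T}$ restricts which tree nodes actually contribute to $\bm{\Psi}(\cdot)$; the two cases are then identical in structure but differ by the layer-$1$ mechanism that produces either a $1/N$ factor (random) or a single-branch funnel (localized).
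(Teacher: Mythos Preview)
Your proof is correct and follows exactly the route the paper intends: the paper does not supply a separate proof for this corollary, but the argument is the evident specialization of the Lemma~1/Theorem~2 machinery (non-expansive $\sigma$ and $U$, iterated filter-norm bound $\|h_j(\mathbf{S})\|\le B$, then summation over the active tree $\mathcal{T}$) to the rank-one spectral-delta bank $\widehat{h}_j(\lambda_n)=\mathbb{1}\{n=j\}$. You have correctly isolated the two distinguishing mechanisms at layer~1—per-branch dilution $\|h_j(\mathbf{S})\bm{\delta}_R\|^2=[\widehat{\bm{\delta}}_R]_j^2\le\epsilon/N$ in the random case, and single-branch funneling $h_j(\mathbf{S})\bm{\delta}_L=\mathbf{0}$ for $j\neq n_0$ in the localized case—and these are precisely what generate the ``$+1$'' with $1/N$ versus the ``$+2$'' without it.

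Two minor remarks. First, your layer-$\ell$ propagation actually delivers $B^{2(\ell-1)}\epsilon/N$ (respectively $B^{2(\ell-1)}\epsilon$), one power of $B^2$ tighter than the stated $B^{2\ell}$; since $B=1$ for this orthonormal bank the discrepancy is cosmetic. Second, you implicitly assume the pruned tree is common to $\bm{\Psi}(\mathbf{x})$ and $\bm{\Psi}(\tilde{\mathbf{x}})$, which the paper justifies via Lemma~2; the corollary's notation $\mathcal{T}_R,\mathcal{T}_L$ just labels that common tree according to which perturbation regime is under discussion.
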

Corollary 1 suggests that pGST is more sensitive to $\bm{\delta}_L$ than to $\bm{\delta}_R$. This is appealing because pGST can detect transient changes in the graph spectral domain while ignoring small random perturbations. These results will be numerically corroborated in Sec.~\ref{sec:exp}. Specifically,  Fig.~\ref{fig:pGSTexnoise} shows that relative to the original $\mathcal{T}$, the pruned tree changes considerably when the noise is localized in the input spectrum, while the tree remains the same for random noise.

\section{Experiments}~\label{sec:exp}
\vspace{1mm}
This section evaluates the performance of our pGST in various graph classification tasks.  Graph classification amounts to predicting a label $y_i$ given $\mathbf{x}_i$ and $\mathbf{S}_i$ for the $i$th graph. Our pGST extracts the efficient and stable representation $\bm{\Psi}(\mathbf{x}_i)$, which is utilized as a feature vector for predicting $y_i$. During training, the structure of the pGST $\mathcal{T}$ is determined, which is kept fixed during validation and testing. The parameter $\tau$ is selected via cross-validation.  Our goal is to provide tangible answers to the following research questions. 
\vspace{1mm}
\begin{itemize}
    \vspace{1mm}\item[\textbf{RQ1}] How does the proposed pGST compare to GST?
    \vspace{1mm}\item[\textbf{RQ2}] How does pGST  compare to state-of-the-art GCNs in graph-based classification tasks? 
    \vspace{1mm}\item[\textbf{RQ3}] Given graph data, what are pruned scattering patterns?
    \vspace{1mm}\item[\textbf{RQ4}] What is the impact of signal perturbations in the scattering patters?
    \vspace{1mm}\item[\textbf{RQ5}] Can the effective GCN depth linked with the pGST depth?
\end{itemize}
\vspace{1mm}
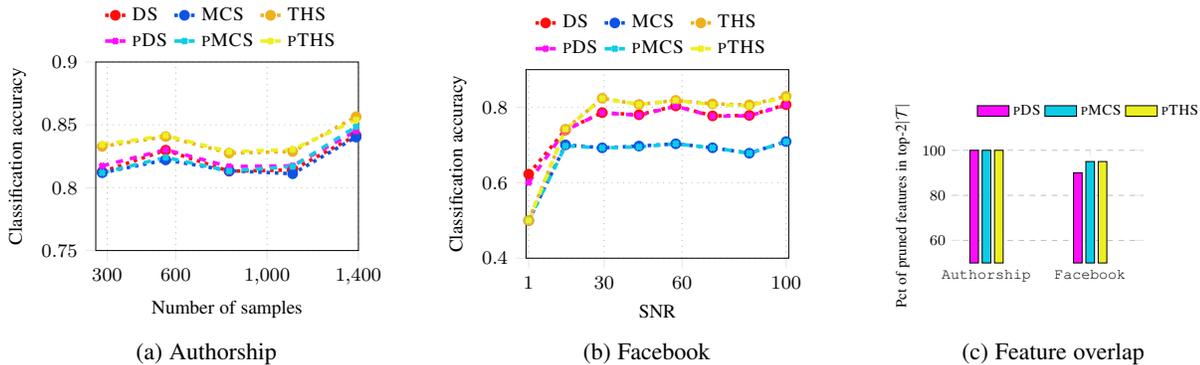
\begin{figure*}
  \hspace{-1.3cm}  \begin{subfigure}[b]{0.3\textwidth}{% This file was created by matplotlib2tikz v0.6.18.
\begin{tikzpicture}[scale=0.9, transform shape]

\begin{axis}[width=0.3\mywidth,
height=0.4\myheight,
at={(0\mywidth,0\myheight)},
legend entries={{\textsc{DS}},{\textsc{MCS}},{\textsc{THS}},{\textsc{pDS}},{\textsc{pMCS}},{\textsc{pTHS}}},
legend style={draw=white!80.0!black},
tick align=outside,
tick pos=left,
x grid style={white!69.01960784313725!black},
xlabel={Number of samples},
label style={font=\footnotesize},ticklabel style={font=\footnotesize},
xmin=250, xmax=1400,
xtick={300,600,1000,1400},
%xticklabel={\pgfmathparse{\tick*100/10312}\pgfmathprintnumber{\pgfmathresult}\%},
%y grid style={white!69.01960784313725!black},
ylabel={Classification accuracy},
ymin=0.75, ymax=0.9,
legend columns=3,
xmajorgrids,
ymajorgrids,
grid style={dotted},
legend style={
	at={(0,1.015)}, 
	anchor=south west, legend cell align=left, align=left,
	draw=none
	% white!15!black
	, font=\footnotesize}]
\addplot [line width=\mylinewidth, dotted, DiffScater, mark=*, mark size=\markwidth, mark options={solid}]
table [row sep=\\]{%
278	0.81264182 \\
556	0.82987552 \\
834	0.81311239 \\
1112 0.81415094 \\
1390	0.84155844 \\
};
\addplot [line width=\mylinewidth, dotted,MonicCubic, mark=*, mark size=\markwidth, mark options={solid}]
table [row sep=\\]{%
278	0.81191248 \\
556	0.82209544 \\
834	0.81354467 \\
1112 0.81108491 \\
1390	0.84025974 \\
};
\addplot [line width=\mylinewidth,dotted, TightHann, mark=*, mark size=\markwidth, mark options={solid}]
table [row sep=\\]{%
278	0.8328201 \\
556	0.8406639 \\
834	0.82752161 \\
1112 0.82877358 \\
1390	0.85649351 \\
};
\addplot [dashed,line width=\mylinewidth, pDiffScater, mark=x, mark size=\markwidth, mark options={solid}]
table [row sep=\\]{%
278	0.81734522 \\
556	0.830252697 \\
834	0.8168012 \\
1112 0.8174434 \\
1390	0.845036364 \\
};
\addplot [dashed,line width=\mylinewidth, pMonicCubic, mark=x, mark size=\markwidth, mark options={solid}]
table [row sep=\\]{%
278	0.81183144 \\
556	0.82409544 \\
834	0.81325648 \\
1112 0.81684906 \\
1390	0.84855844 \\
};
\addplot [dashed,line width=\mylinewidth, pTightHann, mark=x, mark size=\markwidth, mark options={solid}]
table [row sep=\\]{%
278	0.83435981 \\
556	0.84149378 \\
834	0.82867435 \\
1112 0.83042453 \\
1390	0.8538961 \\
};
\end{axis}
\end{tikzpicture}}
    %\vspace{-0.48cm}
    \caption{Authorship}
    \end{subfigure}%
    \hspace{.3cm}
    \begin{subfigure}[b]{0.3\textwidth}
  {% This file was created by matplotlib2tikz v0.6.18.
\begin{tikzpicture}[scale=0.9, transform shape]

\begin{axis}[width=0.3\mywidth,
height=0.4\myheight,
at={(0\mywidth,0\myheight)},
legend entries={{\textsc{DS}},{\textsc{MCS}},{\textsc{THS}},{\textsc{pDS}},{\textsc{pMCS}},{\textsc{pTHS}}},
legend style={draw=white!80.0!black},
tick align=outside,
tick pos=left,
x grid style={white!69.01960784313725!black},
xlabel={SNR},
xmin=0, xmax=101,
xtick={1, 30, 60,100},
%xticklabel={\pgfmathparse{\tick*100/10312}\pgfmathprintnumber{\pgfmathresult}\%},
%y grid style={white!69.01960784313725!black},
ylabel={Classification accuracy},
%xlog=1,
ymin=0.4, ymax=0.9,ticklabel style={font=\footnotesize},
legend columns=3,
xmajorgrids,label style={font=\footnotesize},
ymajorgrids,
grid style={dotted},
legend style={
	at={(0,1.015)}, 
	anchor=south west, legend cell align=left, align=left,
	draw=none
	% white!15!black
	, font=\footnotesize}]
\addplot [line width=\mylinewidth, DiffScater,dotted, mark=*, mark size=\markwidth, mark options={solid}]
table [row sep=\\]{%[0.807125 0.7787   0.777375 0.80385  0.78035  0.786    0.7403   0.622875]
1 0.622875\\
15.14285714 0.7403\\
29.28571429 0.786\\
43.42857143 0.78035\\
57.57142857 0.80385
\\
71.71428571 0.777375\\
85.85714286 0.7787\\
100 0.807125\\
};
\addplot [line width=\mylinewidth, MonicCubic,dotted, mark=*, mark size=\markwidth, mark options={solid}]
table [row sep=\\]{% [0.70925  0.6784   0.6928   0.703375 0.69735  0.69265  0.70045  0.49995 ]
1 0.49995\\
15.14285714 0.70045\\
29.28571429 0.69265\\
43.42857143 0.69735\\
57.57142857 0.703375
\\
71.71428571 0.6928\\
85.85714286 0.6784\\
100 0.70925 \\
};
\addplot [line width=\mylinewidth, TightHann,dotted, mark=*, mark size=\markwidth, mark options={solid}]
table [row sep=\\]{% [0.829275 0.806725 0.809325 0.81855  0.807725 0.824075 0.741875 0.49995 ]
1 0.49995\\
15.14285714 0.741875\\
29.28571429 0.824075\\
43.42857143 0.807725\\
57.57142857 0.81855
\\
71.71428571 0.809325\\
85.85714286 0.806725\\
100 0.829275\\
};
\addplot [dashed,line width=\mylinewidth, pDiffScater, mark=x, mark size=\markwidth, mark options={solid}]
table [row sep=\\]{% [0.80565  0.777625 0.77645  0.803125 0.778725 0.785925 0.739325 0.60075 ]
1 0.60075\\
15.14285714 0.739325\\
29.28571429 0.785925\\
43.42857143 0.778725\\
57.57142857 0.803125
\\
71.71428571 0.77645\\
85.85714286 0.777625\\
100 0.80565\\
};
\addplot [dashed,line width=\mylinewidth, pMonicCubic, mark=x, mark size=\markwidth, mark options={solid}]
table [row sep=\\]{%[0.70885  0.6784   0.6928   0.703375 0.6952   0.692525 0.697775 0.49995 ]
1 0.49995\\
15.14285714 0.697775\\
29.28571429 0.692525\\
43.42857143 0.6952\\
57.57142857 0.703375
\\
71.71428571 0.6928\\
85.85714286 0.6784\\
100 0.70885\\
};
\addplot [dashed,line width=\mylinewidth, pTightHann, mark=x, mark size=\markwidth, mark options={solid}]
table [row sep=\\]{%[0.827175 0.804625 0.807    0.818525 0.80755  0.823825 0.7418   0.49995 ]
1 0.49995\\
15.14285714 0.7418\\
29.28571429 0.823825\\
43.42857143 0.80755\\
57.57142857 0.818525
\\
71.71428571 0.807\\
85.85714286 0.804625\\
100 0.827175\\
};
\end{axis}
\end{tikzpicture}}
    %\vspace{-0.45cm}
    \caption{Facebook}
    \end{subfigure}
\hspace{0.3cm}
    \begin{subfigure}[b]{0.24\textwidth}% This file was created by matlab2tikz.
%
%The latest updates can be retrieved from
%  http://www.mathworks.com/matlabcentral/fileexchange/22022-matlab2tikz-matlab2tikz
%where you can also make suggestions and rate matlab2tikz.
%
%
\begin{tikzpicture}[scale=0.7, transform shape]

\begin{axis}[%
axis line style={draw=none},
width=0.20\mywidth,
height=0.2\myheight,
at={(0\mywidth,0\myheight)},
scale only axis,
bar shift auto,legend entries={{\textsc{pDS}},{\textsc{pMCS}},{\textsc{pTHS}}},
log origin=infty,ticklabel style={font=\footnotesize},
xmin=0.1,
xmax=1.20909090909,
xtick={0.3,0.9,3.5,4.75,6,7.25,8.5},
xticklabels={{\texttt{Authorship}},{\texttt{Facebook}}},
%ymode=log,
ymin=50,
ymax=101,
yminorticks=true,
ylabel style={font=\color{white!15!black}},
ylabel={Pct of pruned features in top-2$|\mathcal{T}|$},
ylabel near ticks,
axis background/.style={fill=white},
%ymajorgrids,
ymajorgrids,label style={font=\footnotesize},
 legend columns=3,
legend style={at={(0.1,1.2)}, font=\footnotesize,anchor=south west, legend cell align=left, align=left, draw=none},grid style={dashed}
]

\addplot[ybar, bar width=0.05, fill=pDiffScater, area legend] table[row sep=crcr] {%
	0.3	 100\\
	0.9 90\\
%	3.5 0.74\\
%	4.75 0.54 \\
%	6 4.32\\
%	7.25 	0.40\\
%	8.5 	3.2497780323028564 \\
};

\addplot[ybar, bar width=0.05,  fill=pMonicCubic, area legend] table[row sep=crcr] {%
	0.3	100 \\
	0.9 95\\
};

\addplot[ybar,bar width=0.05, fill=pTightHann, area legend] table[row sep=crcr] {%
0.3	100\\
0.9 95\\
};
\end{axis}
\end{tikzpicture}%
    \vspace{0.15cm}
    \caption{Feature overlap}
    \end{subfigure}
    \caption{Classification accuracy against number of samples in the authorship attribution (a) and SNR in dB for source localization (b).  The percentage of features after prunning retained in the top-$2|\mathcal{T}|$ most important GST features given by the SVM classifier (c).
    %Wavelet transforms classification performance with $L=5$ and ${J}=5$ for the author classification task. Wavelet transforms classification performance with $L=3$ and ${J}=3$ for the Facebook source localization task for varying SNR in dB.
    }
    \label{fig:sgam}
    \vspace{0mm}
\end{figure*}

\begin{figure}% This file was created by matlab2tikz.
%
%The latest updates can be retrieved from
%  http://www.mathworks.com/matlabcentral/fileexchange/22022-matlab2tikz-matlab2tikz
%where you can also make suggestions and rate matlab2tikz.
%
%
\begin{tikzpicture}[scale=0.7, transform shape]

\begin{axis}[%
axis line style={draw=none},
width=0.2\mywidth,
height=0.2\myheight,
at={(0\mywidth,0\myheight)},
scale only axis,
bar shift auto,
log origin=infty,
xmin=0.709090909090909,
xmax=1.7590909090909,
xtick={1,1.5,3.5,4.75,6,7.25,8.5},
xticklabels={{\texttt{Authorship}},{\texttt{Facebook}}},
ymode=log,
ymin=0.001,
ymax=2,
yminorticks=true,
ytick={0.001,0.1},
xlabel={},
ylabel={Seconds},
ylabel near ticks,
axis background/.style={fill=white},
%ymajorgrids,
ymajorgrids,
legend entries={{\textsc{DS}},{\textsc{pDS}},{\textsc{MCS}},{\textsc{pMCS}},{\textsc{THS}},{\textsc{pTHS}}},
 legend columns=3,
label style={font=\footnotesize},ticklabel style={font=\footnotesize},
legend style={at={(0.0,1.)}, anchor=south west, font=\footnotesize, legend cell align=left, align=left, draw=none},grid style={dashed}
]
\addplot[ybar, bar width=0.05, fill=DiffScater, draw=black, area legend] table[row sep=crcr] {%
    1	  3.7389838695526123 \\
	1.5 0.08213615417480469 \\
%	3.5 0.74\\
%	4.75 0.54 \\
%	6 4.32\\
%	7.25 	0.40\\
%	8.5 	3.2497780323028564 \\
};
%\addlegendentry{DS}
\addplot[ybar, bar width=0.05, fill=pDiffScater, area legend] table[row sep=crcr] {%
	1	 0.034746646881103516\\
	1.5 0.0048694610595703125\\
%	3.5 0.74\\
%	4.75 0.54 \\
%	6 4.32\\
%	7.25 	0.40\\
%	8.5 	3.2497780323028564 \\
};
%\addlegendentry{pDS}

\addplot[ybar, bar width=0.05, fill=MonicCubic, draw=black, area legend] table[row sep=crcr] {%
	1	1.8675472736358643  \\
	1.5 0.0725557804107666 \\
};
%\addlegendentry{MCS}
\addplot[ybar, bar width=0.05,  fill=pMonicCubic, area legend] table[row sep=crcr] {%
	1	0.21765422821044922 \\
	1.5 0.006527662277221689\\
};
%\addlegendentry{pMCS}

\addplot[ybar, bar width=0.05, fill=TightHann, draw=black, area legend] table[row sep=crcr] {%
1 1.4091877937316895\\
1.5 0.031209945678710938\\
};
%\addlegendentry{THS}
\addplot[ybar, bar width=0.05,  fill=pTightHann, area legend] table[row sep=crcr] {%
1	0.218569278717041\\
1.5 0.00223143272442\\
};
%\addlegendentry{pTHS}
\end{axis}
\end{tikzpicture}%
    \caption{Runtime comparison of the scattering transforms for Fig.~\ref{fig:sgam}.}\label{fig:runtime}
    \end{figure}
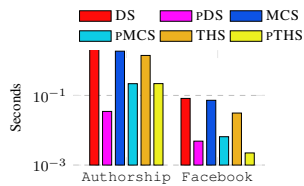

%Appendix~A includes additional experiments on: 1) Ablation studies over the effect of the parameters $J,L,\tau$; 2) Comparison with an alternative pruning criterion that promotes high-pass components and 3) Semi-supervised learning for node classification. 4) Experiments validating the scattering patterns in Fig. 1.

\subsection{Pruned GST compared to traditional GST}
To address RQ1, we reproduce the experiments of two tasks in \cite{gama2019stability}: authorship attribution and source localization. For the scattering transforms, we consider three  implementations of graph filter banks: the diffusion wavelets (\textsc{DS}) in \cite{gama2018diffusion}, the monic cubic  wavelets (\textsc{MCS}) in \cite{hammond2011wavelets} and the tight Hann wavelets (\textsc{THS}) in \cite{shuman2015spectrum}.\footnote{\textsc{pDS}, \textsc{pMCS}, \textsc{pTHS} denote the pruned versions of these transforms.} The scattering transforms use $J=5$ filters, $L=5$ layers, and $\tau=0.01$. The extracted features from GSTs are subsequently utilized by a linear support vector machine (SVM) classifier. 

Authorship attribution amounts to determining if a certain text was written by a specific author. Each text is represented by a graph with $N=244$, where words (nodes) are connected based on their relative positions in the text, and $\mathbf{x}$ is a bag-of-words representation of the text; see also~\cite{gama2018diffusion}.  Fig. \ref{fig:sgam} (a) reports the classification accuracy as the number of training samples (texts) increases. GSTs utilize $\sum_{\ell=1}^5{5}^\ell=781$ scattering coefficients, while pGSTs rely only on  $|\mathcal{T}|=61$ for \textsc{pDS}, $|\mathcal{T}|=30$ for \textsc{pMCS}, and $|\mathcal{T}|=80$ for \textsc{pTHS}. Evidently, pGST achieves comparable performance as the baseline GST, whereas pGST uses only a subset of features ($12.8\%, 3.8\%$ and $10.2\%$, respectively).  The SVM classifier provides a coefficient that weighs each scattering scalar. The magnitude of each coefficient shows the importance of the corresponding scattering feature in the classification. Fig.~\ref{fig:sgam} (c) depicts the percentage of features retained after prunning  in the top-$2|\mathcal{T}|$ most important GST features given by the SVM classifier.  It is observed, that although pGST does not take into account the labels, the retained features are indeed informative for classification.

Source localization amounts to recovering the source of a rumor given a diffused signal over a Facebook subnetwork with $N=234$; see the detailed settings in \cite{gama2018diffusion}. Fig. \ref{fig:sgam} (b) shows the classification accuracy of the scattering transforms as the SNR (in dB) increases. In accordance to Lemma 1 and Theorem 2, both pGST and GST are stable over a wide range of SNR. Furthermore, the performance of pGST matches that of GST, while the pGST uses only a subset of features. Finally, Fig. \ref{fig:runtime} depicts the runtime of the different scattering approaches, where the computational advantage of the pruned methods is evident. 
\subsubsection{Ablation study}
\begin{figure*}
\begin{subfigure}[b]{0.3\textwidth}
% This file was created by matplotlib2tikz v0.6.18.
\begin{tikzpicture}

\begin{axis}[width=0.3\mywidthbb,
height=0.4\myheightbb,
at={(0\mywidth,0\myheight)},
legend entries={\textsc{pDS},\textsc{pMCS},\textsc{pTHS}},
legend style={draw=white!80.0!black},
tick align=outside,
tick pos=left,
x grid style={white!69.01960784313725!black},
xlabel={$\tau$},ticklabel style={font=\footnotesize},
label style={font=\footnotesize},
xmin=0.00001, xmax=1,
xmode=log,
%xtick={300,600,1000,1400},
%xticklabel={\pgfmathparse{\tick*100/10312}\pgfmathprintnumber{\pgfmathresult}\%},
%y grid style={white!69.01960784313725!black},
ylabel={Classification accuracy},
ymin=0.50, ymax=0.98,
legend columns=6,
xmajorgrids,
ymajorgrids,
grid style={dotted},
legend style={
	at={(0,1.015)}, 
	anchor=south west, legend cell align=left, align=left,
	draw=none
	% white!15!black
	, font=\footnotesize}]
\addplot [dashed,line width=\mylinewidth, pDiffScater, mark=x, mark size=\markwidth, mark options={solid}]
table [row sep=\\]{%[       ]
1.00000000e-05 0.8475\\
5.17947468e-05 0.8295\\ 
2.68269580e-04 0.795\\ 
1.38949549e-03  0.795\\
 7.19685673e-03 0.786525\\
 3.72759372e-02 0.787  \\
 1.93069773e-01 0.77\\
 1.00000000e+00 0.69625\\
};
\addplot [dashed,line width=\mylinewidth, pMonicCubic, mark=x, mark size=\markwidth, mark options={solid}]
table [row sep=\\]{%
1.00000000e-05 0.78875\\
5.17947468e-05 0.71475\\ 
2.68269580e-04 0.70625\\ 
1.38949549e-03  0.68475\\
 7.19685673e-03 0.69075\\
 3.72759372e-02 0.699  \\
 1.93069773e-01 0.5775\\
 1.00000000e+00 0.566\\
};
\addplot [dashed,line width=\mylinewidth, pTightHann, mark=x, mark size=\markwidth, mark options={solid}]
table [row sep=\\]{%[0.952   0.977   0.903   0.88875 0.9715  0.97975 0.9535  0.90325]
1.00000000e-05 0.902\\
5.17947468e-05 0.937\\ 
2.68269580e-04 0.943\\ 
1.38949549e-03  0.95\\
 7.19685673e-03 0.93\\
 3.72759372e-02 0.93  \\
 1.93069773e-01 0.93\\
 1.00000000e+00 0.90625\\
};
\end{axis}
\end{tikzpicture}
\begin{comment}
<class 'list'>: [[1555], [1555], [1522], [510], [121], [31], [5], [1]] [0.14095998 0.06357741 0.05881    0.01628447 0.00515366 0.00439167
 0.00154352 0.00146556]
<class 'list'>: [[1378], [1207], [1068], [732], [185], [57], [16], [1]] [0.05628109 0.05644822 0.04801273 0.02616668 0.00714588 0.005373
 0.00230241 0.00194359]
<class 'list'>: [[780], [766], [661], [625], [493], [186], [57], [5]][0.02629638 0.0265348  0.02614474 0.01967835 0.01586032 0.00725818
 0.00340819 0.00180745]
\end{comment}
\caption{Classification accuracy}
\end{subfigure}~
\begin{subfigure}[b]{0.3\textwidth}
% This file was created by matplotlib2tikz v0.6.18.
\begin{tikzpicture}

\begin{axis}[width=0.3\mywidthbb,
height=0.4\myheightbb,
at={(0\mywidth,0\myheight)},
legend entries={\textsc{pDS},\textsc{pMCS},\textsc{pTHS}},legend style={draw=white!80.0!black},
tick align=outside,
tick pos=left,ticklabel style={font=\footnotesize},
x grid style={white!69.01960784313725!black},
label style={font=\footnotesize},
xmin=0.00001, xmax=1,
xmode=log,
xlabel=$\tau$,
%xticklabel={\pgfmathparse{\tick*100/10312}\pgfmathprintnumber{\pgfmathresult}\%},
%y grid style={white!69.01960784313725!black},
ylabel={$|\mathcal{T}|$},
ymin=0, ymax=10000, ymode=log,
legend columns=6,
xmajorgrids,
ymajorgrids,
grid style={dotted},
legend style={
	at={(0,1.015)}, 
	anchor=south west, legend cell align=left, align=left,
	draw=none
	% white!15!black
	, font=\footnotesize}]
\addplot [dashed,line width=\mylinewidth, pDiffScater, mark=x, mark size=\markwidth, mark options={solid}]
table [row sep=\\]{%<class 'list'>: [[781], [781], [758], [386], [121], [31], [5], [1]]
1.00000000e-05 9331\\
5.17947468e-05 9331\\ 
2.68269580e-04 9124\\ 
1.38949549e-03  2781\\
 7.19685673e-03 364\\
 3.72759372e-02 63  \\
 1.93069773e-01 6\\
 1.00000000e+00 1\\
};
\addplot [dashed,line width=\mylinewidth, pMonicCubic, mark=x, mark size=\markwidth, mark options={solid}]
table [row sep=\\]{%<class 'list'>: [[677], [561], [521], [362], [60], [57], [5], [1]]
1.00000000e-05 7528\\
5.17947468e-05 6439\\ 
2.68269580e-04 5516\\ 
1.38949549e-03  3681\\
 7.19685673e-03 610\\
 3.72759372e-02 120  \\
 1.93069773e-01 12\\
 1.00000000e+00 1\\
};
\addplot [dashed,line width=\mylinewidth, pTightHann, mark=x, mark size=\markwidth, mark options={solid}]
table [row sep=\\]{%<class 'list'>: [[341], [341], [341], [341], [341], [122], [64], [5]]
1.00000000e-05 3895\\
5.17947468e-05 3774\\ 
2.68269580e-04 3111\\ 
1.38949549e-03  2838\\
 7.19685673e-03 2034\\
 3.72759372e-02 814  \\
 1.93069773e-01 78\\
 1.00000000e+00 6\\
};
\end{axis}
\end{tikzpicture}
\caption{Number of active features $|\mathcal{T}|$}
\end{subfigure}~\begin{subfigure}[b]{0.3\textwidth}\centering
% This file was created by matplotlib2tikz v0.6.18.
\begin{tikzpicture}

\begin{axis}[width=0.3\mywidthbb,
height=0.4\myheightbb,
at={(0\mywidth,0\myheight)},
legend entries={\textsc{pDS},\textsc{pMCS},\textsc{pTHS}},legend style={draw=white!80.0!black},
tick align=outside,
tick pos=left,ticklabel style={font=\footnotesize},
x grid style={white!69.01960784313725!black},
xlabel={$\tau$},
label style={font=\footnotesize},
xmin=0.00001, xmax=1,
xmode=log,
%xticklabel={\pgfmathparse{\tick*100/10312}\pgfmathprintnumber{\pgfmathresult}\%},
%y grid style={white!69.01960784313725!black},
ylabel={Seconds},
ymin=0.0001, ymax=1,ymode=log,
legend columns=6,
xmajorgrids,
ymajorgrids,
grid style={dotted},
legend style={
at={(0,1.015)}, 
	anchor=south west, legend cell align=left, align=left,
	draw=none
	% white!15!black
	, font=\footnotesize}]
\addplot [dashed,line width=\mylinewidth, pDiffScater, mark=x, mark size=\markwidth, mark options={solid}]
table [row sep=\\]{%[0.02672553 0.03052092 0.02785683 0.0122726  0.00498271 0.00226164 0.00145841 0.00195909]
1.00000000e-05 0.4771544933319092\\
5.17947468e-05 0.34178805351257324\\ 
2.68269580e-04 0.33472187042236328\\ 
1.38949549e-03  0.06737\\
 7.19685673e-03 0.0144\\
 3.72759372e-02 0.00226164  \\
 1.93069773e-01 0.00145841\\
 1.00000000e+00 0.00195909\\
};
\addplot [dashed,line width=\mylinewidth, pMonicCubic, mark=x, mark size=\markwidth, mark options={solid}]
table [row sep=\\]{%[0.02171254 0.02197957 0.01643109 0.02644682 0.00325394 0.00450873  0.00182271 0.00152731]
1.00000000e-05 0.3677945137023926\\
5.17947468e-05 0.2249\\ 
2.68269580e-04 0.2134\\ 
1.38949549e-03  0.22939\\
 7.19685673e-03 0.04277\\
 3.72759372e-02 0.01866  \\
 1.93069773e-01 0.00182271\\
 1.00000000e+00 0.00152731\\
};
\addplot [dashed,line width=\mylinewidth, pTightHann, mark=x, mark size=\markwidth, mark options={solid}]
table [row sep=\\]{%[0.01606369 0.01107907 0.01099563 0.01094842 0.01107693 0.00498533 0.00334644 0. ]
1.00000000e-05 0.13809704780578613\\
5.17947468e-05 0.14978241920471191\\ 
2.68269580e-04 0.115568\\ 
1.38949549e-03  0.1143\\
 7.19685673e-03 0.1307693\\
 3.72759372e-02 0.0538533  \\
 1.93069773e-01 0.00498533\\
 1.00000000e+00 0.0022614\\
};
\end{axis}
\end{tikzpicture}\caption{Runtime in seconds}\end{subfigure}
    \caption{Performance of pGSTs for varying $\tau$.}
    \label{fig:senstau}
\end{figure*}
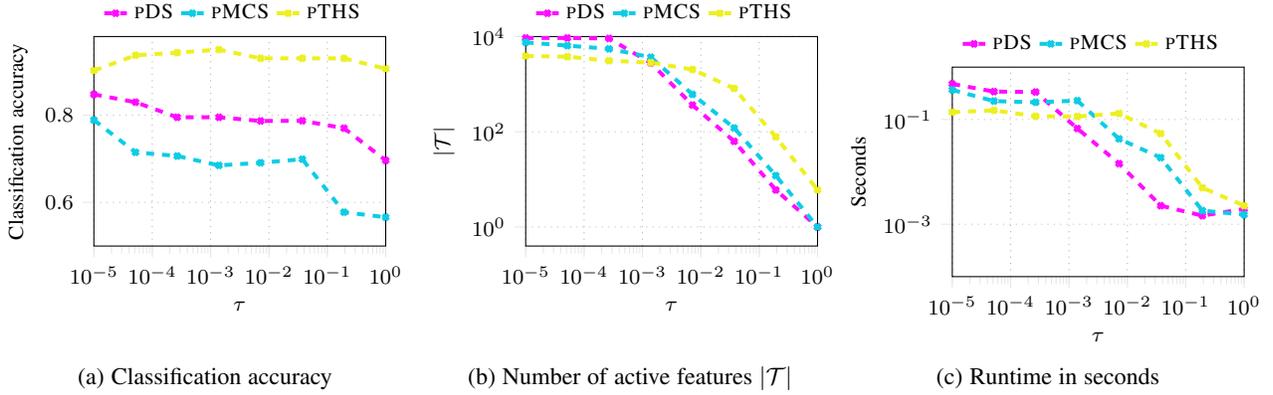

\begin{figure}
% This file was created by matplotlib2tikz v0.6.18.
\begin{tikzpicture}

\begin{axis}[width=0.3\mywidthbb,
height=0.4\myheightbb,
at={(0\mywidth,0\myheight)},
legend entries={\textsc{pDS},\textsc{pMCS},\textsc{pTHS}},
legend style={draw=white!80.0!black},
tick align=outside,
tick pos=left,
x grid style={white!69.01960784313725!black},
xlabel={$L$},ticklabel style={font=\footnotesize},
label style={font=\footnotesize},
xmin=1, xmax=6,
%xtick={300,600,1000,1400},
%xticklabel={\pgfmathparse{\tick*100/10312}\pgfmathprintnumber{\pgfmathresult}\%},
%y grid style={white!69.01960784313725!black},
ylabel={Classification accuracy},
ymin=0.60, ymax=1,
legend columns=6,
xmajorgrids,
ymajorgrids,
grid style={dotted},
legend style={
	at={(0,1.015)}, 
	anchor=south west, legend cell align=left, align=left,
	draw=none
	% white!15!black
	, font=\footnotesize}]
\addplot [dashed,line width=\mylinewidth, pDiffScater, mark=x, mark size=\markwidth, mark options={solid}]
table [row sep=\\]{
1 0.7453\\
2 0.785\\
3 0.8323\\
4 0.8523\\
5 0.8515\\
6 0.8542\\
};
\addplot [dashed,line width=\mylinewidth, pMonicCubic, mark=x, mark size=\markwidth, mark options={solid}]
table [row sep=\\]{% [0.6354 0.6458 0.761  0.6773 0.7346 0.7263]
1 0.6054\\
2 0.6758\\
3 0.671\\
4 0.693\\
5 0.7346\\
6 0.7663\\
};
\addplot [dashed,line width=\mylinewidth, pTightHann, mark=x, mark size=\markwidth, mark options={solid}]
table [row sep=\\]{%[0.6354 0.7032 0.8729 0.8759 0.9744 0.9712]
1 0.6354\\
2 0.7032\\
3 0.8729\\
4 0.8759\\
5 0.9744\\
6 0.9712\\
};
\end{axis}
\end{tikzpicture}
\caption{Classification accuracy over $L$}\label{fig:sensL}
\end{figure}
\begin{figure}
% This file was created by matplotlib2tikz v0.6.18.
\begin{tikzpicture}

\begin{axis}[width=0.3\mywidthbb,
height=0.4\myheightbb,
at={(0\mywidth,0\myheight)},
legend entries={\textsc{pDS},\textsc{pMCS},\textsc{pTHS}},
legend style={draw=white!80.0!black},
tick align=outside,
tick pos=left,
x grid style={white!69.01960784313725!black},
xlabel={$J$},ticklabel style={font=\footnotesize},
label style={font=\footnotesize},
xmin=3, xmax=10,
%xtick={300,600,1000,1400},
%xticklabel={\pgfmathparse{\tick*100/10312}\pgfmathprintnumber{\pgfmathresult}\%},
%y grid style={white!69.01960784313725!black},
ylabel={Classification accuracy},
ymin=0.60, ymax=0.9,
legend columns=6,
xmajorgrids,
ymajorgrids,
grid style={dotted},
legend style={
	at={(0,1.015)},
	anchor=south west, legend cell align=left, align=left,
	draw=none
	% white!15!black
	, font=\footnotesize}]
\addplot [dashed,line width=\mylinewidth, pDiffScater, mark=x, mark size=\markwidth, mark options={solid}]
table [row sep=\\]{%
3 0.7851\\
4 0.7737\\
5 0.7995\\
6 0.8232\\
7 0.8388\\
8 0.8274\\
9 0.8302\\
10 0.827\\
};
\addplot [dashed,line width=\mylinewidth, pMonicCubic, mark=x, mark size=\markwidth, mark options={solid}]
table [row sep=\\]{% 
3 0.6582\\
4 0.6839\\
5 0.6637\\
6 0.7317\\
7 0.7541\\
8 0.7623\\
9 0.8045\\
10 0.829\\
};
\addplot [dashed,line width=\mylinewidth, pTightHann, mark=x, mark size=\markwidth, mark options={solid}]
table [row sep=\\]{%
3 0.7977\\
4 0.796\\
5 0.8291\\
6 0.8205\\
7 0.8219\\
8 0.8354\\
9 0.8342\\
10 0.8313\\
};
\end{axis}
\end{tikzpicture}
\begin{comment}
<class 'list'>: [[1555], [1555], [1522], [510], [121], [31], [5], [1]] [0.14095998 0.06357741 0.05881    0.01628447 0.00515366 0.00439167
 0.00154352 0.00146556]
<class 'list'>: [[1378], [1207], [1068], [732], [185], [57], [16], [1]] [0.05628109 0.05644822 0.04801273 0.02616668 0.00714588 0.005373
 0.00230241 0.00194359]
<class 'list'>: [[780], [766], [661], [625], [493], [186], [57], [5]][0.02629638 0.0265348  0.02614474 0.01967835 0.01586032 0.00725818
 0.00340819 0.00180745]
\end{comment}
\caption{Classification accuracy over $J$}\label{fig:sensJ}
\end{figure}

Fig.~\ref{fig:senstau} reports how the pGST is affected by varying the threshold $\tau$ in the task of source localization, with $J=6$ and $L=5$. Fig.~\ref{fig:senstau} (a) shows the classification accuracy that generally decreases as $\tau$ increases since the number of active features $|\mathcal{T}|$ decreases; cf. Fig.~\ref{fig:senstau} (b). Fig.~\ref{fig:senstau} (c) reports the runtime in seconds. Fig.~\ref{fig:sensL},~\ref{fig:sensJ} showcase the classification performance of pGST with $\tau=0.01$ for varying $L$ with $J=3$, and for varying $J$ with $L=3$, respectively. It is observed, that the classification performance generally improves with $L$ and $J$.

\begin{table}[t]
		\hspace{0cm}
		\centering
		\caption{Dataset characteristics}
		%\rowcolors[]{1}{white}{gray}
		\begin{tabular}{c c c c c}
			\hline
			\textbf {Dataset} &  {Graphs}  &  {Features} $F$  & {Max $N$ per graph} \\
			\hline\hline
			Collab  &  5000& 1 & 492\\\hline
			 D\&D& 1178 & 89 & 5748\\\hline
			Enzymes  & 600 & 3 & 126\\\hline
			Proteins &1113 & 3 &620
		\end{tabular}
		\vspace{0.0cm}
		\label{tab:dataset}
\end{table}
\begin{table}[]
\centering 
\caption{Graph classification accuracy.}
\resizebox{0.8\textwidth}{!}{ %\renewcommand{\arraystretch}{0.6}
%\small
{
\newcolumntype{g}{>{\columncolor{gray}}c}
%\rowcolors[]{1}{white}{gray}
\begin{tabular}{@{}clccHccc@{}}\cmidrule[\heavyrulewidth]{2-7}
& \multirow{3}{*}{\vspace*{8pt}\textbf{Method}}&\multicolumn{4}{c}{\textbf{Data Set}}\\\cmidrule{3-7}
& & {\textsc{Enzymes}} & {\textsc{D\&D}} & {\textsc{Reddit-Multi-12k}} & {\textsc{Collab}} & {\textsc{Proteins}} 
\\ \cmidrule{2-7}\cmidrule{2-7}
\multirow{2}{*}{\rotatebox{90}{\hspace*{-6pt}Kernel}} 
& \textsc{Shortest-path} & 42.32 & 78.86 & 36.93 & 59.10  & 76.43   \\   
\cmidrule{2-7}
& \text{WL-OA} & 60.13  & 79.04	 & 44.38  & 80.74  & 75.26    \\       \cmidrule{2-7}
% GNN
\cmidrule{2-7}
& \textsc{PatchySan} & -- & 76.27	 & 41.32   & 72.60 &  75.00   \\ \cmidrule{2-7}
\multirow{7}{*}{\rotatebox{90}{GNNs}} 
& \textsc{GraphSage} &  54.25 & 75.42 	 & 42.24  & 68.25  & 70.48\\ \cmidrule{2-7}
& \textsc{ECC}  & 53.50  & 74.10 & 41.73  & 67.79 &   72.65   \\	\cmidrule{2-7}
& \textsc{Set2set} &  60.15 & 78.12  & 43.49 & 71.75 & 74.29  \\ \cmidrule{2-7}
& \textsc{SortPool} & 57.12 & 79.37  & 41.82 & 73.76  & 75.54   \\     \cmidrule{2-7}
& \textsc{DiffPool-Det} & 58.33 & 75.47 & 46.18 & \textbf{82.13} & 75.62 \\ \cmidrule{2-7}
& \textsc{DiffPool-NoLP} & 62.67  & 79.98	 & 46.44  & 75.63   &  77.42   \\ \cmidrule{2-7}
& \textsc{DiffPool} & \textbf{64.23}  & {81.15}	 & \textbf{47.04}  & 75.50   &  {78.10} \\\cmidrule{2-7}%& \textbf{7.08}\\   
 \cmidrule{2-7}
\multirow{3}{*}{\rotatebox{90}{Scattering}}
&\textsc{GSC} &53.88 & 76.57& & 76.88 &74.03\\ \cmidrule{2-7}
&\textsc{GST} &59.84 &79.28 &  &  77.32& 76.23\\\cmidrule{2-7} \cmidrule{2-7}
& \textsc{pGST} (Ours) & {60.25}  & \textbf{81.27}	 &  & 78.40   &  \textbf{78.57}\\
\cmidrule[\heavyrulewidth]{2-7}
\end{tabular}}}
\label{tab:results}
\end{table}

\subsection{Pruned GST for graph-based classification tasks}
In response to RQ2, we consider three graph-based classification tasks: graph classification, graph-based 3D point cloud classification, and semi-supervised node classification.

\subsubsection{Graph classification} We compare the proposed pGST with the following state-of-the-art methods\footnote{For the competing approaches we report the 10-fold cross-validation numbers reported by the original authors; see also \cite{ying2018hierarchical}.}. The kernel methods shortest-path~\cite{borgwardt2005shortest}, and Weisfeiler-Lehman optimal assignment (WL-OA)~\cite{kriege2016valid}; the deep learning approaches PatchySan~\cite{niepert2016learning}, GraphSage~\cite{hamilton2017representation}, edge-conditioned filters in CCNs (ECC)~\cite{simonovsky2017dynamic}, Set2Set~\cite{vinyals2015order}, SortPool~\cite{zhang2018end}, and DiffPool~\cite{ying2018hierarchical}; and the geometric scattering classifier (GSC)~\cite{gao2019geometric}. Results are presented on the protein data sets D\&D, Enzymes and Proteins, and the scientific collaboration data set Collab. The parameters of these datasets are listed in Table~\ref{tab:dataset}. Since the Collab dataset did not have nodal features,  $\mathbf{x}$ was selected as the vector that holds the node degrees. We performed 10-fold cross validation and report the classification accuracy averaged over the 10 folds. The gradient boosting classifier is employed for pGST and GST with parameters chosen based on the performance on the validation set. The graph scattering transforms use the MC wavelet with $L=5$, $J=5$, and  $\tau=0.01$. Table~\ref{tab:results} lists the classification accuracy of the proposed and competing approaches.  Even without any training on the feature extraction step, the performance of pGST is comparable to the state-of-the-art deep supervised learning approaches across all datasets. GST and pGST outperform also GSC, since the latter uses a linear SVM to classify the scattering features.

\begin{figure*}
\begin{subfigure}[b]{0.45\textwidth}
\centering% This file was created by matplotlib2tikz v0.6.18.
\begin{tikzpicture}%[scale=0.65, transform shape]

\definecolor{color0}{rgb}{1,0.647058823529412,0}
\definecolor{color1}{rgb}{1,1,0}
\definecolor{color2}{rgb}{0.501960784313725,0,0.501960784313725}

\begin{axis}[width=0.45\mywidths,
height=0.55\myheights,
at={(0\mywidth,0\myheight)},
legend entries={\textsc{MCS},\textsc{pMCS},
\textsc{Pointnet}, \textsc{Pointnet++}, \textsc{3DShapeNets},\textsc{VoxNet}},
legend style={draw=white!80.0!black},
tick align=outside,
tick pos=left,
x grid style={white!69.01960784313725!black},
xlabel={Network depth $L$},
xmin=1.5, xmax=6.5,
%xtick={40,80,120},
%xticklabel={\pgfmathparse{\tick*100/10312}\pgfmathprintnumber{\pgfmathresult}\%},
%y grid style={white!69.01960784313725!black},
ylabel={Classification accuracy},
ymin=0.7, ymax=0.95,
xmajorgrids,
ymajorgrids,
grid style={dotted},
legend columns=3,label style={font=\footnotesize},ticklabel style={font=\footnotesize},
legend style={
	at={(-0.1,1.015)}, 
	anchor=south west, legend cell align=left, align=left, draw=none
	% white!15!black
	, font=\footnotesize}]

%\addplot [line width=\mylinewidth,custom-nonlin-invariant, mark=*, mark size=0.5, mark options={solid}]
%table [row sep=\\]{%
%2   0.531\\		
%3	0.575 \\
%4	0.6053484603 \\
%5	0.64 \\
%};
%\addplot [mark size=0.5, line width=\mylinewidth, custom-lin-invariant, dashed, mark=x,  mark options={solid}]
%table [row sep=\\]{%
%2   0.412			\\		
%3	0.433 \\
%4	0.4440842788 \\
%5	0.45 \\
%};
%\addplot [mark size=0.5, line width=\mylinewidth, custom-nonlin-raw, dash pattern=on 1pt off 3pt on 3pt off 3pt, mark=square*,  mark options={solid}]
%table [row sep=\\]{%
%2   0.5822528363\\		
%3	0.6292544571 \\
%4	0.6685575365 \\
%5   0.6762560778\\
%};
%\addplot [line width=\mylinewidth,diff-nonlin-invariant, mark=*, mark size=0.5, mark options={solid}]
%table [row sep=\\]{%
%2   0.6722042139\\		
%3   0.7260940032 \\
%4	0.819286872\\
%5 0.8176661264 \\
%};
%\addplot [mark size=0.5, line width=\mylinewidth, diff-nonlin-raw, dashed, mark=x,  mark options={solid}]
%table [row sep=\\]{%
%2   0.7601296596			\\		
%3	0.7893030794 \\
%4	0.7751215559 \\
%};
%\addplot [line width=\mylinewidth,monicCubic-nonlin-invariant, mark=*, mark size=0.5, mark options={solid}]
%table [row sep=\\]{%
%2   0.6811183144\\		
%3	0.7337925446 \\
%4 0.7986547812	\\
%5 0.8184764992\\
%};
%\addplot [mark size=0.5, line width=\mylinewidth, pmonicCubic-nonlin-invariant, dash pattern=on 1pt off 3pt on 3pt off 3pt,
%mark=o,  mark options={solid}]
%table [row sep=\\]{%
%2   0.7512155592\\		
%3	0.799432739059967  \\
%4	0.8075364667747164\\
%5	0.8375364667747164\\
%};
\addplot [mark size=0.5, line width=\mylinewidth,MonicCubic,dotted, %dash pattern=on 1pt off 3pt on 3pt off 3pt,
mark=*,  mark options={solid}]
table [row sep=\\]{%
2   0.7912155592\\		
3	0.818800648 \\
4	0.8243111831\\
5  0.8480551053\\
};
\addplot [mark size=0.5, line width=\mylinewidth, pMonicCubic,dashed, %dash pattern=on 1pt off 3pt on 3pt off 3pt,
mark=x,  mark options={solid}]
table [row sep=\\]{%
2   0.7912155592\\		
3	0.828800648 \\
4	0.8343111831\\
5  0.8550551053\\
6 0.8496758509\\
};

\addplot [mark size=0.5, line width=\mylinewidth, pointnet, mark=square,  mark options={solid}]
table [row sep=\\]{%
2   0.87			\\		
3	0.87 \\
4	0.87 \\
5	0.87 \\
6 0.87\\
};
\addplot [mark size=0.5, line width=\mylinewidth, pointnetplus, mark=o,  mark options={solid}]
table [row sep=\\]{%
2   0.919			\\		
3	0.919 \\
4	0.919\\
5	0.919\\
6	0.919\\
};
\addplot [mark size=0.5, line width=\mylinewidth, 3dshapenets, mark=x,  mark options={solid}]
table [row sep=\\]{%
2   0.847			\\		
3	0.847 \\
4	0.847\\
5	0.847\\
6	0.847\\
};
\addplot [mark size=0.5, line width=\mylinewidth, voxnet, mark=*,  mark options={solid}]
table [row sep=\\]{%
2   0.859			\\		
3	0.859 \\
4	0.859\\
5	0.859\\
6	0.859\\
};

\end{axis}
\end{tikzpicture}
%\vspace{-0.4cm}
\caption{9843 clouds for training and 2468 for testing} 
\end{subfigure}
%\hspace{0.7cm}
\begin{subfigure}[b]{0.45\textwidth}
\centering% This file was created by matplotlib2tikz v0.6.18.
\begin{tikzpicture}%[scale=0.65, transform shape]

\definecolor{color0}{rgb}{1,0.647058823529412,0}
\definecolor{color1}{rgb}{1,1,0}
\definecolor{color2}{rgb}{0.501960784313725,0,0.501960784313725}

\begin{axis}[width=0.45\mywidths,
height=0.55\myheights,
at={(0\mywidth,0\myheight)},
legend entries={\textsc{MCS},\textsc{pMCS}, \textsc{Pointnet++}},
legend style={draw=white!80.0!black},
tick align=outside,
tick pos=left,
x grid style={white!69.01960784313725!black},
xlabel={Network depth $L$},
xmin=1.5, xmax=5.5,
%xtick={40,80,120},
%xticklabel={\pgfmathparse{\tick*100/10312}\pgfmathprintnumber{\pgfmathresult}\%},
%y grid style={white!69.01960784313725!black},
ylabel={Classification accuracy},
ymin=0.6, ymax=0.758165821954202,
xmajorgrids,
ymajorgrids,
grid style={dotted},
legend columns=3,label style={font=\footnotesize},ticklabel style={font=\footnotesize},
legend style={
	at={(0,1.015)}, 
	anchor=south west, legend cell align=left, align=left, draw=none
	% white!15!black
	, font=\footnotesize}]

%\addplot [line width=\mylinewidth,custom-nonlin-invariant, mark=*, mark size=0.5, mark options={solid}]
%table [row sep=\\]{%
%2   0.531\\		
%3	0.575 \\
%4	0.6053484603 \\
%5	0.64 \\
%};

\addplot [mark size=0.5, line width=\mylinewidth, MonicCubic,dotted, %dash pattern=on 1pt off 3pt on 3pt off 3pt,
mark=*,  mark options={solid}]
table [row sep=\\]{%
2   0.6130957874\\		
3	0.7059728275 \\
4	0.6960608391\\
5  0.6844475775\\
};
\addplot [mark size=0.5, line width=\mylinewidth, pMonicCubic, dashed, %dash pattern=on 1pt off 3pt on 3pt off 3pt,
mark=x,  mark options={solid}]
table [row sep=\\]{%
2   0.6130957874\\		
3	0.7059728275 \\
4	0.71051\\
5  0.710125\\
};

\addplot  [mark size=0.5, line width=\mylinewidth, pointnetplus, mark=o,  mark options={solid}]
table [row sep=\\]{%
2   0.685			\\		
3	0.685 \\
4	0.685 \\
5	0.685 \\
};

\end{axis}
\end{tikzpicture}
%\vspace{-0.4cm}
\caption{615 clouds for training and 11703 for testing}
\end{subfigure}
\caption{3D point cloud classification.}\label{fig:pcacc}
\end{figure*}
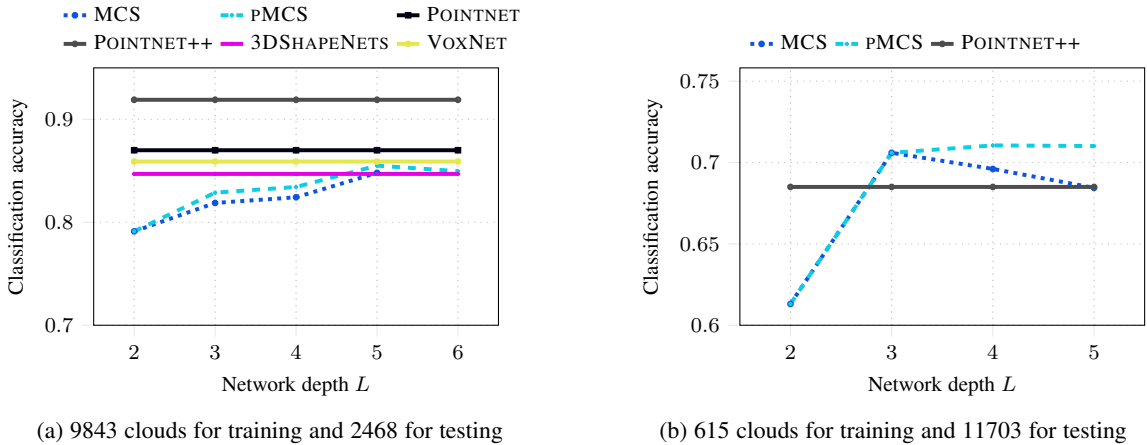

\subsubsection{Point cloud classification} 

We further test pGST in classifying 3D point clouds. Given a point cloud, a graph can be created by connecting points (nodes) to their nearest neighbors based on their Euclidian distance. Each node is also associated with 6 scalars denoting its x-y-z coordinates and RGB colors. For this experiment, GSTs are compared against PointNet++~\cite{qi2017pointnet,qi2017pointnet++}, 3dShapeNets~\cite{wu20153d} and VoxNet~\cite{maturana2015voxnet}, that are state-of-the-art deep learning approaches.  Fig.~\ref{fig:pcacc} reports the classification accuracy for the ModelNet40 dataset~\cite{wu20153d} for increasing $L$. In Fig.~\ref{fig:pcacc} (a) 9,843 clouds are used for training and 2,468 for testing using the gradient boosting classifier; whereas, in Fig.~\ref{fig:pcacc} (b) only 615 clouds are used for training and the rest for testing using a fully connected neural network classifier with 3 layers. The scattering transforms use an MC wavelet with $J=5$ for Fig.~\ref{fig:pcacc} (a), and  $J=9$ for Fig.~\ref{fig:pcacc} (b). Fig.~\ref{fig:pcacc} showcases that scattering transforms are competitive to state-of-the-art approaches, while pGST outperforms GST. This may be attributed to overfitting effects, since a large number of GST features is not informative. Furthermore, the exponential complexity of GSTs prevents their application with $L=6$. Fig.~\ref{fig:pcacc} (b) shows that when the training data are scarce, GST and pGST outperform the PointNet++, which requires a large number of training data to optimize over the network parameters.

\subsubsection{Semi-supervised node classification}{\label{sec:ssl}}
A task of major importance at the intersection of machine learning and 
network science is semi-supervised learning (SSL) over graphs. Given the topology ${\mathbf{S}}$, the features in the $N\times F$ matrix $\mathbf{X}$, and labels only at a subset $\mathcal{L}$ of nodes $\{y_{n}\}_{n\in\mathcal{L}}$ with $\mathcal{L} \subset\mathcal{V}$, the goal of is to predict the labels $\{y_{n}\}_{n\in\mathcal{U}}$ of the unlabeled set of nodes.
For this task, we utilize the pGST algorithm to extract the feature vectors $\{\mathbf{z}_{(p)}\}_{p\in \mathcal{T}}$ from $\mathbf{X}$, which are then utilized by a fully connected neural network with 3 layers to predict the missing labels. To facilitate comparison, we reproduce the experimental setup of \cite{kipf2016semi}, namely the same split of the data for training, validation, and testing sets, and compare the various methods in the Cora dataset with $N=2,708$ nodes, $C=7$ classes and $F=1,433$ features. The pGST algorithm employs the TH wavelet with $L=5$, $J=3$, and $\tau=0.01$. We also compare to~\cite{zou2019graph}, where the GST is employed, the full scattering coefficients are extracted and then dimensionality reduction is effected using PCA to handle the large number of features. The so obtained features are processed by a fully connected network. Table~\ref{tab:sslresult} reports the classification accuracy of several state-of-the-art methods. Our method that performs unsupervised feature extraction is highly competitive and outperforms all except the GAT approach. It is worth noting that pGST performs comparably to the GST approach in~\cite{zou2019graph}, which suggests that our spectrum-inspired criterion has a `PCA-like' effect. However, PCA needs all the scattering features to process them offline, while pGST prunes the scattering features on-the-fly, and allows for increased scalability.

\begin{table}[]
    \centering
    \caption{SSL classification accuracy.}
    \resizebox{0.32\textwidth}{!}{
    %\rowcolors[]{1}{white}{gray}
    \begin{tabular}{c c}
    \hline
    \textbf{Method}  & \textbf{Cora}\\
    \hline
    ManiReg~\cite{belkin2006manifold} & 59.5  \\
    \hline
    SemiEmb~\cite{weston2012deep} &  59.0  \\
        \hline
    LP~\cite{zhu2003semi} & 68.0  \\
        \hline
    %ICA~\cite{}  & 69.1 & 75.1 &  73.9\\
    %    \hline
    Planetoid~\cite{yang2016revisiting} &  75.7 \\
        \hline
    GCN~\cite{kipf2016semi} & 81.5 \\
    \hline
    GAT~\cite{velivckovic2017graph} &  \textbf{83.0} \\
    \hline
        GST~\cite{zou2019graph}&  81.9  \\
        \hline
        %\hline
    pGST (ours)& 81.9 \\
        \hline
    \end{tabular}
    }
    \label{tab:sslresult}
\end{table}

\subsection{Scattering patterns analysis}

To address RQ3, we depict the scattering structures of pGSTs, with an MC wavelet, $J=3$, and $L=5$, for the Collab, Proteins, and ModelNet40 datasets in Fig.~\ref{fig:pGSTex}. Evidently, graph data from various domains require an adaptive scattering architecture. Specifically, most tree nodes for the academic collaboration dataset are pruned, and hence most informative features reside at the shallow layers. This is consistent with the study in~\cite{wu2019simplifying}, which experimentally shows that deeper GCNs do not contribute as much for social network data. These findings are further supported by the small-world phenomenon in social networks, which suggests that the diameter of social networks is small~\cite{watts1998collective}.  On the other hand, the tree nodes for a 3D point cloud are minimally pruned, which is in line with the work in~\cite{li2019can} that showcases the advantage of deep GCNs in 3D point clouds classification. 

\subsubsection{Sensitivity of the pGST tree to signal perturbations}
Next, we evaluate noise effects in the pruning algorithm to provide an answer to RQ4. Corollary 1 suggests that localized noise will have a more prominent effect on the pGST than random noise. This is appealing because pGST can detect transient changes in the graph spectral domain, while ignoring small random perturbations. Here, we attempt to establish the changes to the \emph{structure} in the scattering patterns of pGST under localized and random noise.   

For this experiment, the Cora dataset is tested with  $N=2,708$ nodes, $C=7$ classes and $F=1,433$ features. We draw random and localized noise from~\eqref{eq:rand} and~\eqref{eq:loc} respectively, which is added to the nodal features. The noise is selected so that the signal to noise (SNR) energy ratio is -20dB. We observe that for smaller noise power, the scattering patterns are invariant, which corroborates the theoretical result in Lemma 2.

Fig.~\ref{fig:pGSTexnoise} plots the scattering patterns of the pGST using the TH wavelet with $L=3$, $J=5$, and $\tau=0.1$. It is observed that the set of pruned features given the original features without noise in Fig.~\ref{fig:orig}, is almost the same as the set of pruned features  given the randomly perturbed features in Fig~\ref{fig:ran}. On the other hand, the set of pruned features when the localized noise is employed, is quite different; see Fig.~\ref{fig:loc}.

\subsubsection{Sensitivity of the pGST tree to structural perturbations}
\label{sec:structural_sensitivity}
For the same setting as Fig.~\ref{fig:pGSTexnoise} the effect of the structural perturbations in the structure of pGST is also tested. The adjacency matrix was perturbed by adding noise in the eigenvalues such that the SNR is -20dB. Random noise in this scenario means that the noise is distributed over the eigenvalues, whereas localized noise affects only a subset of the eigenvalues significantly.  Fig.~\ref{fig:pGSTexnoise_structure} shows that the localized and random noise do not alter the scattering patterns significantly. This is in accordance with the stability results in Theorem 3.

\begin{figure*}
\begin{subfigure}{0.3\textwidth}
\centering{\scriptsize \begin{tikzpicture}[x = 1*\unit, y = 1*\unit,scale=\scaletrees, every node/.style={scale=1}, line/.style={
      draw,scale=1,
      -latex',
      shorten >=1pt
    },>=latex]
    
    %%%%%%%%%%%%%%%%%%%%%%%%%%%%%%%%%%%%%%%%%%%%%%%%%%%%%%%%%%%%%%%%%%%%%%%%%%%
    % Define anchor points
    % Origin point
	\node at (0, 0) (o) {};

	\path (o) 
		      node (0) [fill = \colorNode,mnode, draw = \colorEdge] {};
    % First layer nodes
    \foreach \dx\dy\nodeid in {{-2/-1/1},{-1/-1/2},{1/-1/4}}
         {\path (0) ++ (\dx*0.8*\mdistNodesLayerOne,\dy*\mdistBetweenLayers)
              node (\nodeid)[fill = \colorNode,mnode, draw = \colorEdge] 
              {};
        \path (0.south) 
       edge [draw = \colorGraphFilter, line width = \mmyArrowWidth] 
       (\nodeid.north east);}

    \foreach \pathid\dx\dy\nodeid in {{1/-2/-1/1},{1/-1/-1/2},{1/1/-1/4},%{1/0/-1/3},
    {2/-2/-1/1},{2/-1/-1/2},%{2/0/-1/3},{2/1/-1/4},
    %{3/-2/-1/1},{3/-1/-1/2},{3/0/-1/3},{3/1/-1/4},
    {4/-2/-1/1},{4/-1/-1/2}%,{4/0/-1/3},{4/1/-1/4}
    }%,{1/-1/-1/1},{1/0/-1/2},{1/1/-1/3}}
         {\path (\pathid) ++ (\dx*0.6*\mdistNodesLayerTwo,\dy*\mdistBetweenLayers)
              node (\pathid\nodeid)[fill = \colorNode,mnode, draw = \colorEdge] 
              {};
              \path (\pathid.south) 
       edge [ draw = \colorGraphFilter, line width = \mmyArrowWidth] (\pathid\nodeid.north);}

\end{tikzpicture}} 
\caption{Pruned tree $\mathcal{T}$ without signal perturbation}\label{fig:orig}
\end{subfigure}
\begin{subfigure}{0.3\textwidth}
\centering{\scriptsize \begin{tikzpicture}[x = 1*\unit, y = 1*\unit,scale=\scaletrees, every node/.style={scale=1}, line/.style={
      draw,scale=1,
      -latex',
      shorten >=1pt
    },>=latex]
    
    %%%%%%%%%%%%%%%%%%%%%%%%%%%%%%%%%%%%%%%%%%%%%%%%%%%%%%%%%%%%%%%%%%%%%%%%%%%
    % Define anchor points
    % Origin point
	\node at (0, 0) (o) {};

	\path (o) 
		      node (0) [fill = \colorNode,mnode, draw = \colorEdge] {};
    % First layer nodes
    \foreach \dx\dy\nodeid in {{-2/-1/1},{-1/-1/2},{1/-1/4}}
         {\path (0) ++ (\dx*0.8*\mdistNodesLayerOne,\dy*\mdistBetweenLayers)
              node (\nodeid)[fill = \colorNode,mnode, draw = \colorEdge] 
              {};
        \path (0.south) 
       edge [draw = \colorGraphFilter, line width = \mmyArrowWidth] 
       (\nodeid.north east);}

    \foreach \pathid\dx\dy\nodeid in {{1/-2/-1/1},{1/-1/-1/2},%{1/0/-1/3},{1/1/-1/4},
    {2/-2/-1/1},{2/-1/-1/2},%{2/0/-1/3},{2/1/-1/4},
    %{3/-2/-1/1},{3/-1/-1/2},{3/0/-1/3},{3/1/-1/4},
    {4/-2/-1/1},{4/-1/-1/2}%,{4/0/-1/3},{4/1/-1/4}
    }%,{1/-1/-1/1},{1/0/-1/2},{1/1/-1/3}}
         {\path (\pathid) ++ (\dx*0.6*\mdistNodesLayerTwo,\dy*\mdistBetweenLayers)
              node (\pathid\nodeid)[fill = \colorNode,mnode, draw = \colorEdge] 
              {};
              \path (\pathid.south) 
       edge [ draw = \colorGraphFilter, line width = \mmyArrowWidth] (\pathid\nodeid.north);}

\end{tikzpicture}} 
\caption{Pruned tree $\tilde{\mathcal{T}}$ with random signal perturbation}\label{fig:ran}
\end{subfigure}
\begin{subfigure}{0.3\textwidth}
\centering{\scriptsize \begin{tikzpicture}[x = 1*\unit, y = 1*\unit,scale=\scaletrees, every node/.style={scale=1}, line/.style={
      draw,scale=1,
      -latex',
      shorten >=1pt
    },>=latex]
    
    %%%%%%%%%%%%%%%%%%%%%%%%%%%%%%%%%%%%%%%%%%%%%%%%%%%%%%%%%%%%%%%%%%%%%%%%%%%
    % Define anchor points
    % Origin point
	\node at (0, 0) (o) {};

	\path (o) 
		      node (0) [fill = \colorNode,mnode, draw = \colorEdge] {};
    % First layer nodes
    \foreach \dx\dy\nodeid in {{-2/-1/1},{-1/-1/2},{0/-1/3},{1/-1/4}%,{2/-1/5}
    }
         {\path (0) ++ (\dx*0.8*\mdistNodesLayerOne,\dy*\mdistBetweenLayers)
              node (\nodeid)[fill = \colorNode,mnode, draw = \colorEdge] 
              {};
        \path (0.south) 
       edge [draw = \colorGraphFilter, line width = \mmyArrowWidth] 
       (\nodeid.north east);}

    \foreach \pathid\dx\dy\nodeid in {{1/-2/-1/1},{1/-1/-1/2},%{1/0/-1/3},{1/1/-1/4},
    {2/-2/-1/1},{2/-1/-1/2},{2/0/-1/3},%{2/1/-1/4},
    {3/-2/-1/1},{3/-1/-1/2},%{3/0/-1/3},{3/1/-1/4},
    {4/-2/-1/1},{4/-1/-1/2}%{4/0/-1/3},{4/1/-1/4},{5/-2/-1/1}
    }%,{1/-1/-1/1},{1/0/-1/2},{1/1/-1/3}}
         {\path (\pathid) ++ (\dx*0.6*\mdistNodesLayerTwo,\dy*\mdistBetweenLayers)
              node (\pathid\nodeid)[fill = \colorNode,mnode, draw = \colorEdge] 
              {};
              \path (\pathid.south) 
       edge [ draw = \colorGraphFilter, line width = \mmyArrowWidth] (\pathid\nodeid.north);}

\end{tikzpicture}} 
\caption{Pruned tree $\tilde{\mathcal{T}}$ with localized signal perturbation}\label{fig:loc}
\end{subfigure}
    \caption{The pGST applied to the Cora dataset with and without signal perturbation, corroborating the result of Corollary 1.}
     \label{fig:pGSTexnoise}
\vspace{0mm}
\end{figure*}
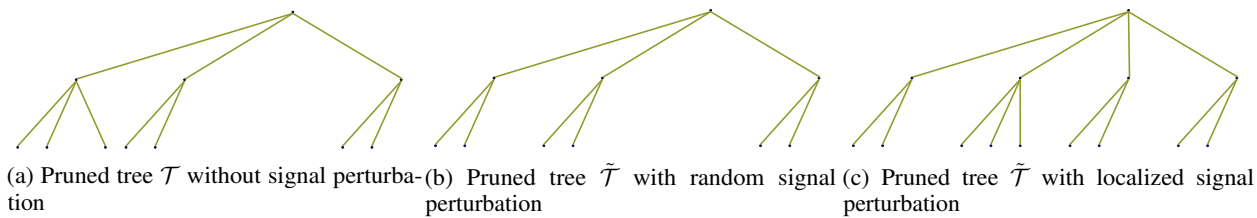

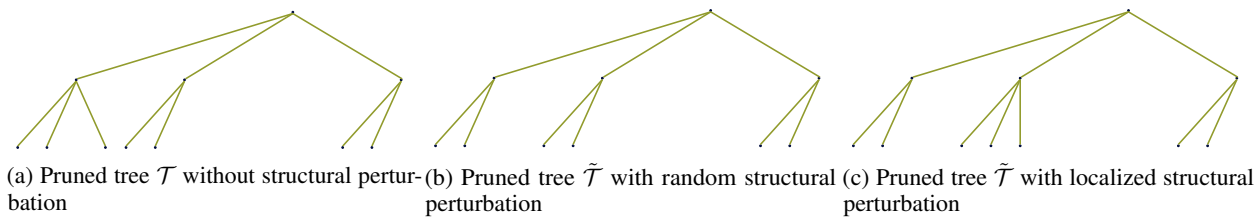
\begin{figure*}
\begin{subfigure}{0.3\textwidth}
\centering{\scriptsize \begin{tikzpicture}[x = 1*\unit, y = 1*\unit,scale=\scaletrees, every node/.style={scale=1}, line/.style={
      draw,scale=1,
      -latex',
      shorten >=1pt
    },>=latex]
    
    %%%%%%%%%%%%%%%%%%%%%%%%%%%%%%%%%%%%%%%%%%%%%%%%%%%%%%%%%%%%%%%%%%%%%%%%%%%
    % Define anchor points
    % Origin point
	\node at (0, 0) (o) {};

	\path (o) 
		      node (0) [fill = \colorNode,mnode, draw = \colorEdge] {};
    % First layer nodes
    \foreach \dx\dy\nodeid in {{-2/-1/1},{-1/-1/2},{1/-1/4}}
         {\path (0) ++ (\dx*0.8*\mdistNodesLayerOne,\dy*\mdistBetweenLayers)
              node (\nodeid)[fill = \colorNode,mnode, draw = \colorEdge] 
              {};
        \path (0.south) 
       edge [draw = \colorGraphFilter, line width = \mmyArrowWidth] 
       (\nodeid.north east);}

    \foreach \pathid\dx\dy\nodeid in {{1/-2/-1/1},{1/-1/-1/2},{1/1/-1/4},%{1/0/-1/3},
    {2/-2/-1/1},{2/-1/-1/2},%{2/0/-1/3},{2/1/-1/4},
    %{3/-2/-1/1},{3/-1/-1/2},{3/0/-1/3},{3/1/-1/4},
    {4/-2/-1/1},{4/-1/-1/2}%,{4/0/-1/3},{4/1/-1/4}
    }%,{1/-1/-1/1},{1/0/-1/2},{1/1/-1/3}}
         {\path (\pathid) ++ (\dx*0.6*\mdistNodesLayerTwo,\dy*\mdistBetweenLayers)
              node (\pathid\nodeid)[fill = \colorNode,mnode, draw = \colorEdge] 
              {};
              \path (\pathid.south) 
       edge [ draw = \colorGraphFilter, line width = \mmyArrowWidth] (\pathid\nodeid.north);}

\end{tikzpicture}} 
\caption{Pruned tree $\mathcal{T}$ without structural perturbation}\label{fig:origstr}
\end{subfigure}
\begin{subfigure}{0.3\textwidth}
\centering{\scriptsize \begin{tikzpicture}[x = 1*\unit, y = 1*\unit,scale=\scaletrees, every node/.style={scale=1}, line/.style={
      draw,scale=1,
      -latex',
      shorten >=1pt
    },>=latex]
    
    %%%%%%%%%%%%%%%%%%%%%%%%%%%%%%%%%%%%%%%%%%%%%%%%%%%%%%%%%%%%%%%%%%%%%%%%%%%
    % Define anchor points
    % Origin point
	\node at (0, 0) (o) {};

	\path (o) 
		      node (0) [fill = \colorNode,mnode, draw = \colorEdge] {};
    % First layer nodes
    \foreach \dx\dy\nodeid in {{-2/-1/1},{-1/-1/2},{1/-1/4}}
         {\path (0) ++ (\dx*0.8*\mdistNodesLayerOne,\dy*\mdistBetweenLayers)
              node (\nodeid)[fill = \colorNode,mnode, draw = \colorEdge] 
              {};
        \path (0.south) 
       edge [draw = \colorGraphFilter, line width = \mmyArrowWidth] 
       (\nodeid.north east);}

    \foreach \pathid\dx\dy\nodeid in {{1/-2/-1/1},{1/-1/-1/2},%{1/0/-1/3},{1/1/-1/4},
    {2/-2/-1/1},{2/-1/-1/2},%{2/0/-1/3},{2/1/-1/4},
    %{3/-2/-1/1},{3/-1/-1/2},{3/0/-1/3},{3/1/-1/4},
    {4/-2/-1/1},{4/-1/-1/2}%,{4/0/-1/3},{4/1/-1/4}
    }%,{1/-1/-1/1},{1/0/-1/2},{1/1/-1/3}}
         {\path (\pathid) ++ (\dx*0.6*\mdistNodesLayerTwo,\dy*\mdistBetweenLayers)
              node (\pathid\nodeid)[fill = \colorNode,mnode, draw = \colorEdge] 
              {};
              \path (\pathid.south) 
       edge [ draw = \colorGraphFilter, line width = \mmyArrowWidth] (\pathid\nodeid.north);}

\end{tikzpicture}} 
\caption{Pruned tree $\tilde{\mathcal{T}}$ with random structural perturbation}\label{fig:ranstr}
\end{subfigure}
\begin{subfigure}{0.3\textwidth}
\centering{\scriptsize \begin{tikzpicture}[x = 1*\unit, y = 1*\unit,scale=\scaletrees, every node/.style={scale=1}, line/.style={
      draw,scale=1,
      -latex',
      shorten >=1pt
    },>=latex]
    
    %%%%%%%%%%%%%%%%%%%%%%%%%%%%%%%%%%%%%%%%%%%%%%%%%%%%%%%%%%%%%%%%%%%%%%%%%%%
    % Define anchor points
    % Origin point
	\node at (0, 0) (o) {};

	\path (o) 
		      node (0) [fill = \colorNode,mnode, draw = \colorEdge] {};
    % First layer nodes
    \foreach \dx\dy\nodeid in {{-2/-1/1},{-1/-1/2},{1/-1/4}}
         {\path (0) ++ (\dx*0.8*\mdistNodesLayerOne,\dy*\mdistBetweenLayers)
              node (\nodeid)[fill = \colorNode,mnode, draw = \colorEdge] 
              {};
        \path (0.south) 
       edge [draw = \colorGraphFilter, line width = \mmyArrowWidth] 
       (\nodeid.north east);}

    \foreach \pathid\dx\dy\nodeid in {{1/-2/-1/1},{1/-1/-1/2},%{1/0/-1/3},{1/1/-1/4},
    {2/-2/-1/1},{2/-1/-1/2},{2/0/-1/3},%{2/0/-1/3},{2/1/-1/4},
    %{3/-2/-1/1},{3/-1/-1/2},{3/0/-1/3},{3/1/-1/4},
    {4/-2/-1/1},{4/-1/-1/2}%,{4/0/-1/3},{4/1/-1/4}
    }%,{1/-1/-1/1},{1/0/-1/2},{1/1/-1/3}}
         {\path (\pathid) ++ (\dx*0.6*\mdistNodesLayerTwo,\dy*\mdistBetweenLayers)
              node (\pathid\nodeid)[fill = \colorNode,mnode, draw = \colorEdge] 
              {};
              \path (\pathid.south) 
       edge [ draw = \colorGraphFilter, line width = \mmyArrowWidth] (\pathid\nodeid.north);}

\end{tikzpicture}} 
\caption{Pruned tree $\tilde{\mathcal{T}}$ with localized structural perturbation}\label{fig:locstr}
\end{subfigure}
    \caption{{The pGST applied to the Cora dataset with and without structural perturbations.}%\Note{SC: new experiments on structural structural}
    }
     \label{fig:pGSTexnoise_structure}
\vspace{0mm}
\end{figure*}

\subsubsection{Scattering patterns insights for GCN design}
Finally, we empirically explore the connections between PGST and GCN design, in response to RQ5. We validate the scattering patterns in Fig. 1 for the Protein datasets. The validation employs DiffPoll that is a state-of-the-art GCN; see~\cite{ying2018hierarchical}.  

Fig.~\ref{fig:pGSTexap} depicts the pruned scattering patterns for the three protein datasets (Enzyme, Protein and DD). Most connections after $l=5$ are pruned. This suggests that for protein data $L=5$ graph convolution layers capture most information.

\begin{figure*}
\begin{subfigure}{0.3\textwidth}
\centering{\scriptsize \begin{tikzpicture}[x = 1*\unit, y = 1*\unit,scale=0.3, every node/.style={scale=1}, line/.style={
      draw,scale=1,
      -latex',
      shorten >=1pt
    },>=latex]
    
    %%%%%%%%%%%%%%%%%%%%%%%%%%%%%%%%%%%%%%%%%%%%%%%%%%%%%%%%%%%%%%%%%%%%%%%%%%%
    % Define anchor points
    % Origin point
	\node at (0, 0) (o) {};

	\path (o) 
		      node (0) [fill = \colorNode,mnode, draw = \colorEdge] {};
    % First layer nodes
    \foreach \dx\dy\nodeid in {{-1/-1/1},{0/-1/2},{1/-1/3}}
         {\path (0) ++ (\dx*\mdistNodesLayerOne,\dy*\mdistBetweenLayers)
              node (\nodeid)[fill = \colorNode,mnode, draw = \colorEdge] 
              {};
        \path (0.south) 
       edge [draw = \colorGraphFilter, line width = \mmyArrowWidth] 
       (\nodeid.north east);}

    \foreach \pathid\dx\dy\nodeid in {{1/-1/-1/1},{1/0/-1/2},{1/1/-1/3},
    {2/-1/-1/1},{2/0/-1/2},{2/1/-1/3},{3/-1/-1/1},{3/0/-1/2},{3/1/-1/3}}%,{1/-1/-1/1},{1/0/-1/2},{1/1/-1/3}}
         {\path (\pathid) ++ (\dx*\mdistNodesLayerTwo,\dy*\mdistBetweenLayers)
              node (\pathid\nodeid)[fill = \colorNode,mnode, draw = \colorEdge] 
              {};
              \path (\pathid.south) 
       edge [ draw = \colorGraphFilter, line width = \mmyArrowWidth] (\pathid\nodeid.north);}

    \foreach \pathid\dx\dy\nodeid in {{11/-1/-1/1},{11/0/-1/2},{11/1/-1/3},{13/-1/-1/1},{13/0/-1/2},{13/1/-1/3},{21/-1/-1/1},{21/0/-1/2},{21/1/-1/3},
    {22/-1/-1/1},{22/0/-1/2},{22/1/-1/3},{23/-1/-1/1},{23/0/-1/2},{23/1/-1/3},{31/-1/-1/1},{31/0/-1/2},{31/1/-1/3},
    {32/-1/-1/1},{32/0/-1/2},{32/1/-1/3},{33/-1/-1/1},{33/0/-1/2},{33/1/-1/3}}%,{1/-1/-1/1},{1/0/-1/2},{1/1/-1/3}}
         {\path (\pathid) ++ (\dx*\mdistNodesLayerThree,\dy*\mdistBetweenLayers)
              node (\pathid\nodeid)[fill = \colorNode,mnode, draw = \colorEdge] 
              {};
               \path (\pathid.south) 
              edge [ draw = \colorGraphFilter, line width = \mmyArrowWidth] (\pathid\nodeid.north);}
  
	\foreach \pathid\dx\dy\nodeid in {{111/-1/-1/1},{111/0/-1/2},{111/1/-1/3},
	{131/-1/-1/1},{131/0/-1/2},{131/1/-1/3},{211/-1/-1/1},{211/0/-1/2},{211/1/-1/3},{213/-1/-1/1},{213/0/-1/2},{213/1/-1/3},
    {222/-1/-1/1},{222/0/-1/2},{222/1/-1/3},{223/-1/-1/1},{223/0/-1/2},{223/1/-1/3},
    {311/-1/-1/1},{311/0/-1/2},{311/1/-1/3},{313/-1/-1/1},{313/0/-1/2},{313/1/-1/3},{321/-1/-1/1},{321/0/-1/2},{321/1/-1/3},{322/-1/-1/1},{322/0/-1/2},{322/1/-1/3},{323/-1/-1/1},{323/0/-1/2},{323/1/-1/3},{331/-1/-1/1},{331/0/-1/2},{331/1/-1/3},{332/-1/-1/1},{332/0/-1/2},{332/1/-1/3},{333/-1/-1/1},{333/0/-1/2},{333/1/-1/3}}%,{1/-1/-1/1},{1/0/-1/2},{1/1/-1/3}}
         {\path (\pathid) ++ (\dx*\mdistNodesLayerFour,\dy*\mdistBetweenLayers)
              node (\pathid\nodeid)[fill = \colorNode,mnode, draw = \colorEdge] 
              {};
               \path (\pathid.south) 
              edge [ draw = \colorGraphFilter, line width = \mmyArrowWidth] (\pathid\nodeid.north);}
    \foreach \pathid\dx\dy\nodeid in {{1111/-1/-1/1},{1111/0/-1/2},{1111/1/-1/3},
    {2221/-1/-1/1},{2221/0/-1/2},{2221/1/-1/3},
    {3321/-1/-1/1},{3321/0/-1/2},{3321/1/-1/3}}%,{1/-1/-1/1},{1/0/-1/2},{1/1/-1/3}}
         {\path (\pathid) ++ (\dx*\mdistNodesLayerFour,\dy*\mdistBetweenLayers)
              node (\pathid\nodeid)[fill = \colorNode,mnode, draw = \colorEdge] 
              {};
               \path (\pathid.south) 
              edge [ draw = \colorGraphFilter, line width = \mmyArrowWidth] (\pathid\nodeid.north);}
\end{tikzpicture}} 
\caption{DD}
\end{subfigure}
\begin{subfigure}{0.3\textwidth}
\centering{\scriptsize \begin{tikzpicture}[x = 1*\unit, y = 1*\unit,scale=0.3, every node/.style={scale=1}, line/.style={
      draw,scale=1,
      -latex',
      shorten >=1pt
    },>=latex]
    
    %%%%%%%%%%%%%%%%%%%%%%%%%%%%%%%%%%%%%%%%%%%%%%%%%%%%%%%%%%%%%%%%%%%%%%%%%%%
    % Define anchor points
    % Origin point
	\node at (0, 0) (o) {};

	\path (o) 
		      node (0) [fill = \colorNode,mnode, draw = \colorEdge] {};
    % First layer nodes
    \foreach \dx\dy\nodeid in {{-1/-1/1},{0/-1/2},{1/-1/3}}
         {\path (0) ++ (\dx*\mdistNodesLayerOne,\dy*\mdistBetweenLayers)
              node (\nodeid)[fill = \colorNode,mnode, draw = \colorEdge] 
              {};
        \path (0.south) 
       edge [draw = \colorGraphFilter, line width = \mmyArrowWidth] 
       (\nodeid.north east);}

    \foreach \pathid\dx\dy\nodeid in {{1/-1/-1/1},{1/0/-1/2},{1/1/-1/3},
    {2/-1/-1/1},{2/0/-1/2},{2/1/-1/3},{3/-1/-1/1},{3/0/-1/2},{3/1/-1/3}}%,{1/-1/-1/1},{1/0/-1/2},{1/1/-1/3}}
         {\path (\pathid) ++ (\dx*\mdistNodesLayerTwo,\dy*\mdistBetweenLayers)
              node (\pathid\nodeid)[fill = \colorNode,mnode, draw = \colorEdge] 
              {};
              \path (\pathid.south) 
       edge [ draw = \colorGraphFilter, line width = \mmyArrowWidth] (\pathid\nodeid.north);}

    \foreach \pathid\dx\dy\nodeid in {{11/-1/-1/1},{11/0/-1/2},{11/1/-1/3},{13/-1/-1/1},{13/0/-1/2},{13/1/-1/3},{21/-1/-1/1},{21/0/-1/2},{21/1/-1/3},
    {22/-1/-1/1},{22/0/-1/2},{22/1/-1/3},{23/-1/-1/1},{23/0/-1/2},{23/1/-1/3},{31/-1/-1/1},{31/0/-1/2},{31/1/-1/3},
    {32/-1/-1/1},{32/0/-1/2},{32/1/-1/3},{33/-1/-1/1},{33/0/-1/2},{33/1/-1/3}}%,{1/-1/-1/1},{1/0/-1/2},{1/1/-1/3}}
         {\path (\pathid) ++ (\dx*\mdistNodesLayerThree,\dy*\mdistBetweenLayers)
              node (\pathid\nodeid)[fill = \colorNode,mnode, draw = \colorEdge] 
              {};
               \path (\pathid.south) 
              edge [ draw = \colorGraphFilter, line width = \mmyArrowWidth] (\pathid\nodeid.north);}
  
	\foreach \pathid\dx\dy\nodeid in {{111/-1/-1/1},{111/0/-1/2},{111/1/-1/3},
	{131/-1/-1/1},{131/0/-1/2},{131/1/-1/3},{211/-1/-1/1},{211/0/-1/2},{211/1/-1/3},{213/-1/-1/1},{213/0/-1/2},{213/1/-1/3},
    {221/-1/-1/1},{221/0/-1/2},{221/1/-1/3},{222/-1/-1/1},{222/0/-1/2},{222/1/-1/3},{223/-1/-1/1},{223/0/-1/2},{223/1/-1/3},
    {231/-1/-1/1},{231/0/-1/2},{231/1/-1/3},{232/-1/-1/1},{232/0/-1/2},{232/1/-1/3},{233/-1/-1/1},{233/0/-1/2},{233/1/-1/3},
    {311/-1/-1/1},{311/0/-1/2},{311/1/-1/3},{313/-1/-1/1},{313/0/-1/2},{313/1/-1/3},{321/-1/-1/1},{321/0/-1/2},{321/1/-1/3},{322/-1/-1/1},{322/0/-1/2},{322/1/-1/3},{323/-1/-1/1},{323/0/-1/2},{323/1/-1/3},{331/-1/-1/1},{331/0/-1/2},{331/1/-1/3},{332/-1/-1/1},{332/0/-1/2},{332/1/-1/3},{333/-1/-1/1},{333/0/-1/2},{333/1/-1/3}}%,{1/-1/-1/1},{1/0/-1/2},{1/1/-1/3}}
         {\path (\pathid) ++ (\dx*\mdistNodesLayerFour,\dy*\mdistBetweenLayers)
              node (\pathid\nodeid)[fill = \colorNode,mnode, draw = \colorEdge] 
              {};
               \path (\pathid.south) 
              edge [ draw = \colorGraphFilter, line width = \mmyArrowWidth] (\pathid\nodeid.north);}
    \foreach \pathid\dx\dy\nodeid in {{1111/-1/-1/1},{1111/0/-1/2},{1111/1/-1/3},
	%{1311/-1/-1/1},{1311/0/-1/2},{1311/1/-1/3},
	{2232/1/-1/3},{2233/-1/-1/1},{2233/0/-1/2},{3212/1/-1/3},{3213/-1/-1/1},{3213/0/-1/2}}%,{1/-1/-1/1},{1/0/-1/2},{1/1/-1/3}}
         {\path (\pathid) ++ (\dx*\mdistNodesLayerFour,\dy*\mdistBetweenLayers)
              node (\pathid\nodeid)[fill = \colorNode,mnode, draw = \colorEdge] 
              {};
               \path (\pathid.south) 
              edge [ draw = \colorGraphFilter, line width = \mmyArrowWidth] (\pathid\nodeid.north);}
\end{tikzpicture}} 
\caption{Protein}
\end{subfigure}
\begin{subfigure}{0.3\textwidth}
\centering{\scriptsize \begin{tikzpicture}[x = 1*\unit, y = 1*\unit,scale=0.3, every node/.style={scale=1}, line/.style={
      draw,scale=1,
      -latex',
      shorten >=1pt
    },>=latex]
    
    %%%%%%%%%%%%%%%%%%%%%%%%%%%%%%%%%%%%%%%%%%%%%%%%%%%%%%%%%%%%%%%%%%%%%%%%%%%
    % Define anchor points
    % Origin point
	\node at (0, 0) (o) {};

	\path (o) 
		      node (0) [fill = \colorNode,mnode, draw = \colorEdge] {};
    % First layer nodes
    \foreach \dx\dy\nodeid in {{-1/-1/1},{0/-1/2},{1/-1/3}}
         {\path (0) ++ (\dx*\mdistNodesLayerOne,\dy*\mdistBetweenLayers)
              node (\nodeid)[fill = \colorNode,mnode, draw = \colorEdge] 
              {};
        \path (0.south) 
       edge [draw = \colorGraphFilter, line width = \mmyArrowWidth] 
       (\nodeid.north east);}

    \foreach \pathid\dx\dy\nodeid in {{1/-1/-1/1},{1/0/-1/2},{1/1/-1/3},
    {2/-1/-1/1},{2/0/-1/2},{2/1/-1/3},{3/-1/-1/1},{3/0/-1/2},{3/1/-1/3}}%,{1/-1/-1/1},{1/0/-1/2},{1/1/-1/3}}
         {\path (\pathid) ++ (\dx*\mdistNodesLayerTwo,\dy*\mdistBetweenLayers)
              node (\pathid\nodeid)[fill = \colorNode,mnode, draw = \colorEdge] 
              {};
              \path (\pathid.south) 
       edge [ draw = \colorGraphFilter, line width = \mmyArrowWidth] (\pathid\nodeid.north);}

    \foreach \pathid\dx\dy\nodeid in {{11/-1/-1/1},{11/0/-1/2},{11/1/-1/3},{13/-1/-1/1},{13/0/-1/2},{13/1/-1/3},{21/-1/-1/1},{21/0/-1/2},{21/1/-1/3},
    {22/-1/-1/1},{22/0/-1/2},{22/1/-1/3},{23/-1/-1/1},{23/0/-1/2},{23/1/-1/3},{31/-1/-1/1},{31/0/-1/2},{31/1/-1/3},
    {32/-1/-1/1},{32/0/-1/2},{32/1/-1/3}}%,{1/-1/-1/1},{1/0/-1/2},{1/1/-1/3}}
         {\path (\pathid) ++ (\dx*\mdistNodesLayerThree,\dy*\mdistBetweenLayers)
              node (\pathid\nodeid)[fill = \colorNode,mnode, draw = \colorEdge] 
              {};
               \path (\pathid.south) 
              edge [ draw = \colorGraphFilter, line width = \mmyArrowWidth] (\pathid\nodeid.north);}
  
	\foreach \pathid\dx\dy\nodeid in {{111/-1/-1/1},{111/0/-1/2},{111/1/-1/3},
	{131/-1/-1/1},{131/0/-1/2},{131/1/-1/3},{211/-1/-1/1},{211/0/-1/2},{211/1/-1/3},{213/-1/-1/1},{213/0/-1/2},{213/1/-1/3},
    {223/-1/-1/1},{223/0/-1/2},{223/1/-1/3},
    {231/-1/-1/1},{231/0/-1/2},{231/1/-1/3},{232/-1/-1/1},{232/0/-1/2},{232/1/-1/3},{233/-1/-1/1},{233/0/-1/2},{233/1/-1/3},
    {311/-1/-1/1},{311/0/-1/2},{311/1/-1/3},{313/-1/-1/1},{313/0/-1/2},{313/1/-1/3},{321/-1/-1/1},{321/0/-1/2},{321/1/-1/3},{322/-1/-1/1},{322/0/-1/2},{322/1/-1/3},{323/-1/-1/1},{323/0/-1/2},{323/1/-1/3}}%,{1/-1/-1/1},{1/0/-1/2},{1/1/-1/3}}
         {\path (\pathid) ++ (\dx*\mdistNodesLayerFour,\dy*\mdistBetweenLayers)
              node (\pathid\nodeid)[fill = \colorNode,mnode, draw = \colorEdge] 
              {};
               \path (\pathid.south) 
              edge [ draw = \colorGraphFilter, line width = \mmyArrowWidth] (\pathid\nodeid.north);}
    \foreach \pathid\dx\dy\nodeid in {{1111/-1/-1/1},{1111/0/-1/2},{1111/1/-1/3},
    {2311/-1/-1/1},{2311/0/-1/2},{2311/1/-1/3},{232/-1/-1/1},
    {3111/-1/-1/1},{3111/0/-1/2},{3111/1/-1/3}}%,{1/-1/-1/1},{1/0/-1/2},{1/1/-1/3}}
         {\path (\pathid) ++ (\dx*\mdistNodesLayerFour,\dy*\mdistBetweenLayers)
              node (\pathid\nodeid)[fill = \colorNode,mnode, draw = \colorEdge] 
              {};
               \path (\pathid.south) 
              edge [ draw = \colorGraphFilter, line width = \mmyArrowWidth] (\pathid\nodeid.north);}
\end{tikzpicture}} 
\caption{Enzymes}
\end{subfigure}
    \caption{Illustration of the pGST applied to three protein datasets. Observe that after $l=5$ most features are pruned.}
     \label{fig:pGSTexap}
\vspace{0mm}
\end{figure*}
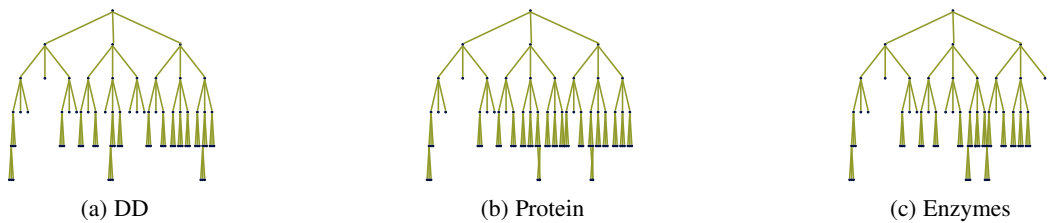

Fig.~\ref{fig:diffpool} shows the performance of DiffPool for these datasets as the number of GCN layers increases. The performance of DiffPool does not improve significantly for more than 5 GCN layers, which corroborates the insights obtained from the pGST patterns.
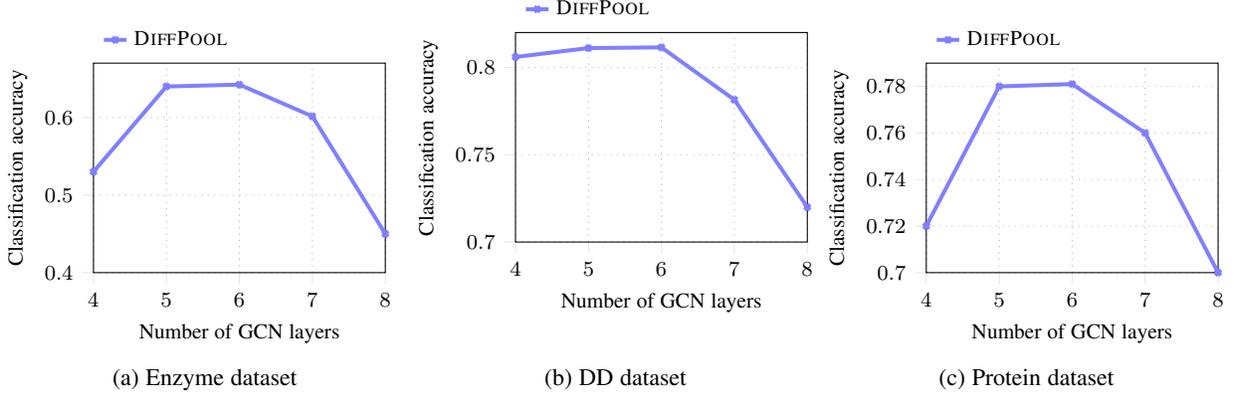
\begin{figure*}
\begin{subfigure}[b]{0.3\textwidth}
% This file was created by matplotlib2tikz v0.6.18.
\begin{tikzpicture}

\begin{axis}[width=0.3\mywidthbb,
height=0.4\myheightbb,
at={(0\mywidth,0\myheight)},
legend entries={\textsc{DiffPool}},
legend style={draw=white!80.0!black},
tick align=outside,
tick pos=left,
x grid style={white!69.01960784313725!black},
xlabel={Number of GCN layers},ticklabel style={font=\footnotesize},
label style={font=\footnotesize},
xmin=4, xmax=8,
%xtick={300,600,1000,1400},
%xticklabel={\pgfmathparse{\tick*100/10312}\pgfmathprintnumber{\pgfmathresult}\%},
%y grid style={white!69.01960784313725!black},
ylabel={Classification accuracy},
ymin=0.4, ymax=0.67,
legend columns=6,
xmajorgrids,
ymajorgrids,
grid style={dotted},
legend style={
	at={(0,1.015)}, 
	anchor=south west, legend cell align=left, align=left,
	draw=none
	% white!15!black
	, font=\footnotesize}]
\addplot [line width=\mylinewidth, DiffPool, mark=x, mark size=\markwidth, mark options={solid}]
table [row sep=\\]{%[       ]
4 0.53\\
5 0.6399\\
6 0.6423\\
7 0.6015\\
8 0.45\\
};
\end{axis}
\end{tikzpicture}
\begin{comment}
<class 'list'>: [[1555], [1555], [1522], [510], [121], [31], [5], [1]] [0.14095998 0.06357741 0.05881    0.01628447 0.00515366 0.00439167
 0.00154352 0.00146556]
<class 'list'>: [[1378], [1207], [1068], [732], [185], [57], [16], [1]] [0.05628109 0.05644822 0.04801273 0.02616668 0.00714588 0.005373
 0.00230241 0.00194359]
<class 'list'>: [[780], [766], [661], [625], [493], [186], [57], [5]][0.02629638 0.0265348  0.02614474 0.01967835 0.01586032 0.00725818
 0.00340819 0.00180745]
\end{comment}
\caption{Enzyme dataset}
\end{subfigure}%~\hspace{-0.5cm}
\begin{subfigure}[b]{0.3\textwidth}
% This file was created by matplotlib2tikz v0.6.18.
\begin{tikzpicture}

\begin{axis}[width=0.3\mywidthbb,
height=0.4\myheightbb,
at={(0\mywidth,0\myheight)},
legend entries={\textsc{DiffPool}},
legend style={draw=white!80.0!black},
tick align=outside,
tick pos=left,
x grid style={white!69.01960784313725!black},
xlabel={Number of GCN layers},ticklabel style={font=\footnotesize},
label style={font=\footnotesize},
xmin=4, xmax=8,
%xtick={300,600,1000,1400},
%xticklabel={\pgfmathparse{\tick*100/10312}\pgfmathprintnumber{\pgfmathresult}\%},
%y grid style={white!69.01960784313725!black},
ylabel={Classification accuracy},
ymin=0.70, ymax=0.82,
legend columns=6,
xmajorgrids,
ymajorgrids,
grid style={dotted},
legend style={
	at={(0,1.015)}, 
	anchor=south west, legend cell align=left, align=left,
	draw=none
	% white!15!black
	, font=\footnotesize}]
\addplot [line width=\mylinewidth, DiffPool, mark=x, mark size=\markwidth, mark options={solid}]
table [row sep=\\]{%[       ]
4 0.806\\
5 0.8111\\
6 0.8115\\
7 0.7815\\
8 0.72\\
};
\end{axis}
\end{tikzpicture}
\begin{comment}
<class 'list'>: [[1555], [1555], [1522], [510], [121], [31], [5], [1]] [0.14095998 0.06357741 0.05881    0.01628447 0.00515366 0.00439167
 0.00154352 0.00146556]
<class 'list'>: [[1378], [1207], [1068], [732], [185], [57], [16], [1]] [0.05628109 0.05644822 0.04801273 0.02616668 0.00714588 0.005373
 0.00230241 0.00194359]
<class 'list'>: [[780], [766], [661], [625], [493], [186], [57], [5]][0.02629638 0.0265348  0.02614474 0.01967835 0.01586032 0.00725818
 0.00340819 0.00180745]
\end{comment}
\caption{DD dataset}
\end{subfigure}%~\hspace{-0.5cm}
\begin{subfigure}[b]{0.3\textwidth}
% This file was created by matplotlib2tikz v0.6.18.
\begin{tikzpicture}

\begin{axis}[width=0.3\mywidthbb,
height=0.4\myheightbb,
at={(0\mywidth,0\myheight)},
legend entries={\textsc{DiffPool}},
legend style={draw=white!80.0!black},
tick align=outside,
tick pos=left,
x grid style={white!69.01960784313725!black},
xlabel={Number of GCN layers},ticklabel style={font=\footnotesize},
label style={font=\footnotesize},
xmin=4, xmax=8,
%xtick={300,600,1000,1400},
%xticklabel={\pgfmathparse{\tick*100/10312}\pgfmathprintnumber{\pgfmathresult}\%},
%y grid style={white!69.01960784313725!black},
ylabel={Classification accuracy},
ymin=0.70, ymax=0.79,
legend columns=6,
xmajorgrids,
ymajorgrids,
grid style={dotted},
legend style={
	at={(0,1.015)}, 
	anchor=south west, legend cell align=left, align=left,
	draw=none
	% white!15!black
	, font=\footnotesize}]
\addplot [line width=\mylinewidth, DiffPool, mark=x, mark size=\markwidth, mark options={solid}]
table [row sep=\\]{%[       ]
4 0.72\\
5 0.78\\
6 0.7810\\
7 0.76\\
8 0.70\\
};
\end{axis}
\end{tikzpicture}
\begin{comment}
<class 'list'>: [[1555], [1555], [1522], [510], [121], [31], [5], [1]] [0.14095998 0.06357741 0.05881    0.01628447 0.00515366 0.00439167
 0.00154352 0.00146556]
<class 'list'>: [[1378], [1207], [1068], [732], [185], [57], [16], [1]] [0.05628109 0.05644822 0.04801273 0.02616668 0.00714588 0.005373
 0.00230241 0.00194359]
<class 'list'>: [[780], [766], [661], [625], [493], [186], [57], [5]][0.02629638 0.0265348  0.02614474 0.01967835 0.01586032 0.00725818
 0.00340819 0.00180745]
\end{comment}\caption{Protein dataset}\end{subfigure}
    \caption{Performance of DiffPool for different number of GCN layers.  }
    \label{fig:diffpool}
\end{figure*}

\vspace{1mm}

\section{Conclusions}
\vspace{1mm}
This paper developed a novel approach to pruning the graph scattering transform. The proposed pGST relies on a graph-spectrum-based data-adaptive criterion to prune less informative features on-the-fly, and effectively reduce the computational complexity of GSTs. Stability of pGST is established in the presence of perturbations in the input or the network structure.   Sensitivity analysis of pGST reveals that the transform is more sensitive to noise that is localized in the graph spectrum relative to noise that is uniformly spread over the spectrum, which is appealing since pGST can detect transient changes in the graph spectral domain. The novel pGST extracts efficient and stable features of graph data. Experiments demonstrated i) the performance gains of pGSTs relative to GSTs; ii) that pGST is competitive in a variety of graph classification tasks; and (iii) graph data from different domains exhibit unique pruned scattering patterns, which calls for adaptive network architectures. %Future research will pursue the design of optimal pGSTs based on principal component graph filter banks to endow the pGST with enchanced representation power~\cite{tsatsanis1995principal}
\section{Proofs}
\subsection{Proof of Theorem 1}
By its definition, the objective in \eqref{eq:criterion} can be rewritten as 
\begin{align}
    &\sum^{J}_{j=1}
\left(\sum_{n=1}^N\left(\widehat{{h}}_j(\lambda_n)^2-\tau\right)[\widehat{\mathbf{z}}_{(p)}]^2_n\right) {f_j}\nonumber\\=&\sum^{J}_{j=1}\widehat{\mathbf{z}}_{(p)}\transpose\left(\diag{(\widehat{h}_j(\bm{\lambda}))}^2-\tau\identitymat\right)\widehat{\mathbf{z}}_{(p)} 
    {f_j} \;.\nonumber\end{align}
By introducing the scalars $\alpha_j\define \widehat{\mathbf{z}}_{(p)}\transpose(\diag{(\widehat{h}_j(\bm{\lambda}))}^2-\tau\identitymat)\widehat{\mathbf{z}}_{(p)}$ for $j=1,\ldots,J$, \eqref{eq:criterion} can be rewritten as
\begin{align}
\label{eq:criterionsum}
    \max_{f_{j}}&~~~~~\sum^{J}_{j=1}\alpha_j
    {f_j}\\
    \text{s. t.}&~~~~~f_j \in \{0,1\},~~j=1,\ldots,{J}\nonumber.
\end{align}
The optimization problem in \eqref{eq:criterionsum} is nonconvex  since $f_j$ is a discrete variable. However, maximizing the sum in \eqref{eq:criterionsum} amounts to setting $f_j=1$ for the positive $\alpha_j$ over $j$. Such an approach leads to the optimal pruning assignment variables
\begin{align}
        f_{j}^* =\begin{dcases*} 1
   & if  $\alpha_j>0,$ \\[0.5ex]
0
   & if $\alpha_j<0$.
\end{dcases*},~j=1,\ldots,{J}
\end{align}
The rest of the proof focuses on rewriting $\alpha_j$ as
\begin{align}
    \alpha_j=&\widehat{\mathbf{z}}_{(p)}\transpose(\diag{(\widehat{h}_j(\bm{\lambda}))}^2-\tau\identitymat)\widehat{\mathbf{z}}_{(p)}\\=&
    \|{\diag{(\widehat{h}_j(\bm{\lambda}))}\widehat{\mathbf{z}}\scatternot{{p}}}\|^2
-\tau\|\widehat{\mathbf{z}}_{(p)}\|^2
\end{align}
Furthermore, since matrix $\mathbf{V}$ is orthogonal, it holds that 
$\|\widehat{\mathbf{z}}_{(p)}\|^2=\|\mathbf{V}\transpose\mathbf{z}_{(p)}\|^2=\|\mathbf{z}_{(p)}\|^2$ from which it follows that  

\begin{align}
\label{eq:gftdecomp}
    \|{\diag{(\widehat{h}_j(\bm{\lambda}))}\widehat{\mathbf{z}}\scatternot{{p}}}\|^2=&\|{{h}_j(\mathbf{S})\mathbf{z}\scatternot{{p}}}\|^2\\
    =&\|\nonlinearity{{h}_j(\mathbf{S})\mathbf{z}\scatternot{{p}}}\|^2\nonumber\\
    =&\|\mathbf{z}_{(p, j)}\|^2
\end{align}
where the second line follows because $\sigma(\cdot)$ is applied elementwise, and does not change the norm.

\subsection{Proof of Lemma 1}
By the definition in \eqref{eq:defscatvect}, it holds that 
\begin{align}
    \|\bm{\Phi}(\mathbf{x})-\bm\Phi (\tilde{\mathbf{x}})\|^2
    =\sum_{\ell=0}^{L} \sum_{p^{(\ell)}\in\mathcal{P}^{(\ell)}}
    | \phi_{(p^{(\ell)})}-
    \psummarizedscatterfeat_{(p^{(\ell)})}|^2
    \label{eq:scattransdefapl}
\end{align} 
which suggests bounding each summand in \eqref{eq:scattransdefapl}  as
\begin{align}
    | \phi_{(p^{(\ell)})}-\psummarizedscatterfeat_{(p^{(\ell)})}|=&| U(\mathbf{z}_{(p^{(\ell)})}- U(\tilde{\mathbf{z}}_{(p^{(\ell)})}    )|\\
    \le&\| U\| \|\mathbf{z}_{(p^{(\ell)})}-\tilde{\mathbf{z}}_{(p^{(\ell)})}\|\label{eq:tr1}\end{align}
where \eqref{eq:tr1} follows since the norm is a sub-multiplicative operator. Next, we will show the recursive bound
\begin{align}
    &\|\mathbf{z}_{(p^{(\ell)})}-\tilde{\mathbf{z}}_{(p^{(\ell)})}\|\nonumber\\=& \|\nonlinearity{
    h_{j^{(\ell)}}(\mathbf{S})
    \mathbf{z}_{(p\layernot{\ell-1}})}
    -\nonlinearity{
    h_{j^{(\ell)}}(\mathbf{S})
    \tilde{\mathbf{z}}_{(p\layernot{\ell-1})}}\|\\
    \le& \|\nonlinearity{}\|\| 
    h_{j^{(\ell)}}(\mathbf{S})
    \mathbf{z}_{(p\layernot{\ell-1})}
    -
    h_{j^{(\ell)}}(\mathbf{S})
    \tilde{\mathbf{z}}_{(p\layernot{\ell-1})}\|\label{eq:tr2}\\
    \le& \| 
    h_{j^{(\ell)}}(\mathbf{S})
    \mathbf{z}_{(p\layernot{\ell-1})}
    -
    h_{j^{(\ell)}}(\mathbf{S})
    \tilde{\mathbf{z}}_{(p\layernot{\ell-1})}\|\label{eq:nonexp}\\
    \le&\|h_{j^{(\ell)}}(\mathbf{S})\|\|
    \mathbf{z}_{(p\layernot{\ell-1})}
    -
    \tilde{\mathbf{z}}_{(p\layernot{\ell-1})}\|\label{eq:tr3}
\end{align}
where  \eqref{eq:tr2}, \eqref{eq:tr3} hold because the norm is a sub-multiplicative operator, and \eqref{eq:nonexp} follows since the nonlinearity is nonexpansive, i.e. $ \|\nonlinearity{}\|<1$.
Hence, by applying \eqref{eq:tr3} $\ell-1$ times, the following condition holds
\begin{align}
\label{eq:tr4}
    \|\mathbf{z}_{(p^{(\ell)})}-\tilde{\mathbf{z}}_{(p^{(\ell)})}\|\le\|h_{j^{(\ell)}}(\mathbf{S})\|\|h_{j\layernot{\ell-1}}(\mathbf{S})\|\cdots\|h_{j^{(1)}}(\mathbf{S})\| \|
    \mathbf{x}
    -
    \tilde{\mathbf{x}}\|
\end{align}
and by further applying the frame bound and \eqref{eq:simplvecpert}, we deduce that
\begin{align}
     \|\mathbf{z}_{(p^{(\ell)})}-\tilde{\mathbf{z}}_{(p^{(\ell)})}\|\le\framebound^\ell\|{\bm{\delta}}\| \label{eq:reqrel}
\end{align}
Combining \eqref{eq:tr1}, \eqref{eq:reqrel} and the average operator property $\| U\|=1$ it holds that
\begin{align}
    | \phi_{(p^{(\ell)})}-\psummarizedscatterfeat_{(p^{(\ell)})}|\le\framebound^\ell\|{\bm{\delta}}\| \label{eq:pboundres}
\end{align}

By applying the bound \eqref{eq:pboundres} for all entries in the right hand side of \eqref{eq:scattransdefapl} it follows that 
\begin{align}
\|\bm{\Phi}(\mathbf{x})-\bm\Phi (\tilde{\mathbf{x}})\|^2
\le \sum_{\ell=0}^{L} \sum_{p^{(\ell)}\in\mathcal{P}^{(\ell)}}
\framebound^{2l}\|{\bm{\delta}}\|^2\label{eq:prefincond}
\end{align}
By factoring out $\|{\bm{\delta}}\|$ and observing that the sum in the right side of \eqref{eq:prefincond} does not depend on the path index $p$ it follows that
\begin{align}
    \|\bm{\Phi}(\mathbf{x})-\bm\Phi (\tilde{\mathbf{x}})\|^2\le
    \left(
    \sum_{\ell=0}^{L}|\mathcal{P}^{(\ell)}|\framebound^{2\ell}
    \right)\|{\bm{\delta}}\|^2
\end{align}
Finally, since the cardinality of the paths at $\ell$ is $|\mathcal{P}^{(\ell)}|=
{J}^\ell$ and $\sum_{\ell=0}^{L}(\framebound^{2}{J})^\ell=
\left((\framebound^{2}{J})^L\right)/
\left(\framebound^{2}{J}-1\right)$ it holds 
\begin{align}
        \|\bm{\Phi}(\mathbf{x})-\bm\Phi (\tilde{\mathbf{x}})\|\le
   \sqrt{\frac{(\framebound^{2}{J})^L}
{\framebound^{2}{J}-1}}\|{\bm{\delta}}\|
\end{align}

\subsection{Proof of Lemma 2}
We will prove the case for $\ell=0$, where $\mathbf{z}_{p^{(0)}}=\mathbf{x}$, since the same proof holds for any $\ell$. First, we adapt \eqref{eq:thassump} to the following
\begin{align}
\big|\|{h}_j(\mathbf{S})\mathbf{x}\|^2
-\tau\|\mathbf{x}\|^2\big|> \|{h}_j(\mathbf{S}){\bm{\delta}}\|^2+
    \tau\big|
    \|\mathbf{x}\|^2-\|\tilde{\mathbf{x}}\|^2\big|. \label{eq:assump1}
\end{align}
The proof will examine two cases and will follow by contradiction. For the first case, consider that branch $j$ is pruned in 
$\bm{\Psi}({\mathbf{x}})$ and not pruned in $\bm{\Psi}({\tilde{\mathbf{x}}})$, i.e. $(j)\notin\mathcal{T}$ and $(j)\in\tilde{\mathcal{T}}$. By applying \eqref{eq:pwcriterion} for $\mathbf{z}_{(j)}=\nonlinearity{{h}_j(\mathbf{S})\mathbf{x}}
$ there exists $C\ge0$ such that
\begin{align}
 \frac{\|{h}_j(\mathbf{S})\mathbf{x}\|^2}
{\|\mathbf{x}\|^2}\le& \tau -C\label{eq:c0}\\
\|{h}_j(\mathbf{S})\mathbf{x}\|^2
\le& \tau\|\mathbf{x}\|^2 -C\|\mathbf{x}\|^2
\label{eq:c1}
\end{align}
Furthermore, from \eqref{eq:pwcriterion} it holds for $\tilde{\mathbf{z}}_{(j)}=\nonlinearity{{h}_j(\mathbf{S})\tilde{\mathbf{x}}}
$ that
\begin{align}
 \frac{\|{h}_j(\mathbf{S})\tilde{\mathbf{x}}\|^2}
{\|\tilde{\mathbf{x}}\|^2}>\tau\label{eq:c2}
\end{align}
By applying \eqref{eq:simplvecpert} to \eqref{eq:c2}, and using the triangular inequality it follows that
\begin{align}
  \|{h}_j(\mathbf{S})\mathbf{x}\|^2+
  \|{h}_j(\mathbf{S}){\bm{\delta}}\|^2\ge\tau\|\tilde{\mathbf{x}}\|^2
\end{align}
Next, by applying \eqref{eq:c1} it holds that
\begin{align}
 \tau\|\mathbf{x}\|^2 -C\|\mathbf{x}\|^2+
  \|{h}_j(\mathbf{S}){\bm{\delta}}\|^2\ge&\tau\|\tilde{\mathbf{x}}\|^2\\
 \tau(\|\mathbf{x}\|^2-\|\tilde{\mathbf{x}}\|^2)+
  \|{h}_j(\mathbf{S}){\bm{\delta}}\|^2\ge& C\|\mathbf{x}\|^2.
\label{eq:pfeatres}
\end{align}
Next, by utilizing \eqref{eq:assump1} and the absolute value property $|a|\ge a$ to upper-bound the left side of \eqref{eq:pfeatres} it follows that
\begin{align}
\|{h}_j(\mathbf{S})\mathbf{x}\|^2
-\tau\|\mathbf{x}\|^2>& C\|\mathbf{x}\|^2.
\label{eq:pfeatres1}
\end{align}
Finally, by applying \eqref{eq:c1} the following is obtained
\begin{align}
0> 2C\|\mathbf{x}\|^2
\label{eq:pfeatres2}
\end{align}
which implies that $C<0$. However, this contradicts \eqref{eq:c0} since $C\ge 0$.
Following a similar argument we can complete the proof for the other case.

\subsection{Proof of Lemma 3}
The proof of Lemma 3 requires the following result. 

\begin{lemma}\label{th:stab1}
Consider the shift matrix $\mathbf{S}$ and the perturbed matrix $\tilde{\mathbf{S}}$, such that  $d(\mathbf{S},\tilde{\mathbf{S}})\le\mathscr{E}/2$. Further, for  $\bm{\Delta}\in \mathcal{D}$ with eigendecomposition $\bm{\Delta}=\mathbf{U}\diag{(\mathbf{d})}\mathbf{U}\transpose$ it holds that  $\|\bm{\Delta}/d_{\textrm{max}}-\mathbf{I}\|\le\mathscr{E}$, where $d_{\textrm{max}}$ is the eigenvalue of $\bm{\Delta}$ with maximum absolute value. At layer $\ell$ consider the scattering feature indexed by $p^{(\ell)}$ of the original transform ${\bm{\Psi}}({\mathbf{x}})$ as ${\mathbf{z}}_{(p^{(\ell)})}$ and of the perturbed transform $\tilde{\bm{\Psi}}({\mathbf{x}})$ as $\tilde{\mathbf{z}}_{(p^{(\ell)})}$. Suppose also that graph filter bank forms a frame with bound $B$, and $h(\lambda)$ satisfies the integral Lipschitz constraint $|\lambda h'(\lambda)|\le C_0$. It then holds that
\begin{align}
\label{eq:lem4cond}
    \|\tilde{\mathbf{z}}_{(p^{(\ell)}}-\mathbf{z}_{(p^{(\ell)})}\|\le \ell\mathscr{E} C_0 B^{\ell-1}\|\mathbf{x}\| .
\vspace{-4mm}
\end{align}
\end{lemma}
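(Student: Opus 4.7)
The plan is to induct on the layer index $\ell$, exploiting the recursive scattering structure $\mathbf{z}_{(p^{(\ell+1)})}=\sigma(h_j(\mathbf{S})\mathbf{z}_{(p^{(\ell)})})$ and its perturbed counterpart $\tilde{\mathbf{z}}_{(p^{(\ell+1)})}=\sigma(h_j(\tilde{\mathbf{S}})\tilde{\mathbf{z}}_{(p^{(\ell)})})$. The base case $\ell=0$ holds trivially since $\mathbf{z}_{(p^{(0)})}=\tilde{\mathbf{z}}_{(p^{(0)})}=\mathbf{x}$, so both sides of \eqref{eq:lem4cond} vanish.

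For the inductive step, I would use the nonexpansiveness of $\sigma(\cdot)$ (as in the proof of Lemma 1) and then split via the triangle inequality by adding and subtracting $h_j(\tilde{\mathbf{S}})\mathbf{z}_{(p^{(\ell)})}$, giving
\begin{align*}
\|\tilde{\mathbf{z}}_{(p^{(\ell+1)})}-\mathbf{z}_{(p^{(\ell+1)})}\|
&\le \|h_j(\tilde{\mathbf{S}})\|\,\|\tilde{\mathbf{z}}_{(p^{(\ell)})}-\mathbf{z}_{(p^{(\ell)})}\| \\
&\quad + \|h_j(\tilde{\mathbf{S}})-h_j(\mathbf{S})\|\,\|\mathbf{z}_{(p^{(\ell)})}\|.
\end{align*}
The first term is handled by the inductive hypothesis together with the frame property, which yields $\|h_j(\tilde{\mathbf{S}})\|\le B$, contributing $B\cdot \ell\mathscr{E}C_0 B^{\ell-1}\|\mathbf{x}\|=\ell\mathscr{E}C_0 B^{\ell}\|\mathbf{x}\|$. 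For the scattering norm I will use the frame bound iteratively (nonexpansive $\sigma$ plus $\|h_j(\mathbf{S})\|\le B$) to obtain $\|\mathbf{z}_{(p^{(\ell)})}\|\le B^\ell\|\mathbf{x}\|$.

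The crux of the argument, and what I expect to be the main obstacle, is controlling the filter mismatch $\|h_j(\tilde{\mathbf{S}})-h_j(\mathbf{S})\|$. Here the two structural hypotheses are used in concert: $d(\mathbf{S},\tilde{\mathbf{S}})\le\mathscr{E}/2$ guarantees a perturbation $\bm{\Delta}\in\mathcal{D}$ with $\|\bm{\Delta}\|\le\mathscr{E}/2$, and $\|\bm{\Delta}/d_{\max}-\mathbf{I}\|\le\mathscr{E}$ rules out large eigenvalue dispersion so that $\bm{\Delta}$ is essentially a (signed) multiple of the identity. Together with the integral Lipschitz bound $|\lambda h'(\lambda)|\le C_0$, these are exactly the hypotheses of Proposition~3 in~\cite{gama2019stability}, which I will invoke directly to conclude
\[
\|h_j(\tilde{\mathbf{S}})-h_j(\mathbf{S})\|\le \mathscr{E}C_0.
\]
Plugging this into the second term gives a contribution of $\mathscr{E}C_0\cdot B^{\ell}\|\mathbf{x}\|$.

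Combining both contributions closes the induction:
\[
\|\tilde{\mathbf{z}}_{(p^{(\ell+1)})}-\mathbf{z}_{(p^{(\ell+1)})}\|\le (\ell+1)\mathscr{E}C_0 B^{\ell}\|\mathbf{x}\|,
\]
which is exactly the bound \eqref{eq:lem4cond} at level $\ell+1$. The linear $\ell$ factor on the right-hand side of \eqref{eq:lem4cond} is thus seen to arise from the cumulative per-layer injection of the topology perturbation, whereas the $B^{\ell-1}$ factor reflects the amplification by $\ell-1$ previously applied filters; this is the structural reason why the bound in Theorem~4 is looser than the input-perturbation bound in Theorem~2 by precisely one factor of $\ell$, as remarked after Theorem~4.
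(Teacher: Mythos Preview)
Your proposal is correct and follows essentially the same route as the paper: set up a one-step recursion by inserting a cross term, bound the filter mismatch $\|h_j(\mathbf{S})-h_j(\tilde{\mathbf{S}})\|\le\mathscr{E}C_0$ via the stability result in \cite{gama2019stability} (the paper cites it as Proposition~2 rather than Proposition~3), bound $\|\mathbf{z}_{(p^{(\ell)})}\|\le B^\ell\|\mathbf{x}\|$ from the frame property, and unroll the recursion. The only cosmetic difference is that the paper inserts $h_{j^{(\ell)}}(\mathbf{S})\tilde{\mathbf{z}}_{(p^{(\ell-1)})}$ rather than your $h_j(\tilde{\mathbf{S}})\mathbf{z}_{(p^{(\ell)})}$, a symmetric choice that yields the identical recursion and final bound.
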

\begin{proof}
First, we add and substract on \eqref{eq:lem4cond} the term $\sigma(h_{j^{(\ell)}}(\mathbf{S})\sigma(h_{j\layernot{\ell-1}}(\tilde{\mathbf{S}})\cdots\sigma(h_{j^{(1)}}(\tilde{\mathbf{S}})x)\cdots)$ 
\begin{align}
    &\|\tilde{\mathbf{z}}_{(p^{(\ell)}}-\mathbf{z}_{(p^{(\ell)})}\|\\
     &\le\|\sigma(h_{j^{(\ell)}}(\mathbf{S})\sigma(h_{j\layernot{\ell-1}}(\tilde{\mathbf{S}})\cdots\sigma(h_{j^{(1)}}(\tilde{\mathbf{S}})\mathbf{x})\cdots))\nonumber\\&-\sigma(h_{j^{(\ell)}}(\mathbf{S})\sigma(h_{j\layernot{\ell-1}}({\mathbf{S}})\cdots\sigma(h_{j^{(1)}}({\mathbf{S}})\mathbf{x})\cdots))
    \nonumber\\
    &-\sigma(h_{j^{(\ell)}}(\mathbf{S})\sigma(h_{j\layernot{\ell-1}}(\tilde{\mathbf{S}})\cdots\sigma(h_{j^{(1)}}(\tilde{\mathbf{S}})\mathbf{x})\cdots))\nonumber\\&+\sigma(h_{j^{(\ell)}}(\tilde{\mathbf{S}})\sigma(h_{j\layernot{\ell-1}}(\tilde{\mathbf{S}})\cdots\sigma(h_{j^{(1)}}(\tilde{\mathbf{S}})\mathbf{x})\cdots))\|\\
    &\le\|\sigma(h_{j^{(\ell)}}(\mathbf{S})\sigma(h_{j\layernot{\ell-1}}(\tilde{\mathbf{S}})\cdots\sigma(h_{j^{(1)}}(\tilde{\mathbf{S}})\mathbf{x})\cdots))\nonumber\\&-\sigma(h_{j^{(\ell)}}(\mathbf{S})\sigma(h_{j\layernot{\ell-1}}({\mathbf{S}})\cdots\sigma(h_{j^{(1)}}({\mathbf{S}})\mathbf{x})\cdots))\| 
    \nonumber\\
    &+\|\sigma(h_{j^{(\ell)}}(\mathbf{S})\sigma(h_{j\layernot{\ell-1}}(\tilde{\mathbf{S}})\cdots\sigma(h_{j^{(1)}}(\tilde{\mathbf{S}})\mathbf{x})\cdots))\nonumber\\&-\sigma(h_{j^{(\ell)}}(\tilde{\mathbf{S}})\sigma(h_{j\layernot{\ell-1}}(\tilde{\mathbf{S}})\cdots\sigma(h_{j^{(1)}}(\tilde{\mathbf{S}})\mathbf{x})\cdots))\|\label{eq:ft2}
\end{align}
By a similar argument as in~\eqref{eq:tr4}, it holds for the first summand in \eqref{eq:ft2} that 
\begin{align}
    \|&\sigma(h_{j^{(\ell)}}(\mathbf{S})\sigma(h_{j\layernot{\ell-1}}(\tilde{\mathbf{S}})\cdots\sigma(h_{j^{(1)}}(\tilde{\mathbf{S}}) \mathbf{x})\nonumber\\&-\sigma(h_{j^{(\ell)}}(\mathbf{S})\sigma(h_{j\layernot{\ell-1}}({\mathbf{S}})\cdots\sigma(h_{j^{(1)}}({\mathbf{S}})\mathbf{x})\|\nonumber\\
    &\le B \|\tilde{\mathbf{z}}_{(p^{(\ell-1)}}-\mathbf{z}_{(p^{(\ell-1)})}\|\label{eq:recc1}
\end{align}
For the second summand in~\eqref{eq:ft2}, we consider the following
\begin{align}
\|\mathbf{z}_{(p^{(\ell)})}\|&\le\|h_{j^{(\ell)}}(\mathbf{S})\|\|h_{j\layernot{\ell-1}}(\mathbf{S})\|\cdots\|h_{j^{(1)}}(\mathbf{S})\| \|
    \mathbf{x}\|    \nonumber\\
    &\le B^{\ell}\|\mathbf{x}\|\label{eq:totcond}
\end{align}
which can be derived following~\eqref{eq:tr4}. Also by utilizing the result of Proposition 2 in \cite{gama2019stability} it can be shown that 
\begin{align}
    \|h_{j}({\mathbf{S}})-h_{j}(\tilde{\mathbf{S}})\|\le \mathscr{E} C_0.\label{eq:loccond}
\end{align}
where $C_0$ is the integral Lipschitz constant $|\lambda h'(\lambda)|\le C_0$.
By combining \eqref{eq:recc1}-\eqref{eq:loccond} we arrive to the following recursive condition
\begin{align}
    \|\tilde{\mathbf{z}}_{(p^{(\ell)}}-\mathbf{z}_{(p^{(\ell)})}\|\le\|\tilde{\mathbf{z}}_{(p^{(\ell-1)}}-\mathbf{z}_{(p^{(\ell-1)}}\|+(\ell-1)\mathscr{E} C_0 B^{\ell-1}\|\mathbf{x}\|\nonumber
\end{align}
which can be solved to arrive at \eqref{eq:lem4cond}.
%\Note{SC: how can the  recursive condition lead to (51)? do we assume $(\ell-1) B^{\ell-1} + (\ell-2) B^{\ell-2} + (\ell-3) B^{\ell-3} .. < \ell B^{\ell-1}$? }
\end{proof}

Now that Lemma 4 is established we proceed with the proof for Lemma 3 follows. The proof will examine two cases and will follow by contradiction. For the first case, consider that branch $j$ is pruned in 
$\bm{\Psi}({\mathbf{x}})$ and not pruned in $\tilde{\bm{\Psi}}({\mathbf{x}})$, i.e. $(j)\notin\mathcal{T}$ and $(j)\in\tilde{\mathcal{T}}$. By applying \eqref{eq:pwcriterion} for $\mathbf{z}_{(p,j)}=\nonlinearity{{h}_j(\mathbf{S})\mathbf{z}_{(p)}}
$ there exists $C\ge0$ such that
\begin{align}
 \frac{\|{h}_j(\mathbf{S})\mathbf{z}_{(p)}\|^2}
{\|\mathbf{z}_{(p)}\|^2}\le& \tau -C\label{eq:c0alt}\\
\|{h}_j(\mathbf{S})\mathbf{z}_{(p)}\|^2
\le& \tau\|\mathbf{z}_{(p)}\|^2 -C\|\mathbf{z}_{(p)}\|^2
\label{eq:c1alt}
\end{align}
Furthermore, from \eqref{eq:pwcriterion} it holds for $\tilde{\mathbf{z}}_{(p,j)}=\nonlinearity{{h}_j(\mathbf{S})\tilde{\mathbf{z}}_{(p)}}
$ that
\begin{align}
 \frac{\|{h}_j(\tilde{\mathbf{S}})\tilde{\mathbf{z}}_{(p)}\|^2}
{\|\tilde{\mathbf{z}}_{(p)}\|^2}>\tau\label{eq:c2alt}
\end{align}
By using the triangle inequality to \eqref{eq:c2alt}, we obtain 
\begin{align}
  \|{h}_j(\mathbf{S})\mathbf{z}_{(p)}\|^2+
  \|{h}_j(\tilde{\mathbf{S}})\tilde{\mathbf{z}}_{(p)}-{h}_j(\mathbf{S})\mathbf{z}_{(p)}\|^2\ge\tau\|\tilde{\mathbf{z}}_{(p)}\|^2
\end{align}
and upon applying \eqref{eq:c1alt}, we arrive at 
\begin{align}
 \tau(\|\mathbf{z}_{(p)}\|^2-\|\tilde{\mathbf{z}}_{(p)}\|^2)+
  \|{h}_j(\tilde{\mathbf{S}})\tilde{\mathbf{z}}_{(p)}-{h}_j(\mathbf{S})\mathbf{z}_{(p)}\|^2\ge& C\|\mathbf{z}_{(p)}\|^2.
\label{eq:pfeatresalt}
\end{align}
Next, we leverage Lemma 4 and the absolute value property $|a|\ge a$ to upper-bound the left side of \eqref{eq:pfeatresalt} as
\begin{align}
\tau\|\bm{\delta}_{(p)}\|^2+(\ell\mathscr{E} C_0 B^{\ell-1}\|\mathbf{x}\|)^2>C\|\mathbf{z}_{(p)}\|^2.
\label{eq:pfeatres1alt}
\end{align}
By applying \eqref{eq:thassump1} in Lemma 3, we deduce that 
\begin{align}
\|{h}_j(\mathbf{S})\mathbf{z}_{p}\|^2
-\tau\|\mathbf{z}_{p}\|^2> C\|\mathbf{z}_{(p)}\|^2.
\label{eq:pfeatres0alt}
\end{align}
Finally, by applying \eqref{eq:c1alt}, we find 
\begin{align}
0> 2C\|\mathbf{z}_{(p)}\|^2
\label{eq:pfeatres2alt}
\end{align}
which implies that $C<0$. However, this contradicts \eqref{eq:c0alt} since $C\ge 0$.
Following a symmetric argument, we can complete the proof for the other case too.

\bibliographystyle{IEEEtran}
\bibliography{my_bibliography}
\end{document}